\newcommand{\arrSet}{\mathbbm{A}}
\newcommand{\compSet}{\mathbbm{C}}
\newcommand{\indicator}[1]{\mathbbm{1}\{#1\}}
\newcommand{\Yarr}{Y_A}
\newcommand{\Ycomp}{Y_C}
\newcommand{\Ep}{\mathbbm{E}}
\newcommand{\abs}[1]{\lvert #1\rvert}
\newtheorem{lemma}{Lemma}[section]
\newtheorem{corollary}{Corollary}[section]
\newtheorem{theorem}{Theorem}[section]
\theoremstyle{definition}
\newtheorem{definition}{Definition}[section]
\theoremstyle{plain}
\newtheorem*{rep@theorem}{\rep@title}
\newcommand{\newreptheorem}[2]{%
\newenvironment{rep#1}[1]{%
 \def\rep@title{#2 \ref{##1}}%
 \begin{rep@theorem}}%
 {\end{rep@theorem}}}
\mathchardef\hy="2D
\title{Analysis of Markovian Arrivals and Service with Applications to Intermittent Overload}
\author{Isaac Grosof\footnote{Electrical and Computer Engineering Department, University of Illinois, Urbana-Champaign \& Department of Industrial Engineering and Management Science, Northwestern University, \url{izzy.grosof@northwestern.edu}}, Yige Hong\footnote{Computer Science Department, Carnegie Mellon University, \url{harchol@cs.cmu.edu}}, Mor Harchol-Balter\footnote{Computer Science Department, Carnegie Mellon University, \url{yigeh@andrew.cmu.edu}}}
\begin{document}

\maketitle

\begin{abstract}

In many important real-world queueing settings, arrival and service rates fluctuate over time.
We consider the MAMS system, where the arrival and service rates
each vary according to an arbitrary finite-state Markov chain,
allowing intermittent overload to be modeled.
This model has been extensively studied, and we derive results matching those found in the literature via
a somewhat novel framework.

We derive a characterization of mean queue length in the MAMS system,
with explicit bounds for all arrival and service chains at all loads,
using our new framework.
Our bounds are 
tight in heavy traffic.
We prove even stronger bounds for the important special case of
two-level arrivals with intermittent overload.

Our framework is based around the concepts of relative arrivals and relative completions,
which have previously been used in studying the MAMS system,
under different names.
These quantities allow us to tractably capture the transient correlational effect
of the arrival and service processes on the mean queue length.    
\end{abstract}

\section{Introduction}

% {Area: Stochastic Models. Area reviewer: Neil Walton. Potential associate reviewers:  Johan van Leeuwarden (Stochastic Models), Maaike Verloop (Stochastic Models), Jing Dong (Stochastic Models).}

% {Others discussed were in the wrong area: Opher Baron (listed as Operations and Supply chain), Ana Busic? (Machine Learning and Data Science). \url{https://pubsonline.informs.org/page/opre/editorial-statement/area-editors-statements}}

Basic queueing theory assumes {\em independent and identically-distributed  (i.i.d.)} interarrival times and service durations.  In reality, the arrival rate might fluctuate between periods of high arrival rate and lower arrival rate. The same is true for the service rate.  
Because of these fluctuations, using a formula which is designed for a fixed arrival rate and service rate can lead to highly inaccurate results. 

There has been extensive work on characterizing queue length in systems where arrival rate and/or service rate fluctuate with time.
This technical report presents a somewhat new framework for analyzing such systems,
which we use to derive closed-form expressions for mean queue length in systems where the arrival rate and/or service rate fluctuates with time.  Specifically, we allow the arrival rate and service rate to each vary according to arbitrary finite-state Markov chains, and derive closed-form expressions for this setting.

Results equivalent to ours were first derived by \citet{yechiali_queuing_1971} in the two-level arrival setting,
derived by \citet{falin_heavy_1999} in the Markov-modulated arrivals setting,
and the results of Falin and Falin were extended to our Markov-modulated arrivals and service setting
by \citet{dimitrov_single_2011}.

The aforementioned results are stronger than ours, as they also characterize the entire moment generating function (MGF) of the queue length in heavy traffic.
We discuss these results further in \cref{sec:prior}.

Our framework is presented here in the hopes that it may be useful as a starting point to build future generalizations to more complex settings.

\subsection*{A specific motivating example: Intermittent overload}
As a motivating example, we study the problem of intermittent overload under a two-level arrival process.

Systems are often built so that the average load, $\rho$, is not too high.  This stems from the general understanding that mean queue length increases as a function related to $1/(1-\rho)$.
Standard operating principles for computer systems practitioners recommend keeping the {\em average} load under $0.5$ \citep{kearns_cpu_2003}.

However, a system with an average load of $0.5$ might not ever have an instantaneous load of $0.5$,
but rather it could be running at very low load most of the time, with brief periods of overload.
The example shown in \cref{fig:timevarying} is of a system which spends $\frac{5}{6}$ of the time at load $0.2$ and $\frac{1}{6}$ of the time at load $2.0$, resulting in an average load of $0.5$.
Because this system experiences intermittent overload, its mean queue length is {\em much larger} than that of a M/M/1 with load 0.5, especially when high-load and low-load periods are long. 

\begin{figure}
\centering
\includegraphics[width=.85\textwidth]{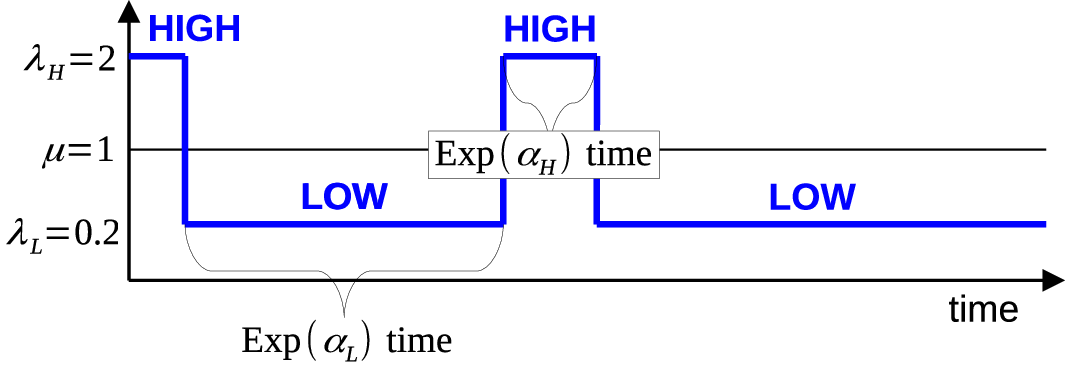}
\caption{A time-varying arrival process with intermittent overload. The system spends $5/6$ of its time on average with a low arrival rate of $\lambda_L=0.2$, and $1/6$ of its time with a high arrival rate of $\lambda_H=2$.  The average arrival rate is $\lambda = 0.5$, which is below the service rate of  $\mu=1$.}
\label{fig:timevarying}
\end{figure}

Intermittent overload, or fluctuating load more generally, is common in compute settings \citep{schroeder_web_2006,cho_mitigating_2023,hansson_overload_2001,farah_transient_2008,azim_efficient_2013,liu_breaking_2021} and also in medical setting settings \citep{curry_transient_2022,chan_queues_2016}. It is the reason that capacity provisioning is difficult, and it is the motivation behind the many papers written on dynamic capacity provisioning \citep{gandhi_autoscale_2012,qu_auto-scaling_2018,kaushik_cloud_2022}.

The simple two-level arrivals system has been extensively studied by \citet{yechiali_queuing_1971}, who derived a closed-form formula for mean queue length, as well as a tight result in heavy traffic.
We reanalyze this system with our framework, proving comparable results.

Further work on the two-level system also exists, which we discuss in \cref{sec:prior} \cite{gupta_fundamental_2006,vesilo_scaling_2022}.

\subsection*{Our results and approach}

{\bf Summary of results:}
This technical report proves closed-form bounds for mean queue length in the MAMS model,
using a somewhat new framework, matching existing bounds in the literature.
Our results are tight in heavy traffic and hold at all loads.   
When we specialize to the two-level system, we prove tight closed-form bounds in two additional settings:
in the limit as the system switches quickly between arrival rates ($\alpha_H, \alpha_L \to \infty$),
and in a setting with intermittent overload setting and slow switching between arrival rates ($\lambda_H > \mu; \alpha_H,\alpha_L \to 0$). 
Our main MAMS theorem is \cref{thm:e-q-exact-mams}, which we use to give closed-form bounds in \cref{cor:e-q-bounds-mams}, and our main two-level-arrival result is \cref{thm:e-q-exp-two}, which we use to give additional closed-form bounds in \cref{cor:two-level-e-q-bounds}.

{\bf Approach:}
Our framework is a somewhat novel extension of the drift method introduced by \citet{eryilmaz_asymptotically_2012}.
In a nutshell, the drift method involves exploiting the fact that the stationary drift of any given random variable related to system performance is zero.  The difficulty in using the drift method is choosing an appropriate test function which defines the random variable.  
In the past, many applications of the drift method have involved i.i.d. arrivals and service.
One exception is \citet{mou_switch_2020}, which required a complex extension to the drift method called the $m$-step drift method.  
Because the $m$-step drift method is much more complicated, the convergence rate in heavy traffic proven via this method is weaker than the existing result of \citet{falin_heavy_1999}.
Our new framework is finding a novel test function which allows for greater analytical tractability,
matching the existing literature.

Our new test function introduces a way of thinking about Markovian systems, which we call {\em relative arrivals} and {\em relative completions},
which is an independent rediscovery of the technique introduced by \citet{falin_heavy_1999}. The basic idea is as follows:  In an i.i.d. system, it is easy to quantify the expected number of arrivals by time $t$;  call this $\Ep[A(t)]$.  For a system with Markovian arrivals, such as the two-level-arrival system, the expected number of arrivals by time $t$ depends on which arrival state the system is in at time $0$:
If the system is in the high load state at time $0$, it will see more arrivals by time $t$ than if it is initially in the low load state.
If we let $A_H(t)$ (respectively, $A_L(t)$) denote the number of arrivals by time $t$ for a two-level system initialized in state $H$ (respectively, $L$), then we expect $\Ep[A_H(t)] > \Ep[A_L(t)]$.  We devise our novel test function by explicitly quantifying this difference.  Specifically, we define  the relative arrival quantities $\Delta(H) := \lim_{t \to \infty} \Ep[A_H(t)] - \Ep[A(t)]$ and 
$\Delta(L) := \lim_{t \to \infty} \Ep[A_L(t)] - \Ep[A(t)]$ (see \cref{def:rel-arr-comp-two}). In the general MAMS system, we also define relative completions similarly. These quantities are the heart of our novel test function.

{\bf Structure of presentation:}
To illustrate our method in the simplest setting possible, we start by presenting our results for the two-level arrival process, in \cref{sec:two-level-arr}.  This allows us to use a slightly simpler test function and also greatly simplify the notation and analysis, while being an interesting result in its own right. In the next several sections (\cref{sec:model,sec:results-mams,sec:mams-proofs}), we then solve the general MAMS system. In \cref{sec:two-level-bounds}, we return to the two-level system to tighten our bounds in this setting.

\subsection*{Outline of paper}

This technical report is organized as follows:
\begin{itemize}
    \item \cref{sec:prior}: We discuss prior work on the two-level arrivals system, and on queueing systems with Markovian arrivals and/or Markovian service.
    \item  \cref{sec:two-level-arr}: We introduce the two-level arrivals system, the drift method, and the relative arrivals function specialized to this setting. We prove our main result, \cref{thm:e-q-exp-two},
    for the two-level arrivals system.
    \item \cref{sec:model}: We introduce the MAMS model,
    and generalize the drift method and the relative arrivals and relative completions function to the MAMS setting.
    \item \cref{sec:results-mams}: We state our main result, \cref{thm:e-q-exact-mams}, for the general MAMS system.
    \item \cref{sec:mams-proofs}: We prove our main result for the general MAMS system.
    \item \cref{sec:two-level-bounds}: We prove additional bounds for the two-level arrivals system.
    \item \cref{sec:simulation}: We empirically demonstrate the accuracy and tightness of our bounds via simulation.
\end{itemize}

\section{Prior work}
\label{sec:prior}

We study the two-level arrival system. None of our results are novel -- all have previously appeared in the literature. We give a somewhat novel framework to derive these results.

\subsection{Results for two-level arrivals system}
\citet{yechiali_queuing_1971} study the same two-level system as the one studied in this technical report.
The paper it derives the explicit formula for the queue length and the steady-state distribution using a generating function technique
with some parameters determined by roots of cubic equations.
Their equation (33) exactly matches our main two-level result, \cref{thm:e-q-exp-two}.
The paper also studies the fast and slow switching settings that we study in \cref{sec:two-level-bounds},
in their Section II, cases C and D.

There has been additional study of the two-level system by \citet{gupta_fundamental_2006},
who prove that mean queue length decreases monotonically as switching rate increases, and proves stochastic ordering results, and by \citet{vesilo_scaling_2022}, who prove convexity and concavity results, additional stochastic ordering, and closing certain edge cases left open by \citet{gupta_fundamental_2006}.

\subsection{Bounds at all loads via Matrix Analytic Methods}

Matrix analytic methods  reduce the problem of deriving the stationary distribution of an MAMS system
to a series of linear algebraic calculations \citep{neuts_mm1_1978,ramaswami_ng1_1980,neuts_matrix_1981,latouche_introduction_1999}. Bounds can also be derived via spectral analysis of matrices related to the underlying system.

This line of work was initiated by \citet{neuts_mm1_1978},
which provided matrix equations to characterize a variety of performance metrics, including
busy period lengths, effective service times, virtual waiting times, time in system,
and queue length. For any given parameters of the system, appropriate matrices could be constructed.

An important follow-on work by \citet{regterschot_queue_1986}
used the Matrix Analytic approach to derive computationally tractable bounds, approximation, and analysis
of these performance quantities.

From these early matrix analytic works, an entire line of papers has followed, generalizing to more complex arrival and service processes, and improving the computational complexity of the algorithms.

A noteworthy recent result by \citet{ciucu_extensions_2018} gives a quite computationally efficient method to derive sharp bounds on the MGF of the queue length, for the MAMS system and a variety of other related systems, via Matrix Analytic methods.

\subsection{Heavy-traffic results for MAMS system}

The first heavy-traffic result in the MAMS system was by \citet{burman_mams_1986},
who studied a system with Markovian arrivals and general i.i.d. service time.
They proceed via a diffusion-scaling approach,
and prove a nearly-rigorous characterization of mean queue length in heavy traffic,
but do not prove a key interchange-of-limits lemma which would be necessary for a fully-rigorous proof.
Their heavy traffic results match our \cref{cor:e-q-bounds-mams}.

Next, \citet{falin_heavy_1999} study the same system as \cite{burman_mams_1986},
and rigorously characterize its queue length distribution,
including characterizing the limiting MGF in heavy traffic.
When specialized to the mean, their results match our main result, \cref{thm:e-q-exact-mams}.
They also introduce a concept equivalent to our relative arrivals quantity, which we describe in \cref{sec:model-rel-arr-two}. Their $a_m$ for an arrival state $m$ is equivalent to our $\Delta_A(m)$.
Their results fully subsume ours, and use a closely related approach to our approach.

\citet{dimitrov_single_2011} generalize the results of \citet{falin_heavy_1999}
to allow Markovian service rates, in addition to Markovian arrival rates and i.i.d. job sizes.
Their model differs slightly from ours, in that they use a single chain to modulate arrival and service rates,
but the key results match.

Our framework is based on the drift method of \citet{eryilmaz_asymptotically_2012}, incorporating the relative arrivals quantity.
While the relative arrivals quantity was an independent rediscovery of the approach by \citet{falin_heavy_1999},
the only prior work to use the drift method to analyze the MAMS system was  \citet{mou_switch_2020},
which analyzing a system with Markovian Arrivals and exponential service in the heavy traffic limit.
While the paper proves that the scaled mean queue length $E[Q(1-\rho)]$ converges in that limit,
the paper does not give a closed form expression for the limiting value,
leaving it in the form of an infinite summation which can be shown to be equivalent to the results of this technical report and of the prior work in this section.

\citet{mou_switch_2020}'s key analytical technique is an extension to the drift method \citep{eryilmaz_asymptotically_2012}
called the ``$m$-step drift method," which is considerably more complex than the original drift method.
As a result, their bounds are somewhat weaker, resulting in a slower heavy-traffic convergence rate.
By contrast, we extend the drift method in a different, simpler direction with our relative arrival and relative completion functions, while managing to prove tighter and more explicit results, matching the results of \citet{falin_heavy_1999} and \citet{dimitrov_single_2011} when restricted to mean queue length.

\section{Two-level arrivals}
\label{sec:two-level-arr}

In this section, we will introduce and analyze the two-level arrival system,
using a drift analysis based on our novel relative arrivals function.
Our main result in this section in \cref{thm:e-q-exp-two}, which characterizes the mean queue length in the two-level arrival system.
The techniques introduced in this section are then generalized to handle the entire MAMS model in \cref{sec:results-mams}.

We define the two-level arrivals model in \cref{sec:model-two}.
We introduce the drift method in \cref{sec:drift}.
We introduce our novel relative arrivals function in \cref{sec:model-rel-arr-two}. We sketch our proof in \cref{sec:test-func-two}, introducing the function whose drift we will analyze. We characterize its drift in \cref{sec:lemmas-two}, and use it to prove our main result \cref{sec:two-main-proof}.

\subsection{Two-level model}
\label{sec:model-two}

In the two-level arrivals model, jobs arrive to a single-server queue. Jobs wait in the queue in FCFS order. When a job reaches the head of the queue, it enters service. Job service times are exponentially distributed with rate $\mu$. All jobs are indistinguishable.

Jobs arrive according to a two-level process.
The two-level arrival process is depicted in \cref{fig:timevarying}. The arrival process consists of two arrival rates: A high rate $\lambda_H$ and a low rate $\lambda_L$. The system switches between the two rates according to the switching rates $\alpha_H$, the rate of switching from $H$ to $L$, and $\alpha_L$, the rate of switching from $L$ to $H$. A Markovian view on this arrival process is shown in \cref{fig:timevaryingarrivals}.
\begin{figure}
    \centering
\includegraphics[width=.5\textwidth]{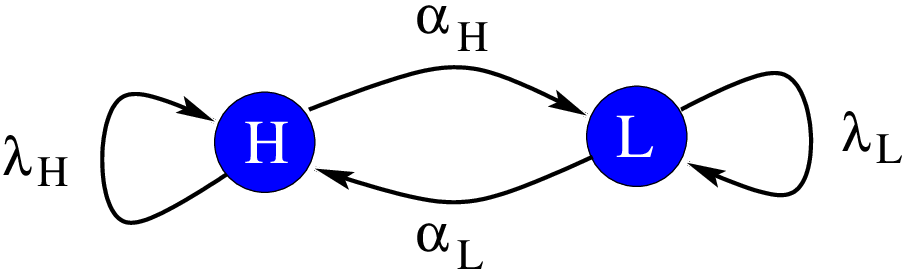}
\caption{The arrival chain of the two-level arrival process.}
\label{fig:timevaryingarrivals}
\end{figure}

Let $\lambda$ denote the long-term average arrival rate:
\begin{align}
    \label{eq:lambda-two}
    \lambda = \frac{\lambda_H/\alpha_H + \lambda_L/\alpha_L}{1/\alpha_H + 1/\alpha_L} = \frac{\lambda_H \alpha_L + \lambda_L \alpha_H}{\alpha_L + \alpha_H}.
\end{align}
Let $\rho := \lambda/\mu$ be the system load. 

We will write $y$ to denote a generic arrival state, $y \in \{H, L\}$.
We will write $(q, y)$ to denote a generic system state, where $q$ is the queue length and $y$ is the arrival state.
Let $\bar{y}$ denote the opposite arrival state from $y$: $\bar{H} = L$ and $\bar{L} = H$. 
Let $Y(t)$ denote the arrival state at time $t$, and let $Y$ be the corresponding stationary random variable.
We will also consider $Y^{arrival}$, the arrival-rate-weighted arrival state:
\begin{align}
    \label{eq:y-arrival-def}
    P(Y^{arrival} = y) := P(Y = y)\frac{\lambda_y}{\lambda}.
\end{align}

It is straightforward to compute the underlying distribution for $Y$.  We then get $Y^{arrival}$ from (\ref{eq:y-arrival-def}):

\begin{align}
    %\label{eq:y-two}
    \label{eq:y-arrival-two}
    Y &= \begin{cases}
        H &\text{w.p. }  \frac{1/\alpha_H}{1/\alpha_H + 1/\alpha_L} = \frac{\alpha_L}{\alpha_H + \alpha_L} \\
        L &\text{w.p. }  \frac{1/\alpha_L}{1/\alpha_H + 1/\alpha_L} = \frac{\alpha_H}{\alpha_H + \alpha_L}
    \end{cases}, \qquad 
    Y^{arrival} = \begin{cases}
        H &\text{w.p. }  \frac{\alpha_L \lambda_H}{\alpha_H \lambda_L + \alpha_L \lambda_H} \\
        L &\text{w.p. }  \frac{\alpha_H \lambda_L}{\alpha_H \lambda_L + \alpha_L \lambda_H}
    \end{cases}.
\end{align}

\subsection{Drift method}
\label{sec:drift}

A key idea for analyzing queueing systems is that for any stationary random variable, its rate of increase must equal its rate of decrease. For instance, consider the work $W$ in an M/G/1. Using the fact that the rate of increase due to arriving jobs must equal the rate of decrease due to service, one can prove that $\rho$, the fraction of time the server is occupied, must equal $\lambda \Ep[S]$, the product of the arrival rate and the mean job size.
This balance holds for any stationary random variable. In the M/G/1, by analyzing the rate of increase and decrease of $W^2$, one can characterize the mean work in the system.

To formalize this concept and apply it to our system,
we make use of the \emph{drift} of a random variable.
In a given state, the random variable's drift is its instantaneous rate of change, in expectation over the randomness of the system. The key fact about drift is that in stationarity, the expected drift of any random variable is $0$, where the expectation is taken both over the state and the future randomness of the system.

Formally, this is captured by the \emph{instantaneous generator} $G$,
which is the stochastic equivalent of the derivative operator.
Rather than operating on a random variable directly, it is simpler to think of $G$ as operating on a \emph{test function}, a function $f$ that maps system states $(q, y)$ to real values.
The instantaneous generator takes $f$ and outputs $G \circ f$, the drift of $f$.

Let $G$ specifically denote the generator operator for the two-level arrival system, which is defined as follows: 
For any test function $f(q, y)$, 
\begin{align*}
    G \circ f(q, y) := \lim_{t \to 0} \frac{1}{t} \Ep[&f(Q(t), Y(t)) - f(q, y) \mid Q(0) = q, Y(0)=y].
\end{align*}

Now, we can state the key fact about the drift:
\begin{lemma}\label{lem:drift-lemma-two}
    Let $f$ be a real-valued function of the MAMS system state where $\Ep[f(Q, \Yarr, \Ycomp)] < \infty.$
    Then 
    \begin{equation}\label{eq:generator-steady-zero}
        \Ep [G \circ f(Q, Y)] = 0,
    \end{equation}
    where the expectation is taken over the stationary random variables $Q, Y$.
\end{lemma}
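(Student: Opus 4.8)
The plan is to prove the statement by the standard route for stationary continuous-time Markov chains: relate $G \circ f$ to the transition rates, and then invoke the stationary (global balance) equations. First I would write the two-level system as a continuous-time Markov chain on the state space $\mathcal{S} = \mathbb{Z}_{\geq 0} \times \{H, L\}$, with transition-rate matrix $G_{\mathrm{gen}}$ whose entries are determined by the arrival rates $\lambda_y$, the service rate $\mu$, and the switching rates $\alpha_H, \alpha_L$. With this notation, $G \circ f(s) = \sum_{s' \neq s} G_{\mathrm{gen}}(s, s')\,(f(s') - f(s))$ for every state $s = (q,y)$; this is exactly the $t \to 0$ limit in the definition of the generator, and the limit plainly exists because each state has only finitely many outgoing transitions with bounded total rate.

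Second, I would let $\pi$ denote the stationary distribution of $(Q, Y)$ and compute
\[
\Ep[G \circ f(Q,Y)] = \sum_{s} \pi(s) \sum_{s' \neq s} G_{\mathrm{gen}}(s,s')\,(f(s') - f(s)).
\]
Rearranging the double sum and using $\sum_{s'} G_{\mathrm{gen}}(s,s') = 0$ in each row, this equals $\sum_{s'} f(s') \big(\sum_s \pi(s) G_{\mathrm{gen}}(s,s')\big)$, and the inner sum vanishes for every $s'$ because $\pi$ satisfies the global balance equations $\pi G_{\mathrm{gen}} = 0$. Hence the whole expression is $0$.

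The main obstacle is justifying the rearrangement of the double sum, since $\mathcal{S}$ is infinite and the test functions we ultimately apply this to (which involve $q^2$ and the relative-arrival correction) are unbounded. This is where the hypothesis $\Ep[\abs{f}] < \infty$, together with the geometric-type tail decay of $\pi$ that holds whenever $\rho < 1$ and the fact that from each state the ``jump in $f$'' is controlled by the values of $f$ at a bounded number of neighbouring states, lets one verify absolute convergence and apply Fubini/Tonelli. An alternative route that sidesteps the combinatorial bookkeeping is Dynkin's formula: $M(t) := f(Q(t),Y(t)) - f(Q(0),Y(0)) - \int_0^t G \circ f(Q(s),Y(s))\,ds$ is a mean-zero martingale (the chain is non-explosive, since the modulating chain has bounded rates and the queue is dominated by an M/M/1-type process), so taking expectations in stationarity gives $0 = \Ep[f(Q(t),Y(t))] - \Ep[f(Q(0),Y(0))] = \Ep\big[\int_0^t G \circ f(Q(s),Y(s))\,ds\big] = t\,\Ep[G \circ f(Q,Y)]$, where the last step uses stationarity of the process; dividing by $t$ finishes the proof. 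Either way, the one genuinely non-trivial ingredient is the integrability/non-explosion input, and I would expect the paper to either cite a standard reference for it or to restrict attention to test functions with enough moment control that it is immediate.
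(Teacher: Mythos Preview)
Your argument is correct, and you even anticipated the paper's move in your final sentence: the paper does not carry out either of your two routes, but simply invokes Proposition~3 of \citet{glynn_bounding_2008} and checks its one hypothesis, namely that the total transition rate out of every state is uniformly bounded. In the two-level system (and more generally in MAMS) this is immediate because the arrival and completion chains are finite, so the rate out of $(q,y)$ is at most $\lambda_H + \mu + \alpha_H + \alpha_L$ regardless of $q$. That is the entire proof in the paper.

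What you gain from your approach is self-containment: you expose the mechanism (stationary balance $\pi G_{\mathrm{gen}} = 0$, or the Dynkin martingale) rather than outsourcing it. What the paper gains is brevity and robustness---the Glynn--Zeevi proposition packages exactly the Fubini/local-martingale bookkeeping you flagged, with the uniform rate bound doing the work of non-explosion and absolute summability in one stroke. One small remark on your Dynkin route: to pass from ``$M(t)$ is a local martingale'' to ``$\Ep[M(t)]=0$'' you need $\Ep\!\big[\int_0^t |G\circ f(Q(s),Y(s))|\,ds\big]<\infty$, not just non-explosion; this does follow from stationarity plus $\Ep[|G\circ f(Q,Y)|]<\infty$, but the latter is not literally the hypothesis $\Ep[|f|]<\infty$, so a line connecting the two (via the bounded-jump, bounded-rate structure) would be needed to make that route airtight.
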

    This result is a special case of the more general result for the MAMS system, \cref{lem:drift-lemma-mams}, which we prove in \cref{app:lemma-generator-drift}.
The \emph{drift method}
relies on finding a useful test function $f$, such that the fact that the expected drift is zero allows us to analyze the performance of the system.

\subsection{Relative arrivals}
\label{sec:model-rel-arr-two}

Next, we introduce a concept named the \emph{relative arrival} function, $\Delta(i)$, which will allow us to define our needed test function for the drift method. This function was used in \cite{falin_heavy_1999} to analyze the M/G/1 with Markov-modulated arrivals, where it was referred to as $a_i$.

Let $A_y(t)$ denote the number of arrivals by time $t$ given that the arrival process starts in state $y\in\{H,L\}$.
\begin{definition}
    \label{def:rel-arr-comp-two}
    The relative arrivals function $\Delta(y)$ is defined as follows:
\begin{align*}
    \Delta(H) &:= \lim_{t\to\infty} \Ep[A_H(t)] - \lambda t, \qquad
    \Delta(L) := \lim_{t\to\infty} \Ep[A_L(t)] - \lambda t.
\end{align*}
\end{definition}

\begin{figure}
    \centering
    \includegraphics[width=0.7\textwidth]{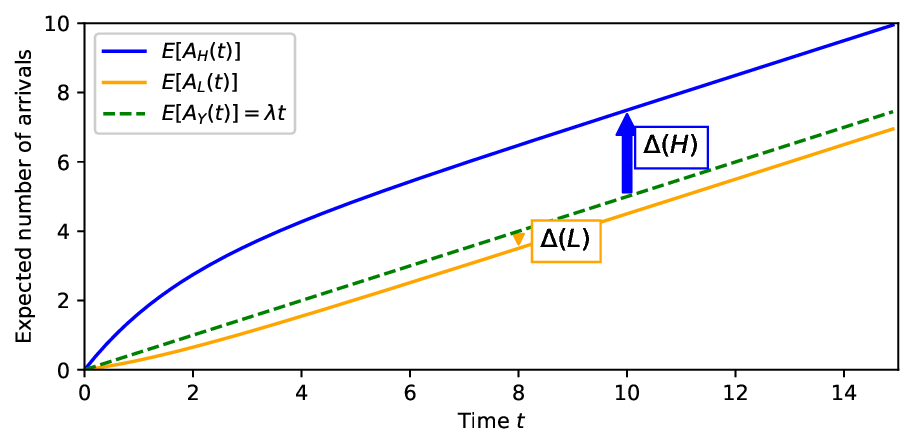}
    \caption{Expected arrivals by time $t$, starting in arrival states $H$ and $L$, and in the steady-state $Y$. Relative arrivals $\Delta(H)$ and $\Delta(L)$ are the limiting differences between these functions. Setting: $\lambda_H = 2, \lambda_L=0.2, \alpha_H = 0.5,  \alpha_L=0.1, \mu=1, \lambda = 0.5$.}
    \label{fig:rel-arr-two}
\end{figure}

As we illustrate in \cref{fig:rel-arr-two}, the relative arrivals function, $\Delta(y)$, represents the difference between the expected number of arrivals given that we're starting in state $y$ and the expected total number of arrivals. As time goes to infinity, this difference converges to a finite value.

$\Delta(H)$ and $\Delta(L)$ are related as follows:
\begin{align}
    \label{eq:poisson-two}
    \Delta(H) &= \frac{\lambda_H - \lambda}{\alpha_H} + \Delta(L), \qquad
    \Delta(L) = \frac{\lambda_L - \lambda}{\alpha_L} + \Delta(H).
\end{align}

To see why \eqref{eq:poisson-two} holds, note that $\lambda_H - \lambda$ is the difference in arrival rates between $H$ and the long-term rate, and $\frac{1}{\alpha_H}$ is the time for which that difference accrues. After this time, the system transitions to state $L$.

To characterize the exact values of $\Delta(H)$ and $\Delta(L)$, we can use the fact that the time-average value of $\Delta$ is zero: $\Ep[\Delta(Y)] = 0$, which we prove for the more general MAMS system in \cref{lem:mean-delta}.
As a result:
\begin{align}
    \label{eq:delta-calc-two}
    \Delta(H) &= \alpha_H \frac{\lambda_H - \lambda_L}{(\alpha_H + \alpha_L)^2},  \qquad 
    \Delta(L) = \alpha_L \frac{\lambda_L - \lambda_H}{(\alpha_H + \alpha_L)^2}.
\end{align}

The drift of the relative arrivals function is nicely behaved: $G \circ \Delta(y) = \lambda - \lambda_y$.
This follows from \cref{def:rel-arr-comp-two}:
The $\Ep[A_y(t)]$ term contributes $\lambda_y$ to the drift, while the $\lambda t$ term contributes $\lambda$. 
This property is generalized to the MAMS system and proven in \cref{cor:effect-arr-comp-on-rel}.

\subsection{Developing our test function}
\label{sec:test-func-two}

We come up with a test function $f_{\Delta}(q,y)$ for the drift method that combines both the queue length and the relative arrivals. % introduced in the last subsection. 
Although the relative arrival function is not new, our test function is novel. 

\begin{definition}\label{def:f-delta-two}
    $
        f_{\Delta}(q, y) := (q + \Delta(y))^2 - \Delta(y)^2 = q(q + 2\Delta(y)). 
    $
\end{definition}

The intuition for the use of the test function $f_\Delta(q, y)$ is that on its own, the queue length $q$ changes at a state-dependent rate, which is hard to analyze. Specifically, jobs arrive at the rate $\lambda_y$ and complete at the rate $\mu$, so the queue length $q$ has drift $\lambda_y - \mu$, assuming $q > 0$. 

We want to ``smooth out'' the instantaneous rate of change of $q$. Specifically, we want a function which:
\begin{itemize}
    \item has a constant \emph{drift}, in all system states where $q>0$, and
    \item differs from $q$ by a bounded amount.
\end{itemize}
Recall from \cref{sec:model-rel-arr-two} that the drift of $\Delta(y)$ is $\lambda - \lambda_y$. 
Thus, the function $q + \Delta(y)$ has a constant drift of $\lambda - \mu$ in all system states where $q > 0$.

Thus, $q+\Delta(y)$ is a smoothed-out proxy for the queue length $q$ with constant drift, making it tractable for analysis.
We set $f_{\Delta}(q, y)$ to be a smoothed-out proxy for the \emph{square} of the queue length,
as applying the generator $G$ gives information about the expected derivative
of the test function, and hence information about $\Ep[Q]$.
Specifically, as we show in \cref{lem:G-f-Delta-two-level}, the drift of $f_{\Delta}(q, y)$ separates cleanly: It has a $(\lambda-\mu)q$ term, and terms that do not depend on $q$.
Then, in \cref{thm:e-q-exp-two}, we apply the fact that the drift is zero in stationarity (\cref{lem:drift-lemma-two}), allowing us to characterize $\Ep[Q]$.

\subsection{Lemmas for the two-level system}
\label{sec:lemmas-two}

First, let's calculate the drift of a generic test function in the two-level system.

\begin{lemma}\label{lem:G-f-two-level}
For any test function $f$ in the two-level system, its instantaneous drift in a particular system state $(q,y)$ can be expressed as 
\begin{align*}
    &G\circ f(q, y) 
    = \lambda_y (f(q + 1, y) - f(q, y))
    + \alpha_y (f(q, \bar{y}) - f(q, y))
     + \mu( f(q - 1 + u, y) - f(q, y)),
\end{align*}
where $u := (1-q)^+ = \indicator{q=0}$ denotes unused service, and $\bar{y}$ is the opposite arrival state from $y$.
\end{lemma}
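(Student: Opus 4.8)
The plan is to unwind the definition of the generator $G$ by enumerating the possible events that the continuous-time Markov chain $(Q(t), Y(t))$ can undergo in a vanishingly short interval $[0,t]$ when started from a fixed state $(q,y)$. From the model in \cref{sec:model-two}, exactly three competing exponential clocks drive the dynamics out of $(q,y)$: an arrival clock of rate $\lambda_y$ (sending $q \mapsto q+1$), an arrival-state switching clock of rate $\alpha_y$ (sending $y \mapsto \bar{y}$), and a service clock of rate $\mu$ (sending $q \mapsto q-1$ when $q \ge 1$, and leaving the state unchanged when $q=0$, modeling unused service). The formula in the statement is exactly the sum, over these three transitions, of (rate) $\times$ (change in $f$), so the proof is a matter of making that first-order expansion precise.

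Concretely, I would write, for small $t>0$, the standard expansion of the conditional law: with probability $\lambda_y t + o(t)$ exactly one arrival and nothing else occurs; with probability $\alpha_y t + o(t)$ exactly one switch occurs; with probability $\mu t + o(t)$ exactly one service completion occurs (this clock still fires at rate $\mu$ when $q=0$, but then leaves the state fixed); with probability $1-(\lambda_y+\alpha_y+\mu)t+o(t)$ no event occurs; and with probability $o(t)$ two or more events occur. Substituting into $\Ep[f(Q(t),Y(t)) - f(q,y) \mid Q(0)=q, Y(0)=y]$, the ``no event'' term vanishes and the multi-event term contributes $o(t)$, leaving
\begin{align*}
 &\Ep[f(Q(t),Y(t)) - f(q,y)\mid Q(0)=q,\, Y(0)=y] \\
 &\quad = \lambda_y t\,(f(q+1,y)-f(q,y)) + \alpha_y t\,(f(q,\bar{y})-f(q,y)) + \mu t\,\indicator{q\ge 1}(f(q-1,y)-f(q,y)) + o(t).
\end{align*}
Dividing by $t$ and letting $t\to 0$ gives the claimed identity, once we note that $\indicator{q\ge 1}(f(q-1,y)-f(q,y)) = f(q-1+u,y)-f(q,y)$ where $u=(1-q)^+=\indicator{q=0}$: for $q\ge 1$ we have $u=0$, and for $q=0$ both sides are $0$.

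The one point requiring care is the claim that the multi-event contribution is genuinely $o(t)$, since $f$ is an arbitrary test function and $Q$ is unbounded, so a priori $f(Q(t),Y(t))$ could be large on the (low-probability) event of many arrivals. I would handle this by dominating the total number of events in $[0,t]$ started from $(q,y)$ by a Poisson$((\lambda_y+\alpha_y+\mu)t)$ variable, so that $P(\ge 2\text{ events}) = O(t^2)$ and the displacement of $f$ over such an event, weighted by its probability, is $o(t)$ for any test function of at most polynomial growth in $q$ — which covers every test function used in this paper (e.g.\ the quadratic $f_\Delta$ of \cref{def:f-delta-two}). Equivalently, and most cleanly, one may simply invoke the standard fact that for a regular, conservative continuous-time Markov chain the generator acts as $G\circ f(s) = \sum_{s'\ne s} r(s,s')(f(s')-f(s))$, and read off the jump rates $r((q,y),(q+1,y))=\lambda_y$, $r((q,y),(q,\bar{y}))=\alpha_y$, and $r((q,y),(q-1,y))=\mu\,\indicator{q\ge 1}$ directly from the model, which reassembles the stated formula. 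Either way the lemma reduces to bookkeeping of transition rates, so I expect no real obstacle beyond stating the regularity/growth hypotheses under which the limit exchange is valid.
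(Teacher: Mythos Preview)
Your proposal is correct and follows the same approach as the paper: enumerate the three possible transitions (arrival at rate $\lambda_y$, switch at rate $\alpha_y$, completion at rate $\mu$ with the $u$ adjustment at $q=0$) and sum their rate-times-increment contributions to the generator. The paper's own proof is a one-sentence version of exactly this enumeration, without the explicit $o(t)$ expansion or the growth-condition discussion you include, so your write-up is simply a more detailed rendering of the same argument.
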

\begin{proof}
    In a given state of the two-level system $(q, y)$, there are three events that can occur:
    arrival, change of the arrival state $y$, and completion, if $q>0$.
    These three events each contribute a corresponding term to the drift of the generic test function $f$.
\end{proof}

Now, we can calculate the drift for the special test function $f_{\Delta}(q, y) = q(q + 2\Delta(y))$.
\begin{lemma}\label{lem:G-f-Delta-two-level}
For any system state $(q, y)$ of the two-level system, the instantaneous drift is
\begin{align}
     &G\circ f_{\Delta}(q, y)  = 2(\lambda - \mu) q  + 2(\lambda_y - \mu (1-u))\Delta(y)  + \lambda_y + \mu (1-u).
\end{align}
\end{lemma}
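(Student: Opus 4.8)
The plan is to apply \cref{lem:G-f-two-level} with the specific test function $f = f_\Delta$ from \cref{def:f-delta-two}, evaluate the three resulting differences (one for an arrival, one for an arrival-state switch, one for a completion), and then collect terms by powers of $q$. Writing $f_\Delta(q,y) = (q+\Delta(y))^2 - \Delta(y)^2$ is convenient here, since every difference that appears is then a difference of squares that expands cleanly into a linear term in $q$ plus a constant.

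For the arrival event, $f_\Delta(q+1,y) - f_\Delta(q,y) = (q+1+\Delta(y))^2 - (q+\Delta(y))^2 = 2(q+\Delta(y)) + 1$, so this term contributes $\lambda_y\bigl(2q + 2\Delta(y) + 1\bigr)$. For the completion event, when $q \ge 1$ we have $u = 0$ and $f_\Delta(q-1,y) - f_\Delta(q,y) = -2(q+\Delta(y)) + 1$, while when $q = 0$ we have $q - 1 + u = 0$ and the difference is $0$; both cases are written uniformly as $\mu(1-u)\bigl(-2(q+\Delta(y)) + 1\bigr)$, which, using the identity $q(1-u) = q$ (valid because $u = \indicator{q=0}$), equals $-2\mu q - 2\mu(1-u)\Delta(y) + \mu(1-u)$. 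For the arrival-state switch, the $-\Delta(y)^2$ offset built into $f_\Delta$ does the work: $f_\Delta(q,\bar y) - f_\Delta(q,y) = (q+\Delta(\bar y))^2 - (q+\Delta(y))^2 - \Delta(\bar y)^2 + \Delta(y)^2 = 2q\bigl(\Delta(\bar y) - \Delta(y)\bigr)$, with no residual constant; multiplying by $\alpha_y$ and invoking \eqref{eq:poisson-two} — equivalently the fact that $G\circ\Delta(y) = \alpha_y(\Delta(\bar y) - \Delta(y)) = \lambda - \lambda_y$, since $\Delta$ has no $q$-dependence — gives $2q(\lambda - \lambda_y)$.

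Adding the three contributions, the $q$-coefficient is $2\lambda_y + 2(\lambda - \lambda_y) - 2\mu = 2(\lambda - \mu)$, the $\Delta(y)$-coefficient is $2\lambda_y - 2\mu(1-u) = 2(\lambda_y - \mu(1-u))$, and the leftover constants are $\lambda_y + \mu(1-u)$, which is exactly the claimed expression. I do not expect a genuine obstacle here: the only two points requiring care are handling the completion event's case split uniformly through the $(1-u)$ factor together with $q(1-u) = q$, and noticing that subtracting $\Delta(y)^2$ in the definition of $f_\Delta$ is precisely what cancels the constant that would otherwise appear in the switch term. This lemma is the mechanical engine that, combined with the stationary zero-drift fact (\cref{lem:drift-lemma-two}), will yield $\Ep[Q]$ in \cref{thm:e-q-exp-two}.
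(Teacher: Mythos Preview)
Your proposal is correct and follows essentially the same route as the paper: apply \cref{lem:G-f-two-level}, simplify the three event-terms separately, and use \eqref{eq:poisson-two} to collapse the $q$-coefficient from $2(\lambda_y + \alpha_y(\Delta(\bar y)-\Delta(y)) - \mu)$ to $2(\lambda-\mu)$. The only cosmetic difference is that the paper works with the product form $f_\Delta(q,y)=q(q+2\Delta(y))$ and invokes $qu=0$, $(1-u)^2=1-u$ directly, whereas you use the completed-square form $(q+\Delta(y))^2-\Delta(y)^2$, which makes the arrival and completion differences immediate as differences of consecutive squares and also makes transparent why the $-\Delta(y)^2$ offset kills the would-be constant in the switch term; the intermediate expressions are identical.
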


\begin{proof}
    By \Cref{lem:G-f-two-level} and the definition of $f_{\Delta}(q, y)$, 
    \begin{align}
        G\circ f_{\Delta}(q, y)
        =\, &
        \label{eq:two-level-G-f-Delta-lambda-term}
        \lambda_y \big((q + 1)(q+1 + 2\Delta(y))  -  q(q + 2\Delta(y))\big) \\
        \label{eq:two-level-G-f-Delta-alpha-term} 
        &+ \alpha_y \big(q(q + 2\Delta(\bar{y}))  - q(q + 2\Delta(y))\big)\\
        \label{eq:two-level-G-f-Delta-mu-term} 
        &+ \mu\big((q-1+u)(q-1+u + 2\Delta(y))  - q(q + 2\Delta(y))\big). 
    \end{align}
    Let us simplify each term in the expression. Each term will simplify to an affine function of $q$:
    A linear term in $q$, plus an additive offset. The slope of some of these affine functions will depend on $y$,
    but we will show that for the overall drift, the slope does not depend on $y$,
    due to our choice of test function.
    
    For the arrivals term, \eqref{eq:two-level-G-f-Delta-lambda-term}, 
    \begin{align*}
        &  \lambda_y \big((q + 1)(q+1 + 2\Delta(y))  -  q(q + 2\Delta(y))\big) 
        =\, 2 \lambda_y q + \lambda_y \big(2\Delta(y) + 1\big). 
    \end{align*}
    
    For the switching term, \eqref{eq:two-level-G-f-Delta-alpha-term}, 
    \begin{align*}
         & \alpha_y \big(q(q + 2\Delta(\bar{y}))  - q(q + 2\Delta(y))\big) = 2\alpha_y q(\Delta_{A}(\bar{y}) - \Delta(y)). 
    \end{align*}
    
    For the completions term, \eqref{eq:two-level-G-f-Delta-mu-term},
    \begin{align*}
         \mu\big((q-1+u)(q-1+u + 2\Delta(y))  - q(q + 2\Delta(y))\big) = -2\mu q + \mu (1-u) \big(-2\Delta(y) + 1\big),
    \end{align*}
    where we make use of the facts that $qu = 0$ and that $(1-u)^2 = (1-u)$, noting that $(1-u) \in \{0,1\}$. 
    
    Combining the above calculations, we find that
    \begin{align}
        G\circ f_{\Delta}(q, y)
        \label{eq:linear-q-two}
        &= 2(\lambda_y + \alpha_y (\Delta(\bar{y}) - \Delta(y)) - \mu) q \\
        &+ 2(\lambda_y - \mu (1-u))\Delta(y)  + \lambda_y + \mu (1-u).
        \nonumber
    \end{align}

    Now, we want to simplify the $\lambda_y + \alpha_y (\Delta(\bar{y}) - \Delta(y))$ coefficient in \eqref{eq:linear-q-two}.
    To do so, let us use a relationship demonstrated in \cref{sec:model-rel-arr-two},
    namely \eqref{eq:poisson-two}:
    \begin{align*}
        \Delta(y) &= \frac{\lambda_y - \lambda}{\alpha_y} + \Delta(\bar{y}) \implies
        %\alpha_y \Delta(y) &= \lambda_y - \lambda + \alpha_y\Delta(\bar{y}) \\
        \lambda = \lambda_y + \alpha_y (\Delta(\bar{y}) - \Delta(y)). 
    \end{align*}
    Thus, the linear term in \eqref{eq:linear-q-two} simplifies to a term which does not depend on $y$, as desired:
    \begin{align*}
        &G\circ f_{\Delta}(q, y)  = 2(\lambda - \mu) q  + 2(\lambda_y - \mu (1-u))\Delta(y)  + \lambda_y + \mu (1-u).
        \qedhere
    \end{align*}
\end{proof}

\subsection{Main result for two-level system}
\label{sec:two-main-proof}

\begin{theorem}
    \label{thm:e-q-exp-two}
    In the two-level arrivals system  with exponential service rate,
    the mean queue length is:
    \begin{align}
        \label{eq:two-main-statement}
        \Ep[Q] &= \frac{\rho(\Ep[\Delta(Y^{arrival})]+1)}{1-\rho}
        + \Ep[\Delta(Y)\mid Q=0], \text{ where} \\
        \nonumber
        \Ep[\Delta(Y^{arrival})] &= \frac{(\lambda_H - \lambda_L)^2 \alpha_L \alpha_H}{\lambda (\alpha_H+\alpha_L)^3}, \\
        \nonumber
         \Ep[\Delta(Y)\mid Q=0] &= \frac{\lambda_H - \lambda_L}{\alpha_H + \alpha_L} \left( P(Y=H\mid Q=0) - \frac{\alpha_L}{\alpha_H + \alpha_L}\right). 
    \end{align}
\end{theorem}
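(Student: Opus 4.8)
The plan is to apply the drift lemma (\cref{lem:drift-lemma-two}) to the test function $f_\Delta(q,y) = q(q+2\Delta(y))$ of \cref{def:f-delta-two}, plug in the drift formula of \cref{lem:G-f-Delta-two-level}, take stationary expectations, and then recognize each resulting term as one of the quantities appearing in the theorem.

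\textbf{Preliminaries.} First I would verify the hypothesis of \cref{lem:drift-lemma-two} for $f_\Delta$, namely $\Ep[f_\Delta(Q,Y)]<\infty$, which reduces to $\Ep[Q^2]<\infty$; this is a standard stability fact for the positive-recurrent two-level system, and I expect it to follow from the moment bounds established for the general MAMS model. I would also record two elementary facts: $\Ep[\lambda_Y]=\lambda$, immediate from \eqref{eq:lambda-two} and \eqref{eq:y-arrival-two}, and $P(Q=0)=1-\rho$. The latter is itself an instance of the drift method: applying \cref{lem:drift-lemma-two} to $f(q,y)=q$, whose drift by \cref{lem:G-f-two-level} is $\lambda_y-\mu(1-u)$, gives $0=\lambda-\mu(1-P(Q=0))$, i.e.\ $P(Q=0)=1-\rho$.

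\textbf{Main computation.} Writing $U:=\indicator{Q=0}$ and applying \cref{lem:drift-lemma-two} to $f_\Delta$ via \cref{lem:G-f-Delta-two-level} yields
\[
0 = 2(\lambda-\mu)\Ep[Q] + 2\Ep[(\lambda_Y-\mu(1-U))\Delta(Y)] + \Ep[\lambda_Y] + \mu(1-P(Q=0)).
\]
The last two terms equal $\lambda+\mu\rho=2\lambda$. I would split the middle term as $2\Ep[\lambda_Y\Delta(Y)] - 2\mu\Ep[(1-U)\Delta(Y)]$. The first piece equals $2\lambda\Ep[\Delta(Y^{arrival})]$ by the definition \eqref{eq:y-arrival-def} of $Y^{arrival}$ (since $P(Y=y)\lambda_y = \lambda P(Y^{arrival}=y)$) — this is the one genuinely clever step that collapses the answer into the clean stated form. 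For the second piece, $\Ep[\Delta(Y)]=0$ (\cref{lem:mean-delta}) gives $\Ep[(1-U)\Delta(Y)] = -\Ep[U\Delta(Y)] = -(1-\rho)\Ep[\Delta(Y)\mid Q=0]$. Substituting back, dividing by $2$, solving for $\Ep[Q]$, and dividing through by $\mu$ produces $\Ep[Q] = \frac{\rho(\Ep[\Delta(Y^{arrival})]+1)}{1-\rho} + \Ep[\Delta(Y)\mid Q=0]$, the first claimed identity.

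\textbf{Explicit forms and the main obstacle.} To obtain the two displayed closed forms I would substitute the explicit $\Delta(H),\Delta(L)$ from \eqref{eq:delta-calc-two} and the law of $Y^{arrival}$ from \eqref{eq:y-arrival-two} into $\Ep[\Delta(Y^{arrival})] = P(Y^{arrival}=H)\Delta(H)+P(Y^{arrival}=L)\Delta(L)$; using $\alpha_H\lambda_L+\alpha_L\lambda_H=\lambda(\alpha_H+\alpha_L)$, the numerator collapses to $\alpha_H\alpha_L(\lambda_H-\lambda_L)^2$. For the conditional expectation I would write $\Ep[\Delta(Y)\mid Q=0] = \Delta(L) + P(Y=H\mid Q=0)(\Delta(H)-\Delta(L))$ and use $\Delta(H)-\Delta(L)=\frac{\lambda_H-\lambda_L}{\alpha_H+\alpha_L}$ together with $\Delta(L) = -\frac{\alpha_L}{\alpha_H+\alpha_L}\cdot\frac{\lambda_H-\lambda_L}{\alpha_H+\alpha_L}$. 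The only non-routine point in the whole argument is the integrability check licensing the use of \cref{lem:drift-lemma-two}; once that is in hand, everything is substitution and algebra, with the sole insight being the identification of $\Ep[\lambda_Y\Delta(Y)]$ with $\lambda\Ep[\Delta(Y^{arrival})]$.
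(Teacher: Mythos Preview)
Your proposal is correct and follows essentially the same route as the paper: apply \cref{lem:drift-lemma-two} to $f_\Delta$ via the drift formula of \cref{lem:G-f-Delta-two-level}, use $\Ep[\Delta(Y)]=0$ and $P(Q=0)=1-\rho$ to simplify, recognize $\Ep[\lambda_Y\Delta(Y)]=\lambda\Ep[\Delta(Y^{arrival})]$, solve for $\Ep[Q]$, and then substitute \eqref{eq:delta-calc-two} and \eqref{eq:y-arrival-two} for the explicit forms. The only cosmetic differences are that the paper cites \cref{lem:unused} for $P(Q>0)=\rho$ (rather than rederiving it from the drift of $q$) and invokes \cref{lem:polynomial-finite-expectation} for the finiteness hypothesis you flag as the one non-routine point.
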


\begin{proof}
    We start with the result of \cref{lem:G-f-Delta-two-level}:
    \begin{align}
        \label{eq:drift-two-level}
        G\circ f_{\Delta}(q, y)  = 2(\lambda - \mu) q  + 2(\lambda_y - \mu (1-u))\Delta(y)  + \lambda_y + \mu (1-u). 
    \end{align}

    Recall \cref{lem:drift-lemma-two}, the key drift-method lemma,
    which states that for (essentially) any test function, $f(q, y)$, we have $\Ep[G\circ f(Q, Y)] = 0$, where $(Q, Y)$ is the stationary random variable for the system state.
    In particular, $\Ep[G\circ f_{\Delta}(Q, Y)]=0$. We will use this fact to characterize $\Ep[Q]$.
    More specifically, because $f_{\Delta}(q, y)$ grows polynomially in $q$, we show it satisfies \Cref{lem:drift-lemma-two} in \Cref{lem:polynomial-finite-expectation}.
    
    Taking the expectation of \eqref{eq:drift-two-level}, we find that
    \begin{align}
    \nonumber
        0 &= \Ep[G\circ f_{\Delta}(Q, Y)] \\
    \nonumber
        &= 2(\lambda - \mu) \Ep[Q] + 2\Ep_{y \sim Y}[\lambda_y \Delta(y)]
        - 2\mu \Ep[\Delta(y)]
        + 2\mu \Ep[\Delta(Y)\indicator{Q=0}]+ \lambda + \mu P(Q>0) \\
    \label{eq:drift-two-exp}
        &= 2(\lambda - \mu) \Ep[Q] + 2\Ep_{y \sim Y}[\lambda_y \Delta(y)]
        + 2\mu \Ep[\Delta(Y)\indicator{Q=0}]+ \lambda + \mu P(Q>0).
    \end{align}
    In taking this expectation, we use the facts that $\lambda := \Ep[\lambda_Y]$,
    that $\Ep[\Delta(Y)]=0$,
    and that $u := \indicator{q=0}$.
    Note that the $(\lambda-\mu)q$ term in \eqref{eq:drift-two-level}
    allows us to get a $\Ep[Q]$ term in \eqref{eq:drift-two-exp}.

    To simplify \eqref{eq:drift-two-exp}, note that because the system has an exponential service rate, $P(Q>0) = \rho = \frac{\lambda}{\mu}$,
    by \cref{lem:unused}.
    Thus:
    \begin{align}
        0 =\,& (\lambda - \mu) \Ep[Q] + \Ep_{y \sim Y}[\lambda_y \Delta(y)] + (\mu - \lambda) \Ep[\Delta(Y)\mid Q=0]+ \lambda.
        \label{eq:drift-two-simp}
    \end{align}

    Recall that we defined $Y^{arrival}$ to be the arrival-weighted state of the arrival process.
    In particular, from \eqref{eq:y-arrival-def}, $P(Y^{arrival} = y) := P(Y = y)\frac{\lambda_y}{\lambda}$. 
    As a result, $\Ep_{y \sim Y}[\lambda_y \Delta(y)] = \lambda \Ep[\Delta(Y^{arrival})]$,
    and \eqref{eq:drift-two-simp} becomes:
    \begin{align*}
        0 &= (\lambda - \mu) \Ep[Q] + \lambda \Ep[\Delta(Y^{arrival})] + (\mu - \lambda) \Ep[\Delta(Y)\mid Q=0]+ \lambda \\
        %(\mu - \lambda) \Ep[Q] &= \lambda \Ep[\Delta(Y^{arrival})] + (\mu - \lambda) \Ep[\Delta(Y)\mid Q=0]+ \lambda \\
        \Ep[Q] &= \frac{\rho(\Ep[\Delta(Y^{arrival})] + 1)}{1-\rho} + \Ep[\Delta(Y)\mid Q=0].
    \end{align*}

    Using the values of $\Delta$ from \eqref{eq:delta-calc-two} and the distribution for $Y^{arrival}$ from \eqref{eq:y-arrival-two}, we can compute $\Ep[\Delta(Y^{arrival})]$ and compute $\Ep[\Delta(Y)\mid Q=0]$ up to the value of $P(Y = H \mid Q =0)$, which we bound in \cref{sec:two-level-bounds}.
\end{proof}

Note that plugging in the bound $P(Y=H\mid Q=0) > 0$ immediately gives a strong lower bound on $\Ep[Q]$.
In \cref{sec:two-level-bounds}, we prove tight upper bounds on $P(Y=H\mid Q=0)$ in both the quick switching limit ($\alpha_H, \alpha_L \to \infty$) and in the slow switching limit ($\alpha_H, \alpha_L \to 0$), giving tight results in both regimes.

Even without any bounds on $P(Y=H\mid Q=0)$, \cref{thm:e-q-exp-two} already tightly characterizes mean response time in the $\rho \to 1$ limit (the heavy traffic limit), holding $\alpha_H$ and $\alpha_L$ constant:

\begin{corollary}\label{cor:heavy-traffic-e-q-two-level}
    In the two-level arrivals system, the scaled mean queue length in the heavy traffic limit converges to $\lim_{\rho\to 1} (1-\rho) \Ep[Q] = \Ep[\Delta(Y^{arrival})] + 1.$
\end{corollary}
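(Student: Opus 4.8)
The plan is to read the corollary directly off the exact formula in \cref{thm:e-q-exp-two}. That theorem states
\[
\Ep[Q] = \frac{\rho(\Ep[\Delta(Y^{arrival})]+1)}{1-\rho} + \Ep[\Delta(Y)\mid Q=0],
\]
so I would simply multiply both sides by $(1-\rho)$ to obtain
\[
(1-\rho)\Ep[Q] = \rho\big(\Ep[\Delta(Y^{arrival})]+1\big) + (1-\rho)\,\Ep[\Delta(Y)\mid Q=0],
\]
and then take $\rho\to1$ with $\mu$, $\alpha_H$, and $\alpha_L$ held fixed (and $\lambda_H,\lambda_L$ varying so that $\lambda\to\mu$). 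It remains to control the two terms on the right.

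For the conditional-expectation term, I would use the closed form $\Ep[\Delta(Y)\mid Q=0] = \frac{\lambda_H-\lambda_L}{\alpha_H+\alpha_L}\big(P(Y=H\mid Q=0) - \frac{\alpha_L}{\alpha_H+\alpha_L}\big)$ supplied by \cref{thm:e-q-exp-two}: since $P(Y=H\mid Q=0)\in[0,1]$, this quantity is bounded in absolute value by $\frac{\abs{\lambda_H-\lambda_L}}{\alpha_H+\alpha_L}$, which stays uniformly bounded along the heavy-traffic sequence because $\alpha_H+\alpha_L$ is fixed and $\lambda_H,\lambda_L$ remain bounded (they converge to limiting values whose weighted average is $\mu$). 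Hence $(1-\rho)\Ep[\Delta(Y)\mid Q=0]\to0$. For the first term, $\rho\to1$ and, from the closed form $\Ep[\Delta(Y^{arrival})] = \frac{(\lambda_H-\lambda_L)^2\alpha_H\alpha_L}{\lambda(\alpha_H+\alpha_L)^3}$, this expectation is a continuous function of the arrival parameters and converges to its value at the heavy-traffic limit (in particular $\lambda\to\mu$), so $\rho\big(\Ep[\Delta(Y^{arrival})]+1\big)\to \Ep[\Delta(Y^{arrival})]+1$ with the right-hand side interpreted at the limit. Combining the two gives $\lim_{\rho\to1}(1-\rho)\Ep[Q] = \Ep[\Delta(Y^{arrival})]+1$.

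There is no real obstacle here beyond \cref{thm:e-q-exp-two} itself: the only care needed is bookkeeping of the heavy-traffic regime — specifying that $\mu,\alpha_H,\alpha_L$ are held fixed while $\lambda_H,\lambda_L$ vary to push $\rho\to1$ — and observing that under this regime the bound on $\Ep[\Delta(Y)\mid Q=0]$ stays finite while $\Ep[\Delta(Y^{arrival})]$ converges. All the analytic content was already expended in proving \cref{thm:e-q-exp-two}, so this corollary is essentially a one-line consequence.
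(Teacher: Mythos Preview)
Your proposal is correct and follows essentially the same route as the paper: multiply the identity from \cref{thm:e-q-exp-two} by $(1-\rho)$, observe that the conditional-expectation term stays bounded (the paper phrases this as $\Ep[\Delta(Y)\mid Q=0]$ lying between $\Delta(L)$ and $\Delta(H)$, which is the same observation you make via the explicit formula and $P(Y=H\mid Q=0)\in[0,1]$), and send $\rho\to1$ with $\alpha_H,\alpha_L$ held fixed. Your write-up is a bit more detailed about the bookkeeping of which parameters vary, but the argument is the same one-line consequence of \cref{thm:e-q-exp-two}.
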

This holds because $\Ep[\Delta(Y)\mid Q=0]$ is bounded by $\Delta(L)$ and $\Delta(H)$, which are bounded in the $\rho \to 1$ limit.

In the more general setting of Markovian arrivals and Markovian service (MAMS), we prove \cref{thm:e-q-exact-mams}, a generalization of our two-level result \cref{thm:e-q-exp-two}, which can handle any number of arrival rates, as well as Markov modulated service rates. This results in a similar heavy traffic result, \cref{cor:heavy-traffic-e-q-mams}.
% \begin{repcorollary}{cor:heavy-traffic-e-q-mams}
%     In the MAMS system, the scaled mean queue length in the heavy traffic limit converges: 
%     \[
%         \lim_{\rho\to 1} (1-\rho) \Ep[Q] = \Ep[\Delta_A(\Yarr^{arrival})] + \Ep[\Delta_C(\Ycomp^{comp})]+ 1,
%     \]
%     where $\Delta_A$ and $\Yarr$ are the MAMS analogues of the arrival quantities $\Delta$ and $Y$ from this section, and $\Delta_C$ and $\Ycomp$ are their equivalents for the completions process.
% \end{repcorollary}

\section{MAMS model and definitions}
\label{sec:model}

We now turn to the general setting of Markovian arrivals and Markovian service (MAMS), where the arrival process and the completions process are both Markov chains.

Like the two-level arrival system introduced in \cref{sec:model-two}, the MAMS system consists of a single-server queue. Jobs wait in the queue in FCFS order. When a job reaches the head of the queue, it enters service. After some time in service, the job completes. All jobs are indistinguishable.
Jobs arrive according to an arrival process which we define in \cref{sec:arrival-def-mams},
and complete according to a completion process which we define in \cref{sec:completion-def-mams}.
After introducing the MAMS model,
\cref{sec:drift-mams,sec:model-rel-arr-mams,sec:test-func} generalize our discussion of the drift process,
the notion of relative arrivals (and completions),
and the test function from what we saw in Section~\ref{sec:two-level-arr}.  The main results for the MAMS model are stated  in Section~\ref{sec:results-mams}.  

Our goal is to bound the mean queue length in the MAMS system, with bounds that are tight in heavy traffic. We seek to generalize \cref{thm:e-q-exp-two}, our main result for the two-level arrivals system.

\subsection{Arrivals}
\label{sec:arrival-def-mams}
We start by generalizing the arrival process to be a general finite-state continuous-time Markov chain.
In \cref{sec:model-two}, the arrival process had only two states, high arrival intensity and low arrival intensity. There were two types of updates in that arrival process: Jobs arriving, which happened with rate $\lambda_H$ and $\lambda_L$, and state changes, which happened with rate $\alpha_H$ and $\alpha_L$.

In the more general MAMS setting, we must track both the initial and final state of a state change,
and we also allow for a job to arrive simultaneously with the state change.

We thus use the following more general notation:
Let the state space of the arrival chain be some finite set $\arrSet$.
For a given pair of states $i \neq j \in \arrSet$,
the system may transition from $i$ to $j$ with an arrival, or from $i$ to $j$ without an arrival.
We will denote the rate of the former transition as $r_{i,j,1}$, and of the later transition as $r_{i,j,0}$.
In general, we will write the transition rate as $r_{i,j,a}$, where the $a$ parameter indicates whether an arrival occurs during a transition.
As a special case, self-transitions may only occur with an arrival: $r_{i,i,1}$ may be nonzero, $r_{i,i,0}$ is always zero.

For example, in the two-level system of \cref{sec:two-level-arr},
$\arrSet = \{H,L\}$, and the nonzero transition rates are:
\begin{align*}
          r_{H,H,1} = \lambda_H, \quad r_{L,L,1} = \lambda_L,
    \quad r_{H,L,0} = \alpha_H, \quad r_{L,H,0} = \alpha_L. 
\end{align*}

We assume that the arrival chain is irreducible.
Let $\lambda$ be the long-term arrival rate, as in \cref{sec:model-two}.
Let $\Yarr(t)$ be a random variable denoting the state of the arrival chain at time $t$.
Let $\Yarr$ be the corresponding stationary random variable.
Let $\Yarr^{arrival}$ be the arrival-rate-weighted arrival state, defined as follows:
\begin{align*}
    P(\Yarr^{arrival} = j) := \frac{\sum_{i\in\arrSet} P(\Yarr = i) r_{i,j,1}}{\lambda}. 
\end{align*}
Note that $\Yarr(t), \Yarr,$ and $\Yarr^{arrival}$ correspond to $Y(t), Y,$ and $Y^{arrival}$
from \cref{sec:model-two}. The subscript $_A$ is added to differentiate the arrival chain from the completion chain, which will use the subscript $_C$.

We will use $\cdot$ in the rate subscripts when aggregating multiple related rates.
For instance, $r_{i,\cdot,\cdot} := \sum_{j,a} r_{i,j,a}$.
As a special case, we will define $\lambda_i = r_{i,\cdot,1}$ to be the total instantaneous arrival rate in state $i$.

\subsection{Completions}
\label{sec:completion-def-mams}
In the two-level arrivals model in \cref{sec:two-level-arr}, completions always occurred at rate $Exp(\mu)$. In the general MAMS model, we allow the completion chain to be a general finite-state continuous-time Markov chain.

Paralleling the arrival chain defined in \cref{sec:arrival-def-mams},
we let the state space of the completion chain be some finite set $\compSet$.
For each $i, j \in \compSet$, the system may transition from $i$ to $j$ with a completion or without a completion, at rates $s_{i,j,0}$ and $s_{i,j,1}$ respectively.
As before, $s_{i,i,1}$ may be nonzero, but $s_{i,i,0} = 0$. 
We write these rates as $s_{i,j,c}$, where the $c$ parameter denotes whether a completion occurs during a transition.
Again, we assume the completion chain is irreducible.

One important note: A completion transition may occur when there are no jobs present in the system.
If this happens, the completion process's state change still occurs, and the queue remains empty.
We will refer to such a completion as an ``unused completion."

Let $\mu$ denote the long-term rate of completions, whether used or unused.

As an example, in the two-level system there was only a single completion state, which we may arbitrarily call $X$. Then $\compSet=\{X\}$.
The sole completion transition rate was $s_{X,X,1}=\mu$.

We define $\Ycomp(t), \Ycomp$, and $\Ycomp^{comp}$ to correspond to $\Yarr(t), \Yarr,$ and $\Yarr^{arrival}$ from \cref{sec:arrival-def-mams}. In particular,
\begin{align*}
    P(\Ycomp^{comp} = j) := \frac{\sum_{i\in\compSet} P(\Ycomp = i) s_{i,j,1}}{\mu}. 
\end{align*}

We define the load $\rho$ to be the ratio $\lambda/\mu$. Our bounds hold for all loads $\rho$, and are tight in the heavy-traffic limit, as $\rho \to 1$.

In our main result for the two-level arrival system, \cref{thm:e-q-exp-two}, we make use of an expectation of the form $\Ep[\cdot \mid Q = 0]$,
conditioning on the system state when the queue is empty.
In the MAMS setting, we condition the system state when when unused completions occur.
We will write this expectation as $\Ep_U[\cdot]$.
Note that in the two-level system, because of the exponential service process and the PASTA property \citep{wolff_pasta_1982},
these two expectations coincide: $\Ep_U[\cdot]=\Ep[\cdot \mid Q = 0]$.
We define the expectation $\Ep_U[\cdot]$ in more detail in \cref{app:unused}.

As with the arrival process, we use $\cdot$ to aggregate related rates, and define $\mu_i = s_{i,\cdot,1}$ to correspond to $\lambda_i$.

\subsection{Drift method}
\label{sec:drift-mams}

In the general MAMS model, we continue using the drift method which we discussed in Section~\ref{sec:drift}, using the fact that the rate of change of any random variable is zero in steady state. 
Formally, the rate of change of a random variable is captured by the drift of a test function. 
Here a test function $f$ is any real-valued function of the state of the MAMS system, $(q, i_A, i_C)$, where $q$ denotes the total number of jobs, and $i_A$ and $i_C$ denote the states of the arrival and service processes, respectively.   
The drift of the test function $ G \circ f(q, i_A, i_C) $ is defined as 
\begin{align*}
    G \circ f(q, i_A, i_C) := \lim_{t \to 0} \frac{1}{t} \Ep[&f(Q(t), Y_A(t), Y_C(t)) - f(q, i_A, i_C)
    \mid Q(0) = q, Y_A(0)=i_A, Y_C(0)=i_C].
\end{align*}
The operator $G$ is called the \emph{instantaneous generator}, which takes a test function $f$ as input and gives the drift of the test function $G\circ f$ as output. The instantaneous generator can be seen as the stochastic equivalent of the derivative operator. 

The lemma below shows that the expected value of the $G\circ f$ in steady state is zero, generalizing \Cref{lem:drift-lemma-two} to the MAMS system. We prove \Cref{lem:drift-lemma-mams} in \Cref{app:lemma-generator-drift}. 
\begin{lemma}\label{lem:drift-lemma-mams}
    Let $f$ be a test function for which $\Ep[f(Q, \Yarr, \Ycomp)] < \infty.$
    Then 
    \begin{equation}\label{eq:generator-steady-zero-mams}
        \Ep [G \circ f(Q, \Yarr, \Ycomp)] = 0.
    \end{equation}
\end{lemma}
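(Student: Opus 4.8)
The state process $X(t) := (Q(t), \Yarr(t), \Ycomp(t))$ is a continuous-time Markov chain on the countable state space $\{0,1,2,\dots\}\times\arrSet\times\compSet$. Since $\arrSet$ and $\compSet$ are finite, the total transition rate out of any state is bounded by the constant $J := \max_{i\in\arrSet} r_{i,\cdot,\cdot} + \max_{i\in\compSet} s_{i,\cdot,\cdot} < \infty$; in particular the chain is non-explosive, and (assuming, as throughout, that the system is stable, so that the stationary variables $Q,\Yarr,\Ycomp$ exist) it admits a stationary distribution $\pi$. Writing $\gamma(s,s')$ for the transition rate from a state $s$ to a state $s'\ne s$ and $\gamma(s) := \sum_{s'\ne s}\gamma(s,s') \le J$, the drift has the explicit form $G\circ f(s) = \sum_{s'\ne s}\gamma(s,s')\bigl(f(s') - f(s)\bigr)$ --- the MAMS analogue of the three-term expansion in \cref{lem:G-f-two-level}, which I take as given. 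The lemma is then the identity $\sum_s \pi(s)\, G\circ f(s) = 0$.

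The plan is to prove this by rearranging that sum and invoking the \emph{global balance} equations $\sum_{s\ne s'}\pi(s)\gamma(s,s') = \pi(s')\gamma(s')$ satisfied by the stationary distribution. First I would establish the integrability $\Ep_\pi[\abs{G\circ f}] < \infty$: from $\abs{G\circ f(s)} \le \sum_{s'\ne s}\gamma(s,s')\abs{f(s')} + \gamma(s)\abs{f(s)}$ and $\gamma(s)\le J$, the second term contributes at most $J\,\Ep_\pi[\abs{f}]$, while for the first term, interchanging the nonnegative sums and using global balance gives $\sum_{s'}\abs{f(s')}\pi(s')\gamma(s') \le J\,\Ep_\pi[\abs{f}]$; hence $\Ep_\pi[\abs{G\circ f}] \le 2J\,\Ep_\pi[\abs{f}] < \infty$ by hypothesis. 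With absolute convergence secured, I can split $\sum_s\pi(s)\,G\circ f(s) = \sum_s\sum_{s'\ne s}\pi(s)\gamma(s,s')f(s') - \sum_s\pi(s)\gamma(s)f(s)$ and rearrange freely; summing the first term over $s$ first and applying global balance turns it into $\sum_{s'}\pi(s')\gamma(s')f(s')$, which exactly cancels the second term, leaving $0$. Equivalently, one can package this as Dynkin's formula --- $M(t) := f(X(t)) - f(X(0)) - \int_0^t G\circ f(X(u))\,du$ is a mean-zero martingale, so stationarity of $X$ forces $0 = \Ep_\pi[f(X(t))] - \Ep_\pi[f(X(0))] = t\,\Ep_\pi[G\circ f(X(0))]$ --- with the same integrability bound doing the work, modulo a routine localization at the exit times $\tau_N := \inf\{t:Q(t)\ge N\}$ (on which $f$ is bounded) to handle unboundedness of $f$.

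The main --- essentially the only --- obstacle is this bookkeeping around absolute convergence and global balance for a test function $f$ that may be unbounded; in our applications $f = f_\Delta$ grows quadratically in $q$. What makes it go through is precisely the two ingredients above: transition rates are uniformly bounded (because $\arrSet$ and $\compSet$ are finite) and the hypothesis controls $\Ep_\pi[\abs{f}]$, which together pin down $\Ep_\pi[\abs{G\circ f}]$ and legitimize every interchange. I would also note that for the polynomial test functions used in the paper this is entirely routine: $Q$ has finite moments of every order under stability, and $G\circ f_\Delta$ is affine in $q$ by \cref{lem:G-f-Delta-two-level}, so $\Ep_\pi[\abs{G\circ f_\Delta}] \le C(1 + \Ep_\pi[Q]) < \infty$ immediately. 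Finally, \cref{lem:drift-lemma-two} follows as the special case $\compSet = \{X\}$, so it needs no separate argument.
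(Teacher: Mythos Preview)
Your argument is correct. The paper takes a much shorter route: it simply invokes Proposition~3 of \citet{glynn_bounding_2008} as a black box, and the entire proof consists of verifying that proposition's hypothesis that the total transition rate out of every state is uniformly bounded --- exactly your observation that $\gamma(s)\le J$ because $\arrSet$ and $\compSet$ are finite. What you have written is essentially a self-contained proof of that cited proposition specialized to this chain: you carry out the global-balance cancellation explicitly and do the Fubini/absolute-convergence bookkeeping by hand. Both arguments rest on the same two structural ingredients (uniformly bounded rates and $\Ep_\pi[\abs{f}]<\infty$), so there is no real conceptual difference; the trade-off is that the paper's version is two lines but sends the reader to a reference, while yours is elementary and complete on the page. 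One small remark: the hypothesis as stated is $\Ep[f]<\infty$, and you silently read this as $\Ep_\pi[\abs{f}]<\infty$; that is the intended reading (and what the Glynn--Zeevi result needs), but it is worth saying so explicitly.
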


All test functions $f$ that we will use with \eqref{eq:generator-steady-zero-mams} will grow at a polynomial rate in $q$, which we show satisfy \Cref{lem:drift-lemma-mams} in \Cref{lem:polynomial-finite-expectation}. 

Similar to \Cref{lem:G-f-two-level}, \Cref{lem:g-f-formula-mams} below provides the expression for the drift of a generic function in the MAMS system. \Cref{lem:g-f-formula-mams} is standard in the drift method literature, and we have specialized it to this system. 
We prove \Cref{lem:g-f-formula-mams} in \Cref{app:lemma-generator-drift}. 

\begin{lemma}\label{lem:g-f-formula-mams}
    For any real-valued function $f$ of the state of the MAMS system, 
    % \marginpar{Mor: I don't get what $a$ is doing here?  Why are we summing over it?  Also confused by Cor 6.1.}
    \begin{align*}
         G \circ f(q, i_A, i_C) &= \sum_{j_A\in\arrSet, a\in\{0,1\}} r_{i_A, j_A, a} \left(f(q+a, j_A, i_C) -  f(q, i_A, i_C)\right) \\
         & + \sum_{j_C\in\compSet, c \in\{0,1\}} s_{i_C, j_C, c } \left(f(q-c+u, i_A, j_C) - f(q, i_A, i_C)\right),
    \end{align*}
    where $u=(c-q)^+$ denotes the unused service. 
\end{lemma}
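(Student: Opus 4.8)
The plan is to observe that the MAMS system evolves as a continuous-time Markov chain $(Q(t), \Yarr(t), \Ycomp(t))$ on the countable state space $\{0,1,2,\dots\} \times \arrSet \times \compSet$, and that \cref{lem:g-f-formula-mams} is nothing but the explicit form of its infinitesimal generator applied to $f$. So the proof reduces to two pieces: (i) listing the transitions out of a generic state together with their rates and destinations, and (ii) the standard ``expand to first order in $t$'' computation. Note that the right-hand side of the lemma is already a finite sum, so the only real content is the limit in (ii).

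For (i): from a state $(q, i_A, i_C)$, the only events are arrival-chain transitions and completion-chain transitions, and no transition alters $i_A$ and $i_C$ simultaneously (the model specifies two separate chains, each with its own transitions). For each $j_A \in \arrSet$ and $a \in \{0,1\}$, there is a transition at rate $r_{i_A, j_A, a}$ to $(q+a, j_A, i_C)$, the index $a$ recording whether a job arrives; recall $r_{i,i,0} = 0$, so every term carrying a genuine rate is a genuine transition. For each $j_C \in \compSet$ and $c \in \{0,1\}$, there is a transition at rate $s_{i_C, j_C, c}$ to $(q - c + u, i_A, j_C)$ with $u = (c-q)^+$; this is exactly the ``unused completion'' rule, since $u = 1$ precisely when $c = 1$ and $q = 0$. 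The total exit rate $R := r_{i_A,\cdot,\cdot} + s_{i_C,\cdot,\cdot}$ is finite because both chains are finite-state.

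For (ii): over $[0,t]$, no transition occurs with probability $1 - Rt + o(t)$; exactly one transition occurs and goes to a specified destination $s'$ with probability $(\text{its rate})\,t + o(t)$; and two or more transitions occur with probability $o(t)$, the last bound holding because the counting process of all transitions is stochastically dominated by a Poisson process of rate $\bar R$, where $\bar R < \infty$ is a global bound on the total exit rate, so its value past $1$ over $[0,t]$ is $O(t^2)$. Plugging these into the definition of $G\circ f(q,i_A,i_C)$, subtracting $f(q,i_A,i_C)$, dividing by $t$, and letting $t\to 0$, the no-transition term contributes $-Rf(q,i_A,i_C)$, each single-transition term contributes $(\text{its rate})\cdot f(s')$, and using $R = \sum_{s'}(\text{its rate})$ we obtain $G \circ f(q,i_A,i_C) = \sum_{s'} (\text{its rate})\big(f(s') - f(q,i_A,i_C)\big)$; substituting the transition list from (i) gives the claimed formula.

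The step I expect to require the most care is controlling the contribution of the ``two or more transitions'' event to $\Ep[f(Q(t),\Yarr(t),\Ycomp(t))]$: since the state space is unbounded in $q$ and $f$ is a priori arbitrary, $|f|$ cannot be bounded uniformly. The clean fix is to restrict to $f$ of at most polynomial growth in $q$ — the only case we ever need, as flagged around \cref{lem:polynomial-finite-expectation} — so that after $k$ transitions $q$ lies in $[q-k, q+k]$, $|f|$ over the reachable states is at most a polynomial in $q+k$, and this polynomial summed against the Poisson tail bound $\le (\bar R t)^k/k!$ over $k \ge 2$ is still $o(t)$. (A standard localization argument would remove the growth hypothesis entirely, but this is unnecessary for our applications.)
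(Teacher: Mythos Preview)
Your proposal is correct and follows the same approach as the paper: enumerate the arrival-chain and completion-chain transitions out of $(q,i_A,i_C)$ and read off the generator as the rate-weighted sum of increments. The paper's own proof is in fact briefer than yours---it simply lists the transitions and asserts each contributes the corresponding term---whereas you additionally justify the $t\to 0$ limit and control the $\ge 2$-jump contribution under a polynomial-growth hypothesis on $f$; this extra care is sound and, as you note, all that is needed for the test functions used later.
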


\subsection{Relative arrivals and completions}
\label{sec:model-rel-arr-mams}

Next, we introduce a concept named the \emph{relative arrival} function, $\Delta(i)$, which will allow us to define our needed test function for the drift method. 

In \cref{sec:model-rel-arr-two}, we introduced the concept of the relative arrivals function,
which was used in \cite{falin_heavy_1999} to analyze the M/G/1 with Markov-modulated arrivals, where it was referred to as $a_i$.
In the MAMS setting, we denote this relative arrivals function as $\Delta_A(i)$,
and we also introduce the relative completions function $\Delta_C(i)$, defined equivalently.

Let $A_i(t)$ denote the number of arrivals by time $t$ given that the arrival chain starts in state $i \in \arrSet$ at time $t=0$.
Let $C_i(t)$ denote the  number of completions by time $t$ given that the completion chain starts in state $i \in \compSet$ at time $t=0$.

\begin{definition}
    \label{def:rel-arr-comp}
    Define the relative arrivals and relative completions functions $\Delta_A(i), \Delta_C(i)$ as follows:
\begin{align*}
%\label{eq:delta-def-mams}
    \Delta_A(i) &:= \lim_{t\to\infty} \Ep[A_i(t)] - \lambda t,  \qquad  
    \Delta_C(i) := \lim_{t\to\infty} \Ep[C_i(t)] - \mu t.
\end{align*}
\end{definition}

We verify that these limits always converge to a finite value in \cref{app:rel-arr-comp}. 
These functions can be seen as the relative value function of the Markov reward process where a reward of 1 is received whenever an arrival or completion occurs, respectively.
We can therefore calculate their values explicitly by solving the corresponding Poisson equation,
given in \cref{lem:poisson-delta}, along with \cref{lem:mean-delta}, which states that these quantities have zero expected value in stationarity:

\begin{lemma}
    \label{lem:mean-delta}
    $\Ep[\Delta_A(\Yarr)] = 0$, $\Ep[\Delta_C(\Ycomp)] = 0$.
\end{lemma}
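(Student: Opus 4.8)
The plan is to exploit the characterization of $\Delta_A$ as a relative value function of a Markov reward process and apply a renewal-reward / ergodic argument. First I would recall that, by \cref{def:rel-arr-comp}, $\Delta_A(i) = \lim_{t\to\infty}\bigl(\Ep[A_i(t)] - \lambda t\bigr)$, so that $\Ep[A_i(t)] = \lambda t + \Delta_A(i) + o(1)$ as $t\to\infty$. Now start the arrival chain from its stationary distribution $\Yarr$, and let $A(t)$ denote the number of arrivals by time $t$ under this stationary initialization. On one hand, averaging the initial-state formula over the stationary distribution gives $\Ep[A(t)] = \lambda t + \Ep[\Delta_A(\Yarr)] + o(1)$. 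On the other hand, when the chain starts stationary, the arrival counting process has stationary increments, and its rate is exactly the long-term arrival rate $\lambda$ (this is essentially the definition of $\lambda$ as the long-term arrival rate, obtained by summing the stationary probabilities against the arrival-bearing transition rates). Hence $\Ep[A(t)] = \lambda t$ exactly, for all $t \ge 0$. Comparing the two expressions and letting $t\to\infty$ forces $\Ep[\Delta_A(\Yarr)] = 0$. The argument for $\Delta_C$ is identical, with $\mu$, $\compSet$, the completion transition rates, and the stationary distribution of $\Ycomp$ in place of their arrival counterparts; note that ``completions'' here means all completion transitions of the chain, used or unused, so that $\mu$ is genuinely the long-term rate of the counting process $C(t)$ started stationary.

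To make the stationary-increments step fully rigorous I would write $\frac{d}{dt}\Ep[A(t)]$ in terms of the generator of the arrival chain: if $p(t)$ is the (constant, stationary) state distribution, the instantaneous arrival rate at time $t$ is $\sum_{i} p(t)_i \lambda_i = \sum_i P(\Yarr = i)\,\lambda_i = \lambda$, which is constant in $t$ because $p(t)$ is stationary. Integrating from $0$ to $t$ yields $\Ep[A(t)] = \lambda t$ with no correction term, which is the crux.

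The main obstacle is not conceptual but one of bookkeeping: one must ensure that the $o(1)$ error term in $\Ep[A_i(t)] = \lambda t + \Delta_A(i) + o(1)$ really is uniform in $i$ (it is, since $\arrSet$ is finite and the chain is irreducible, so convergence to equilibrium is geometric), so that averaging over $\Yarr$ commutes with the limit. Given the convergence of these limits is already asserted to be established in \cref{app:rel-arr-comp}, this is immediate. One should also be slightly careful that self-transitions $r_{i,i,1}$ contribute to $\lambda_i$ and hence to arrivals, but do not change the state — this is consistent with both the definition of $\lambda$ and the counting process $A(t)$, so no issue arises. An alternative, purely algebraic route would be to take the Poisson equation of \cref{lem:poisson-delta} for $\Delta_A$, namely (schematically) $\sum_j q_{ij}\Delta_A(j) = \lambda - \lambda_i$ where $q_{ij}$ is the generator of $\Yarr$, multiply by $P(\Yarr = i)$, sum over $i$, and use $\sum_i P(\Yarr=i) q_{ij} = 0$ together with $\sum_i P(\Yarr=i)\lambda_i = \lambda$; this also yields $\Ep[\Delta_A(\Yarr)] = 0$ directly and avoids any limiting argument, so I would likely present this as the cleaner proof and relegate the renewal-reward intuition to a remark.
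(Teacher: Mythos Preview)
Your primary argument is correct and is essentially identical to the paper's proof: start the arrival chain in stationarity, observe that $\Ep[A_{\Yarr}(t)] = \lambda t$ exactly for every $t$ (since the instantaneous arrival rate is constantly $\lambda$), and conclude that $\Ep[\Delta_A(\Yarr)] = \lim_{t\to\infty}\bigl(\Ep[A_{\Yarr}(t)] - \lambda t\bigr) = 0$. The paper's proof is just a terser version of exactly this; your remarks about uniformity of the $o(1)$ term over the finite state space and about self-transitions are correct but ultimately unnecessary, since the paper simply interchanges a finite sum with a limit.

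However, your proposed ``cleaner'' alternative via the Poisson equation does \emph{not} work, and you should not present it as the main proof. If you multiply $\sum_j q_{ij}\Delta_A(j) = \lambda - \lambda_i$ by $P(\Yarr=i)$ and sum over $i$, the left-hand side becomes $\sum_j \Delta_A(j)\sum_i P(\Yarr=i)q_{ij} = 0$ by stationarity, and the right-hand side becomes $\lambda - \lambda = 0$. You obtain the tautology $0=0$, not $\Ep[\Delta_A(\Yarr)]=0$. This is not an accident: the Poisson equation determines $\Delta_A$ only up to an additive constant, so no manipulation of it alone can pin down $\Ep[\Delta_A(\Yarr)]$. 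Indeed, the paper explicitly says (just before \cref{lem:poisson-delta}) that one must \emph{incorporate} \cref{lem:mean-delta} alongside the Poisson equation to uniquely determine $\Delta_A$; the two are complementary, not interchangeable. Keep your stationary-start argument as the proof and drop the Poisson-equation alternative.
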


\begin{proof}
    Note that $[\Yarr(t) \mid \Yarr(0) = \Yarr] \sim \Yarr$.
    As a result, $\Ep[A_{\Yarr}(t)] = \lambda t$ for any $t$. Then by definition, $\Ep[\Delta_A(\Yarr)] = \lim_{t\to\infty} (\Ep[A_{\Yarr}(t)] - \lambda t) = 0$. Similarly, $\Ep[\Delta_C(\Ycomp)] = 0$. 
\end{proof}

Another important property of $\Delta_A(i)$ and $\Delta_C(i)$
are their drifts:
\begin{repcorollary}{cor:effect-arr-comp-on-rel}$G\circ \Delta_A(i_A) = \lambda - \lambda_{i_A}, G\circ \Delta_C(i_C) =  \mu - \mu_{i_C}.$
\end{repcorollary}

\cref{cor:effect-arr-comp-on-rel} follows from the Poisson equation, \cref{lem:poisson-delta}, and it is proven in \cref{app:rel-arr-comp}.

\subsection{Test function}
\label{sec:test-func}

In order to characterize $\Ep[Q]$, we introduce the test function $f_{\Delta_A,\Delta_C}(q, i_A, i_C)$, which generalizes the two-level arrival test function defined in \Cref{def:f-delta-two} to the MAMS system. 
\begin{definition}
    $f_{\Delta_A,\Delta_C}(q, i_A, i_C) := q(q + 2 \Delta_A(i_A) - 2\Delta_C(i_C)).$
\end{definition}

The intuition for this test function again comes from finding a smooth proxy for the queue length $q$.
In the two-level system, we used $q+\Delta(y)$ as a proxy for $q$, to capture the influence of the state-dependent arrival rate in the two-level system,
In the MAMS system, both the arrival rates and completion rates are state-dependent, so we use  $q+\Delta_A(i_A)-\Delta_C(i_C)$ as a smooth proxy for $q$. 

To characterize mean queue length, we need a proxy for the square of the queue length. In the two-level system, we used $q(q+2\Delta(y))$ as our proxy for the square of queue length.
In the MAMS system, $f_{\Delta_A, \Delta_C}(q, i_A, i_C)$ fulfils the same role.

\section{Results}
\label{sec:results-mams}

We now present the main of the paper: a closed-form formula for the  mean queue length in the MAMS system.   This will be proven in Section~\ref{sec:mams-proofs}.

\begin{theorem}[Mean queue length]
\label{thm:e-q-exact-mams}
    In the MAMS system, the mean queue length is given by:
    \begin{align*}
        &\Ep[Q] = \frac{\rho\Ep[\Delta_A(\Yarr^{arrival})] + \Ep[\Delta_C(\Ycomp^{comp})]+\rho}{1-\rho}
         + \Ep_U[\Delta_A(\Yarr) - \Delta_C(\Ycomp^{comp})],
    \end{align*}
    where $\Ep_U[\cdot]$ was defined in \cref{sec:completion-def-mams}:
    \begin{align*}
        \Ep_U[\Delta_A(\Yarr) - \Delta_C(\Ycomp^{comp}))]  &= \frac{\Ep[(\mu_{\Ycomp} \Delta_A(\Yarr) - \sum_{j_C\in\compSet} s_{\Ycomp, j_C, 1} \Delta_C(j_C)))\indicator{Q=0}]}{\mu-\lambda}.% \\
        %\Ep_U[\Delta_C(\Ycomp^{comp})] &= \frac{\Ep[\sum_{j_C\in\compSet} s_{\Ycomp, j_C, 1} \Delta_C(j_C))\indicator{Q=0}]}{\mu-\lambda}.
    \end{align*}
\end{theorem}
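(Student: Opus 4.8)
The plan is to follow the same route as the two-level proof of \cref{thm:e-q-exp-two}: compute the drift $G\circ f_{\Delta_A,\Delta_C}$, show that its dependence on $q$ collapses to a single $2(\lambda-\mu)q$ term (the remaining terms being constant in $q$ or carrying an $\indicator{Q=0}$ factor), take the stationary expectation, and invoke $\Ep[G\circ f_{\Delta_A,\Delta_C}(Q,\Yarr,\Ycomp)]=0$.

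First I would apply \cref{lem:g-f-formula-mams} to $f_{\Delta_A,\Delta_C}(q,i_A,i_C)=q\bigl(q+2\Delta_A(i_A)-2\Delta_C(i_C)\bigr)$ and simplify the arrival sum. Expanding $f(q+a,j_A,i_C)-f(q,i_A,i_C)$, using $a^2=a$ and $\sum_{j_A,a}r_{i_A,j_A,a}\,a=\lambda_{i_A}$, and recognizing that $\sum_{j_A,a}r_{i_A,j_A,a}\bigl(\Delta_A(j_A)-\Delta_A(i_A)\bigr)=G\circ\Delta_A(i_A)=\lambda-\lambda_{i_A}$ by \cref{cor:effect-arr-comp-on-rel}, the arrival contribution becomes
\[
2\lambda q+\lambda_{i_A}+2\sum_{j_A\in\arrSet}r_{i_A,j_A,1}\Delta_A(j_A)-2\lambda_{i_A}\Delta_C(i_C),
\]
so the $q$-coefficient is $2\lambda$, independent of $i_A$ -- this is precisely the smoothing role of $\Delta_A$. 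For the completion sum I would write the unused service as $u=(c-q)^+=c\,\indicator{q=0}$, so that the used completion is $c-u=c\,\indicator{q\geq1}$; then using $(c-u)^2=c-u$, $q\,\indicator{q\geq1}=q$, and $\sum_{j_C,c}s_{i_C,j_C,c}\bigl(\Delta_C(j_C)-\Delta_C(i_C)\bigr)=G\circ\Delta_C(i_C)=\mu-\mu_{i_C}$, the completion contribution becomes
\[
-2\mu q+\mu_{i_C}\indicator{q\geq1}-2\mu_{i_C}\indicator{q\geq1}\Delta_A(i_A)+2\indicator{q\geq1}\sum_{j_C\in\compSet}s_{i_C,j_C,1}\Delta_C(j_C).
\]
Adding the two pieces gives $G\circ f_{\Delta_A,\Delta_C}(q,i_A,i_C)=2(\lambda-\mu)q$ plus terms that are constant in $q$ or carry an $\indicator{q\geq1}$ (equivalently $\indicator{q=0}$) factor.

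Next I would take the stationary expectation and set it to $0$ via \cref{lem:drift-lemma-mams}, whose hypothesis holds since $f_{\Delta_A,\Delta_C}$ is polynomial in $q$ (\cref{lem:polynomial-finite-expectation}), and evaluate term by term using: $\Ep[\lambda_{\Yarr}]=\lambda$ and $\Ep[\mu_{\Ycomp}]=\mu$; $\Ep[\Delta_A(\Yarr)]=\Ep[\Delta_C(\Ycomp)]=0$ (\cref{lem:mean-delta}); the definitions of $\Yarr^{arrival}$ and $\Ycomp^{comp}$, which give $\Ep\bigl[\sum_{j_A}r_{\Yarr,j_A,1}\Delta_A(j_A)\bigr]=\lambda\,\Ep[\Delta_A(\Yarr^{arrival})]$ and $\Ep\bigl[\sum_{j_C}s_{\Ycomp,j_C,1}\Delta_C(j_C)\bigr]=\mu\,\Ep[\Delta_C(\Ycomp^{comp})]$; the independence of $\Yarr$ and $\Ycomp$ in stationarity (the two chains evolve autonomously), which makes $\Ep[\lambda_{\Yarr}\Delta_C(\Ycomp)]=\lambda\cdot 0=0$ and $\Ep[\mu_{\Ycomp}\Delta_A(\Yarr)]=\mu\cdot 0=0$; and the identity that the long-run rate of used completions equals the arrival rate, $\Ep[\mu_{\Ycomp}\,\indicator{Q\geq1}]=\lambda$ (the MAMS analogue of \cref{lem:unused}). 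Writing $\indicator{Q\geq1}=1-\indicator{Q=0}$ in the two remaining $\indicator{Q\geq1}$ terms, the unconditional parts vanish (by independence and \cref{lem:mean-delta}) while the $\indicator{Q=0}$ parts combine into $\Ep\bigl[\bigl(\mu_{\Ycomp}\Delta_A(\Yarr)-\sum_{j_C}s_{\Ycomp,j_C,1}\Delta_C(j_C)\bigr)\indicator{Q=0}\bigr]$, which equals $(\mu-\lambda)\,\Ep_U[\Delta_A(\Yarr)-\Delta_C(\Ycomp^{comp})]$ by the definition of $\Ep_U$ in the statement. What remains is the scalar equation
\[
0=(\lambda-\mu)\Ep[Q]+\lambda+\lambda\,\Ep[\Delta_A(\Yarr^{arrival})]+\mu\,\Ep[\Delta_C(\Ycomp^{comp})]+(\mu-\lambda)\,\Ep_U[\Delta_A(\Yarr)-\Delta_C(\Ycomp^{comp})],
\]
and solving for $\Ep[Q]$, then dividing the first fraction's numerator and denominator by $\mu$ to introduce $\rho$, yields exactly the claimed formula.

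The step I expect to be the main obstacle is the bookkeeping of the completion term: isolating used from unused completions via $u=c\,\indicator{q=0}$, and then, after taking expectations, correctly tracking which quantities vanish -- the unconditional expectations of any $\Delta_A(\Yarr)$ or $\Delta_C(\Ycomp)$ factor, killed by \cref{lem:mean-delta} together with the $\Yarr\perp\Ycomp$ independence -- versus which survive as the conditional $\Ep_U$ residual. One also needs the unused-work balance $\Ep[\mu_{\Ycomp}\,\indicator{Q\geq1}]=\lambda$, which plays here the role that $P(Q>0)=\rho$ played in the two-level proof.
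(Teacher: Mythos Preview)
Your proposal is correct and follows essentially the same route as the paper: the paper splits $G\circ f_{\Delta_A,\Delta_C}$ into an arrival piece $g_A$ and a completion piece $g_C$ (\cref{lem:generator-expand}), simplifies each exactly as you do (\cref{lem:arrival-simp,lem:completion-simp}), takes expectations using the same ingredients---independence of $\Yarr,\Ycomp$, \cref{lem:mean-delta}, the definitions of $\Yarr^{arrival},\Ycomp^{comp}$, and \cref{lem:unused}---and arrives at the identical scalar equation you wrote down. One small slip in your narrative: when you split $\indicator{Q\ge1}=1-\indicator{Q=0}$, only the unconditional part of the $\mu_{\Ycomp}\Delta_A(\Yarr)$ term vanishes; the unconditional part of $\sum_{j_C}s_{\Ycomp,j_C,1}\Delta_C(j_C)$ is precisely the surviving $\mu\,\Ep[\Delta_C(\Ycomp^{comp})]$ term in your final equation, so ``the unconditional parts vanish'' overstates it---but your final equation is right, so this is only a wording issue.
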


Our formula consists of a fully-explicit primary term
and a secondary term (the $\Ep_U[\cdot]$ term) which depends on the behavior of the MAMS system when the queue is empty, captured by the expectation over the unused service $\Ep_U[\cdot]$.
\cref{thm:e-q-exact-mams} generalizes our two-level result, \cref{thm:e-q-exp-two}.
The $\Ep[\Delta(Y) \mid Q = 0]$ term in \cref{thm:e-q-exp-two} becomes the $\Ep_U[\cdot]$ term in \cref{thm:e-q-exact-mams}, and terms are introduced for the Markovian service process.
One can exactly compute the $\Delta_A(\cdot)$ and $\Delta_C(\cdot)$ functions by solving the corresponding Poisson equations, as discussed in \cref{sec:model-rel-arr-mams}.

As one consequence of \cref{thm:e-q-exact-mams}, note that $\Ep_U[\Delta_A(\Yarr) - \Delta_C(\Ycomp^{comp})]$
is bounded by the maximum and minimum possible values of $\Delta_A $ and $\Delta_C$,
yielding the following simple bounds:
\begin{corollary}\label{cor:e-q-bounds-mams}
    Let $\Delta_A^{\max}$ and $\Delta_A^{\min}$ be the maximum and minimum values of $\Delta_A(i_A)$ over all arrival states $i_A$,
    and define $\Delta_C^{\max}$ and $\Delta_C^{\min}$ similarly.  In the MAMS system, the mean queue length is bounded by
        \begin{align*}
        \Delta_A^{\min} - \Delta_C^{\max}\le
        \Ep[Q] - \frac{\rho \Delta_A(\Yarr^{arrival}) + \Delta_C(Y_C^{comp}) + \rho}{1-\rho}
        &\le \Delta_A^{\max} - \Delta_C^{\min}.
    \end{align*}
    % \begin{align*}
    %     \Ep[Q] &\le \frac{\rho \Delta_A(\Yarr^{arrival}) + \Delta_C(Y_C^{comp}) + \rho}{1-\rho} + \Delta_A^{\max} - \Delta_C^{\min} \\
    %     \Ep[Q] &\ge \frac{\rho \Delta_A(\Yarr^{arrival}) + \Delta_C(Y_C^{comp}) + \rho}{1-\rho} +
    %     \Delta_A^{\min} - \Delta_C^{\max}
    % \end{align*}
\end{corollary}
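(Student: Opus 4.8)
The plan is to obtain the corollary as an immediate reading of \cref{thm:e-q-exact-mams}. Subtracting the fully explicit primary term $\frac{\rho\Ep[\Delta_A(\Yarr^{arrival})] + \Ep[\Delta_C(\Ycomp^{comp})]+\rho}{1-\rho}$ from both sides of the theorem leaves exactly the secondary term, so the quantity bounded by the corollary is precisely $\Ep_U[\Delta_A(\Yarr) - \Delta_C(\Ycomp^{comp})]$. The corollary is therefore equivalent to the two-sided estimate
\begin{align*}
    \Delta_A^{\min} - \Delta_C^{\max} \le \Ep_U[\Delta_A(\Yarr) - \Delta_C(\Ycomp^{comp})] \le \Delta_A^{\max} - \Delta_C^{\min},
\end{align*}
and the entire task reduces to showing that the operator $\Ep_U[\cdot]$ acts as a genuine probability-weighted average of the finitely many values $\Delta_A(i_A)$ and $\Delta_C(j_C)$.

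To verify this, I would use the explicit formula for $\Ep_U[\cdot]$ stated in \cref{thm:e-q-exact-mams} and split its numerator by linearity into an arrival piece and a completion piece. For the arrival piece, $\Ep_U[\Delta_A(\Yarr)] = \frac{\Ep[\mu_{\Ycomp}\Delta_A(\Yarr)\indicator{Q=0}]}{\mu-\lambda}$; the weights $\mu_{\Ycomp}\indicator{Q=0}$ are nonnegative and have total mass $\Ep[\mu_{\Ycomp}\indicator{Q=0}] = \mu - \lambda$, the steady-state rate of unused completions, which equals the overall completion rate $\mu$ minus the used-completion rate $\lambda$ by rate balance. Hence $\Ep_U[\Delta_A(\Yarr)]$ is a convex combination of $\{\Delta_A(i_A) : i_A \in \arrSet\}$ and lies in $[\Delta_A^{\min}, \Delta_A^{\max}]$. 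The completion piece is handled identically: $\Ep_U[\Delta_C(\Ycomp^{comp})] = \frac{\Ep[\sum_{j_C\in\compSet} s_{\Ycomp, j_C, 1}\Delta_C(j_C)\indicator{Q=0}]}{\mu-\lambda}$, whose nonnegative weights again sum in expectation to $\mu - \lambda$ because $\mu_{i_C} = \sum_{j_C} s_{i_C, j_C, 1}$, so $\Ep_U[\Delta_C(\Ycomp^{comp})] \in [\Delta_C^{\min}, \Delta_C^{\max}]$. Subtracting the two convex combinations yields the claimed interval $[\Delta_A^{\min} - \Delta_C^{\max}, \Delta_A^{\max} - \Delta_C^{\min}]$.

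The only substantive point is the normalization $\Ep[\mu_{\Ycomp}\indicator{Q=0}] = \mu - \lambda$, i.e.\ that each of the two weightings integrates to exactly the denominator $\mu - \lambda$; everything else is monotonicity and linearity of expectation. I expect this normalization to be the main obstacle, though a mild one: it is the same rate-conservation identity that justifies the $\mu - \lambda$ appearing in the definition of $\Ep_U[\cdot]$ in \cref{sec:completion-def-mams}, so if that definition already establishes $\Ep_U[1] = 1$, the proof collapses to a single appeal to monotonicity and can be stated in one or two lines.
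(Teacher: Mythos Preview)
Your proposal is correct and follows the same route as the paper: both treat the corollary as an immediate consequence of \cref{thm:e-q-exact-mams} by observing that $\Ep[Q]-\text{(primary term)}=\Ep_U[\Delta_A(\Yarr)-\Delta_C(\Ycomp^{comp})]$ and that this quantity is trapped between the extreme values of $\Delta_A$ and $\Delta_C$. The paper's proof is a single sentence (``follows immediately from \cref{thm:e-q-exact-mams}, once we recall that $\arrSet$ and $\compSet$ are finite and the relative arrivals and completions functions converge''), whereas you unpack the convex-combination structure of $\Ep_U[\cdot]$ explicitly; the normalization $\Ep[\mu_{\Ycomp}\indicator{Q=0}]=\mu-\lambda$ that you flag is exactly \cref{lem:unused}, already built into the definition in \cref{app:unused}.
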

\begin{proof}
    This corollary follows immediately from \cref{thm:e-q-exact-mams}, once we recall that the arrival and completion state spaces $\arrSet$ and $\compSet$ are finite, and that the relative arrivals and completions functions always converge, as discussed in \cref{sec:model-rel-arr-mams}.
\end{proof}

Note that these bounds converge in the $\rho \to 1$ limit, as $\Delta_A^{\max}, \Delta_A^{\min}, \Delta_C^{\max},$ and $\Delta_C^{\min}$ remain bounded in that limit, allowing us to exactly characterize the heavy traffic behavior of the system:

\begin{corollary}\label{cor:heavy-traffic-e-q-mams}
    In the MAMS system, the scaled mean queue length in the heavy traffic limit converges:
    \begin{align*}
        \lim_{\rho\to 1} (1-\rho) \Ep[Q] = \Ep[\Delta_A(\Yarr^{arrival})] + \Ep[\Delta_C(\Ycomp^{comp})]+ 1.
    \end{align*}
\end{corollary}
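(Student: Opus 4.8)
The plan is to take the limit $\rho \to 1$ directly in the conclusion of Theorem \ref{thm:e-q-exact-mams}, using Corollary \ref{cor:e-q-bounds-mams} to control the one term that is not fully explicit. First I would multiply the identity of Theorem \ref{thm:e-q-exact-mams} by $(1-\rho)$, giving
\begin{equation*}
(1-\rho)\Ep[Q] = \rho\Ep[\Delta_A(\Yarr^{arrival})] + \Ep[\Delta_C(\Ycomp^{comp})] + \rho + (1-\rho)\,\Ep_U[\Delta_A(\Yarr) - \Delta_C(\Ycomp^{comp})].
\end{equation*}
The first three terms on the right depend on $\rho$ only through the explicit prefactors and through the arrival/completion chains; since we take the heavy-traffic limit holding the arrival and completion chains fixed (scaling, say, $\mu$ down or $\lambda$ up so that $\rho \to 1$ while the relative-arrival and relative-completion functions $\Delta_A, \Delta_C$ and the weighted stationary laws $\Yarr^{arrival}, \Ycomp^{comp}$ stay fixed), each of $\Ep[\Delta_A(\Yarr^{arrival})]$ and $\Ep[\Delta_C(\Ycomp^{comp})]$ is a constant, and $\rho \to 1$. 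Hence the first three terms converge to $\Ep[\Delta_A(\Yarr^{arrival})] + \Ep[\Delta_C(\Ycomp^{comp})] + 1$.

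It remains to show the last term vanishes. Here I would invoke Corollary \ref{cor:e-q-bounds-mams} (or equivalently the observation preceding it): since $\arrSet$ and $\compSet$ are finite and the limits defining $\Delta_A, \Delta_C$ converge to finite values (as noted in Section \ref{sec:model-rel-arr-mams} and verified in the appendix), the quantity $\Ep_U[\Delta_A(\Yarr) - \Delta_C(\Ycomp^{comp})]$ is sandwiched between $\Delta_A^{\min} - \Delta_C^{\max}$ and $\Delta_A^{\max} - \Delta_C^{\min}$, both finite and independent of $\rho$. Therefore $(1-\rho)\Ep_U[\Delta_A(\Yarr) - \Delta_C(\Ycomp^{comp})] \to 0$ as $\rho \to 1$, and the claimed limit follows.

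The only real subtlety — and the step I would be most careful about — is making precise what is held fixed as $\rho \to 1$. The statement implicitly parametrizes a family of MAMS systems approaching heavy traffic; one must check that along this family the arrival chain (hence $\Delta_A$ and the law of $\Yarr^{arrival}$) and the completion chain (hence $\Delta_C$ and the law of $\Ycomp^{comp}$) are genuinely constant, so that the bounds from Corollary \ref{cor:e-q-bounds-mams} are uniform in $\rho$. Once that is fixed (e.g.\ by scaling only the overall service speed, which rescales $\mu$ and $C_i(t)$ but leaves $\Delta_C$ unchanged, or by the analogous convention used for the two-level result in Corollary \ref{cor:heavy-traffic-e-q-two-level}), everything else is an immediate passage to the limit with no further estimates required.
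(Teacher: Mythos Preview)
Your proposal is correct and matches the paper's own argument: the paper derives \cref{cor:heavy-traffic-e-q-mams} directly from \cref{cor:e-q-bounds-mams} by noting that $\Delta_A^{\max}, \Delta_A^{\min}, \Delta_C^{\max}, \Delta_C^{\min}$ remain bounded as $\rho \to 1$, so the additive gap vanishes after scaling by $(1-\rho)$. Your extra care about specifying what is held fixed in the heavy-traffic limit is a reasonable point the paper leaves implicit.
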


\section{Proofs for MAMS system}
\label{sec:mams-proofs}

We will follow a similar structure to that used in Section~\ref{sec:two-level-arr}:
\begin{itemize}
    \item \cref{lem:generator-expand} sets up the formula for the drift of our chosen test function $f_{\Delta_A,\Delta_C}$, which we introduced in \cref{sec:test-func},
    splitting the drift into an arrivals term and a completions term.
    \item \cref{lem:arrival-simp,lem:completion-simp} simplify the arrivals and completions drift terms.
    \item \cref{lem:arrival-expectation,lem:completion-expect} compute the steady-state expectations of the arrivals and completions drift terms. 
    \item In \cref{sec:main-result-proof}, we prove \cref{thm:e-q-exact-mams} using our expected drift results and the fact that expected drift in steady state is zero to characterize mean queue length $\Ep[Q]$.
\end{itemize}

\cref{lem:arrival-simp,lem:completion-simp} collectively generalize \cref{lem:G-f-Delta-two-level} from the two-level system to the MAMS system,
while \cref{lem:arrival-expectation,lem:completion-expect,thm:e-q-exact-mams} collectively generalize \cref{thm:e-q-exp-two}.

\subsection{Key lemmas}

We start by deriving an initial formula for the drift of our test function $f_{\Delta_A,\Delta_C}$,
which we introduced in \cref{sec:test-func}.
We separate this formula into two terms: An arrivals term $g_A$ and a completions term $g_C$.

\begin{lemma}[Formula for drift of $f_{\Delta_A,\Delta_C}$]
    For any MAMS system and any system state $(q, i_A, i_C)$,
    \label{lem:generator-expand}
    \begin{align}
    \nonumber
    &G \circ f_{\Delta_A,\Delta_C}(q, i_A, i_C) = g_A(q, i_A, i_C) + g_C(q, i_A, i_C),\\
    \label{eq:g-f-arrival-term}
    &\text{where } g_A(q, i_A, i_C) := \\
    \nonumber
    &\sum_{j_A\in\arrSet, a} r_{i_A, j_A, a} \left((q+a)(q + a + 2\Delta_A(j_A) - 2\Delta_C(i_C)) - q(q + 2\Delta_A(i_A) - 2\Delta_C(i_C))\right), \\
    \label{eq:g-f-completion-term}
    &\text{and where } g_C(q, i_A, i_C) := \\
    \nonumber
    &\sum_{j_C\in\compSet, c} s_{i_C, j_C, c} \left((q-c+u)(q-c+u + 2\Delta_A(i_A) - 2\Delta_C(j_C)) - q(q + 2\Delta_A(i_A) - 2\Delta_C(i_C))\right).  
\end{align}
\end{lemma}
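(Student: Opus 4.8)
The plan is to obtain this formula as a direct specialization of the generic MAMS drift formula, \cref{lem:g-f-formula-mams}, to the particular test function $f_{\Delta_A,\Delta_C}(q, i_A, i_C) = q\bigl(q + 2\Delta_A(i_A) - 2\Delta_C(i_C)\bigr)$. Recall that \cref{lem:g-f-formula-mams} writes $G \circ f(q, i_A, i_C)$ as a sum over arrival transitions $(j_A, a)$ weighted by $r_{i_A,j_A,a}$ plus a sum over completion transitions $(j_C, c)$ weighted by $s_{i_C,j_C,c}$, where each summand is $f(\text{post-transition state}) - f(q, i_A, i_C)$. So the first step is simply to record the post-transition state for each transition type: an arrival transition sends $(q, i_A, i_C)$ to $(q+a, j_A, i_C)$ — the arrival chain moves, the completion chain is unchanged — while a completion transition sends it to $(q - c + u, i_A, j_C)$ with $u = (c-q)^+$ the unused service.

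The second step is to substitute $f_{\Delta_A,\Delta_C}$ evaluated at these post-transition states. For an arrival transition this yields $(q+a)\bigl(q+a + 2\Delta_A(j_A) - 2\Delta_C(i_C)\bigr)$ minus the base value, which is exactly the summand defining $g_A$ in \eqref{eq:g-f-arrival-term}; for a completion transition it yields $(q-c+u)\bigl(q-c+u + 2\Delta_A(i_A) - 2\Delta_C(j_C)\bigr)$ minus the base value, which is exactly the summand defining $g_C$ in \eqref{eq:g-f-completion-term}. Naming the two resulting sums $g_A$ and $g_C$ gives the claimed decomposition $G \circ f_{\Delta_A,\Delta_C} = g_A + g_C$. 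No simplification of the quadratic differences is attempted here; that is the job of \cref{lem:arrival-simp,lem:completion-simp}.

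The one place to be careful is the bookkeeping around the unused-service term $u = (c-q)^+$ in the completion sum: it depends on the summation variable $c$, equalling $0$ whenever $c = 0$ (since $q \ge 0$) and equalling $\indicator{q=0}$ when $c = 1$, so unused service only appears on a genuine completion event out of an empty queue. This parallels the $\mu$-term in \cref{lem:G-f-two-level}, and there is no analogous term in the arrival sum because an arrival changes $q$ by $a \in \{0,1\}$ and so never drives it negative. Since the statement is a pure substitution into an already-proved formula, there is no genuine mathematical obstacle; the only thing that could go wrong is a transcription slip in which chain state the $\Delta_A$ and $\Delta_C$ arguments should track after each transition, so the proof should make the ``arrival moves only the arrival chain, completion moves only the completion chain'' point explicit and then the lemma is immediate.
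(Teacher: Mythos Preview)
Your proposal is correct and takes exactly the same approach as the paper: the paper's proof is the single line ``Follows immediately from \cref{lem:g-f-formula-mams},'' and your plan is precisely to substitute $f_{\Delta_A,\Delta_C}$ into that generic drift formula, which is what the paper is (tersely) doing. Your additional bookkeeping remarks about which chain moves on each transition and about $u=(c-q)^+$ are correct and simply make explicit what the paper leaves implicit.
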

\begin{proof}
    Follows immediately from \cref{lem:g-f-formula-mams}.
\end{proof}

Now, we simplify the $g_A$ term.

\begin{lemma}
    \label{lem:arrival-simp}
    For any MAMS system and any system state $(q, i_A, i_C)$,
    \begin{align}
        g_A(q, i_A, i_C) = 2 \lambda q + (1-2\Delta_C(i_C)) \lambda_{i_A} + 2 \sum_{j_A\in\arrSet} r_{i_A, j_A, 1}\Delta_A(j_A). 
    \end{align}
\end{lemma}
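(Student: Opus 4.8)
The plan is to start from the definition of $g_A$ given in \eqref{eq:g-f-arrival-term} of \cref{lem:generator-expand} and expand the summand algebraically. For a fixed transition $(i_A, j_A, a)$, the summand is
\[
(q+a)(q+a+2\Delta_A(j_A)-2\Delta_C(i_C)) - q(q+2\Delta_A(i_A)-2\Delta_C(i_C)).
\]
I would expand both products and collect terms. The quadratic $q^2$ terms cancel. What remains is a term linear in $q$ with coefficient $2a + 2\Delta_A(j_A) - 2\Delta_A(i_A)$ (the $\Delta_C(i_C)$ contributions to the linear coefficient cancel because $i_C$ is the same on both sides), plus a $q$-independent term $a^2 + 2a\Delta_A(j_A) - 2a\Delta_C(i_C)$. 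Since $a \in \{0,1\}$, we have $a^2 = a$, so the constant term is $a(1 + 2\Delta_A(j_A) - 2\Delta_C(i_C))$, and the linear coefficient involving $a$ is just $2a$.

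Next I would sum over $j_A$ and $a$. The key simplifications rely on the aggregation conventions from \cref{sec:arrival-def-mams}: $\sum_{j_A, a} r_{i_A,j_A,a}\, a = r_{i_A, \cdot, 1} = \lambda_{i_A}$, so the $2aq$ contributions sum to $2\lambda_{i_A} q$, the $a(1 - 2\Delta_C(i_C))$ contributions sum to $(1 - 2\Delta_C(i_C))\lambda_{i_A}$, and the $2a\Delta_A(j_A)$ contributions sum to $2\sum_{j_A} r_{i_A,j_A,1}\Delta_A(j_A)$. That leaves the $q$-linear term coming from $2(\Delta_A(j_A) - \Delta_A(i_A))$, namely $2q\sum_{j_A,a} r_{i_A,j_A,a}(\Delta_A(j_A) - \Delta_A(i_A))$. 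Here is where I invoke \cref{cor:effect-arr-comp-on-rel}: by \cref{lem:g-f-formula-mams} applied to the test function $\Delta_A$ (which only depends on the arrival state, so the completion sum vanishes), $\sum_{j_A,a} r_{i_A,j_A,a}(\Delta_A(j_A) - \Delta_A(i_A)) = G\circ\Delta_A(i_A) = \lambda - \lambda_{i_A}$. Therefore the total $q$-linear contribution is $2\lambda_{i_A} q + 2(\lambda - \lambda_{i_A})q = 2\lambda q$, which is exactly the state-independent slope we want. Collecting everything yields the claimed formula.

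The main obstacle — really the only subtle point — is making sure the $\Delta_C(i_C)$ dependence is handled correctly: it appears inside $g_A$, but because the arrival transitions don't change $i_C$, it only enters the constant term via the $a\cdot(-2\Delta_C(i_C))$ piece and never affects the $q$-linear coefficient. Everything else is routine expansion plus the two bookkeeping identities $\sum_{j_A,a} r_{i_A,j_A,a} a = \lambda_{i_A}$ and $G\circ\Delta_A(i_A) = \lambda - \lambda_{i_A}$. I would present the computation as: (i) expand the generic summand into $q$-linear plus $q$-free parts using $a^2 = a$; (ii) sum over $j_A, a$, using the aggregation notation; (iii) apply \cref{cor:effect-arr-comp-on-rel} to collapse the $y$-dependent slope to $2\lambda q$; (iv) read off the result.
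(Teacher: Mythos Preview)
Your proposal is correct and follows essentially the same route as the paper: expand the summand into a $q$-linear part $2q(a + \Delta_A(j_A) - \Delta_A(i_A))$ and a $q$-free part, then invoke \cref{cor:effect-arr-comp-on-rel} to collapse the $q$-linear coefficient to $2\lambda$. The only cosmetic difference is that the paper keeps $a + \Delta_A(j_A) - \Delta_A(i_A)$ together and applies the sum form of the corollary directly, whereas you split off the $2aq$ contribution first and use the equivalent $G\circ\Delta_A(i_A) = \lambda - \lambda_{i_A}$ form; both are the same identity.
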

\begin{proof}
We start by simplifying $g_A(q,i_A,i_C)$. Recall the definition of $g_A$, \eqref{eq:g-f-arrival-term}:
\begin{align}
    \label{eq:g-f-arrival-restate}
    &g_A(q, i_A, i_C) = %\\
    %\nonumber
    \sum_{j_A, a} r_{i_A, j_A, a} \left((q+a)(q + a + 2\Delta_A(j_A) - 2\Delta_C(i_C)) - q(q + 2\Delta_A(i_A) - 2\Delta_C(i_C))\right).
\end{align}
Let us simplify the bracketed term inside the summation:
\begin{align}
    \nonumber
    &(q+a)(q + a + 2\Delta_A(j_A) - 2\Delta_C(i_C)) - q(q + 2\Delta_A(i_A) - 2\Delta_C(i_C)) \\
    \label{eq:simplify-bracketed}
    &= 2q(a + \Delta_A(j_A) - \Delta_A(i_A)) + a(a-2\Delta_C(i_C)+2\Delta_A(j_A)). 
\end{align}

Let us substitute \eqref{eq:simplify-bracketed} back into \eqref{eq:g-f-arrival-restate}:
\begin{align}
    \label{eq:g-A-q-term}
    g_A(q, i_A, i_C) =\,& 2 q \sum_{j_A, a} r_{i_A, j_A, a} (a + \Delta_A(j_A) - \Delta_A(i_A)) \\
    \label{eq:g-A-a-term}
    &+ \sum_{j_A, a} r_{i_A, j_A, a} a(a-2\Delta_C(i_C)+2\Delta_A(j_A)).
\end{align}

We will separately simplify \eqref{eq:g-A-q-term} and \eqref{eq:g-A-a-term}.

For \eqref{eq:g-A-q-term}, we prove in \cref{cor:effect-arr-comp-on-rel} that for any initial state $i_A$,
\begin{align*}
    \sum_{j_A, a} r_{i_A, j_A, a} (a + \Delta_A(j_A) - \Delta_A(i_A)) = \lambda.
\end{align*}
This follows from the definition of the relative arrivals function $\Delta_A$.

For \eqref{eq:g-A-a-term}, note that $a \in \{0, 1\}$:
\begin{align*}
    &\sum_{j_A, a} r_{i_A, j_A, a} a(a-2\Delta_C(i_C)+2\Delta_A(j_A))
    = (1-2\Delta_C(i_C)) \lambda_{i_A} + 2 \sum_{j_A} r_{i_A, j_A, 1}\Delta_A(j_A).
\end{align*}

Combining our simplifications, we find that
\begin{align*}
    g_A(q, i_A, i_C) &= 2 \lambda q + (1-2\Delta_C(i_C)) \lambda_{i_A} + 2 \sum_{j_A} r_{i_A, j_A, 1}\Delta_A(j_A).
    \qedhere
\end{align*}
\end{proof}

Now, we characterize the expectation of $g_A$ over the steady state of the MAMS system.

\begin{lemma}[Mean of arrivals term $g_A$]
    \label{lem:arrival-expectation}
    For any MAMS system,
    \begin{align}
        \Ep[g_A(Q, \Yarr, \Ycomp)] = \lambda (2\Ep[Q] + 2\Ep[\Delta_A(\Yarr^{arrival})] + 1).
    \end{align}
\end{lemma}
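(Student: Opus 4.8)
The plan is to integrate the pointwise formula for $g_A$ supplied by \cref{lem:arrival-simp} against the stationary distribution of $(Q,\Yarr,\Ycomp)$ and simplify the resulting three terms one at a time. Concretely, \cref{lem:arrival-simp} states
\[
    g_A(q,i_A,i_C) = 2\lambda q + (1-2\Delta_C(i_C))\lambda_{i_A} + 2\sum_{j_A\in\arrSet} r_{i_A,j_A,1}\Delta_A(j_A),
\]
so by linearity of expectation
\[
    \Ep[g_A(Q,\Yarr,\Ycomp)] = 2\lambda\Ep[Q] + \Ep\big[(1-2\Delta_C(\Ycomp))\lambda_{\Yarr}\big] + 2\,\Ep\Big[\sum_{j_A\in\arrSet} r_{\Yarr,j_A,1}\Delta_A(j_A)\Big].
\]
The first term is already in the form we want, so the work lies in the remaining two.

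For the middle term I would use that the arrival chain and the completion chain evolve as independent Markov processes (the rates $r_{i,j,a}$ do not depend on the completion state, and the rates $s_{i,j,c}$ do not depend on the arrival state), so the stationary random variables $\Yarr$ and $\Ycomp$ are independent. Hence the cross term factors, $\Ep[\Delta_C(\Ycomp)\lambda_{\Yarr}] = \Ep[\Delta_C(\Ycomp)]\,\Ep[\lambda_{\Yarr}]$, and it vanishes since $\Ep[\Delta_C(\Ycomp)] = 0$ by \cref{lem:mean-delta}; combined with $\Ep[\lambda_{\Yarr}] = \lambda$ (the definition of $\lambda$ as the long-run arrival rate), the middle term collapses to $\lambda$. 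For the last term I would interchange the two finite sums and recognize the definition of $\Yarr^{arrival}$:
\[
    \Ep\Big[\sum_{j_A} r_{\Yarr,j_A,1}\Delta_A(j_A)\Big] = \sum_{j_A}\Delta_A(j_A)\sum_{i_A} P(\Yarr=i_A)\,r_{i_A,j_A,1} = \lambda\sum_{j_A}\Delta_A(j_A)\,P(\Yarr^{arrival}=j_A) = \lambda\,\Ep[\Delta_A(\Yarr^{arrival})].
\]
Adding the three contributions gives $2\lambda\Ep[Q] + \lambda + 2\lambda\Ep[\Delta_A(\Yarr^{arrival})] = \lambda(2\Ep[Q] + 2\Ep[\Delta_A(\Yarr^{arrival})] + 1)$, as claimed.

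The only step that is more than bookkeeping is the middle term: it relies on the independence of $\Yarr$ and $\Ycomp$ in stationarity, so that the $\Delta_C(\Ycomp)\lambda_{\Yarr}$ cross term factors and is killed by \cref{lem:mean-delta} — note that $Q$ remains correlated with both chains, only the two modulating chains are mutually independent. Everything else is linearity of expectation, a finite sum interchange, and substituting the definitions of $\lambda$ and $\Yarr^{arrival}$. I would also record in passing that $\Ep[g_A(Q,\Yarr,\Ycomp)]$ is finite: $g_A$ is affine in $q$ with coefficients bounded uniformly (since $\arrSet$ and $\compSet$ are finite and the relative-arrival and relative-completion functions converge), and $\Ep[Q]<\infty$, so every expectation above is well defined.
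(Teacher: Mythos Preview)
Your proof is correct and follows essentially the same approach as the paper: take expectations in the formula from \cref{lem:arrival-simp}, use independence of $\Yarr$ and $\Ycomp$ together with \cref{lem:mean-delta} to collapse the middle term to $\lambda$, and rewrite the last term via the definition of $\Yarr^{arrival}$. Your added remarks on why the independence holds and on finiteness are extra but harmless.
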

\begin{proof}
    Let us begin with the simplified expression for $g_A$ from \cref{lem:arrival-simp}:
    \begin{align}
    \nonumber
    g_A(q, i_A, i_C) &= 2 \lambda q + (1-2\Delta_C(i_C)) \lambda_{i_A} + 2 \sum_{j_A} r_{i_A, j_A, 1}\Delta_A(j_A). \\
    \label{eq:g-a-expectation-first}
    \Ep[g_A(Q, \Yarr, \Ycomp)] &= 2 \lambda \Ep[Q] + \Ep[(1-2\Delta_C(\Ycomp)) \lambda_{\Yarr}] \\
    &\quad + 2 \sum_{i_A, j_A} P(\Yarr=i_A) r_{i_A, j_A, 1}\Delta_A(j_A).
    \label{eq:g-a-expectation-second}
\end{align}

We simplify \eqref{eq:g-a-expectation-first} and \eqref{eq:g-a-expectation-second} separately. 
For \eqref{eq:g-a-expectation-first}, note that $\Yarr$ and $\Ycomp$ are independent,
as the arrival process and completions process update independently.
Note also that $\Ep[\Delta_C(\Ycomp))]=0$, as shown in \cref{lem:mean-delta},
and that $\Ep[\lambda_{\Yarr}] = \lambda$, which is the definition of $\lambda$.
Thus, we can simplify the second term of \eqref{eq:g-a-expectation-first}:
\begin{align}
    \label{eq:g-a-simplified-first}
    \Ep[(1-2\Delta_C(\Ycomp)) \lambda_{\Yarr}] = \Ep[(1-2\Delta_C(\Ycomp))] \Ep[\lambda_{\Yarr}]
    &= \lambda.
\end{align}

As for \eqref{eq:g-a-expectation-second},
recall from \cref{sec:arrival-def-mams} the definition of $\Yarr^{arrival}$, the arrival-rate-weighted arrival state:
\begin{align*}
    P(\Yarr^{arrival} = j) := \frac{\sum_{i\in\arrSet} P(\Yarr = i) r_{i,j,1}}{\lambda}.
\end{align*}

As a result, we can simplify \eqref{eq:g-a-expectation-second}:
\begin{align}
    %\nonumber
    \sum_{i_A, j_A} P(\Yarr=i_A) r_{i_A, j_A, 1}\Delta_A(j_A)
    &= \lambda \sum_{j_A} P(\Yarr^{arrival} = j_A) \Delta_A(j_A)% \\
    \label{eq:g-a-simplified-second}
    = \lambda \Ep[\Delta_A(\Yarr^{arrival})].
\end{align}
Substituting \eqref{eq:g-a-simplified-first} and \eqref{eq:g-a-simplified-second} into \eqref{eq:g-a-expectation-first}, we find that
\begin{align*}
    \Ep[g_A(Q, \Yarr, \Ycomp)] &= \lambda (2 \Ep[Q] + 2 \Ep[\Delta_A(\Yarr^{arrival})] + 1).
    \qedhere
\end{align*}
\end{proof}

Now, we switch to the completions term, $g_C$.
We start by simplifying $g_C$.
\cref{lem:completion-simp} mirrors \cref{lem:arrival-simp},
but with some additional complications due to unused service when $q=0$.

\begin{lemma}[Simplification of the completion term $g_C$]
For any MAMS system and any system state $(q, i_A, i_C)$,
    \label{lem:completion-simp}
    \begin{align}
        g_C(q, i_A, i_C) = -2\mu q + \Big((1 - 2\Delta_A(i_A))\mu_{i_C} + 2 \sum_{j_C\in\compSet} s_{i_C, j_C, 1} \Delta_C(j_C)\Big) \indicator{q > 0}. 
    \end{align}
\end{lemma}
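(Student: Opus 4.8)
The plan is to mirror the proof of \cref{lem:arrival-simp}, with two new wrinkles: the relative-completion term $\Delta_C$ enters the test function $f_{\Delta_A,\Delta_C}$ with a negative sign, and a completion transition \emph{decreases} $q$, so the unused-service term $u=(c-q)^+$ in \eqref{eq:g-f-completion-term} can be nonzero. Since $c\in\{0,1\}$, $u$ is nonzero only when $q=0$ and $c=1$, so the natural first move is to split into the cases $q=0$ and $q>0$.

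For $q=0$: in every summand of \eqref{eq:g-f-completion-term} the post-transition queue length is $q-c+u=0$ (for $c=0$ we get $u=0$, and for $c=1$ we get $u=1$), so both values of $f_{\Delta_A,\Delta_C}$ inside the bracket have the form $0\cdot(\,\cdots)=0$. Hence $g_C(0,i_A,i_C)=0$, which matches the claimed formula because $-2\mu\cdot 0 + (\,\cdots)\indicator{0>0}=0$. For $q>0$: here $u=0$ since $c\le 1\le q$, so each bracket is $(q-c)(q-c+2\Delta_A(i_A)-2\Delta_C(j_C)) - q(q+2\Delta_A(i_A)-2\Delta_C(i_C))$. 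Expanding and using $c^2=c$, this equals $2q\bigl(\Delta_C(i_C)-\Delta_C(j_C)-c\bigr) + c\bigl(1-2\Delta_A(i_A)+2\Delta_C(j_C)\bigr)$, which splits $g_C$ into a linear-in-$q$ part and a constant part, exactly as in \cref{lem:arrival-simp}.

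For the linear part, the coefficient of $q$ is $-2\sum_{j_C,c} s_{i_C,j_C,c}\bigl(c+\Delta_C(j_C)-\Delta_C(i_C)\bigr)$. I would evaluate this sum using \cref{cor:effect-arr-comp-on-rel}: applying \cref{lem:g-f-formula-mams} to the function $(q,i_A,i_C)\mapsto\Delta_C(i_C)$ (which the arrival transitions leave fixed) gives $\sum_{j_C,c}s_{i_C,j_C,c}(\Delta_C(j_C)-\Delta_C(i_C)) = G\circ\Delta_C(i_C) = \mu-\mu_{i_C}$, and since $\sum_{j_C,c}s_{i_C,j_C,c}\,c = s_{i_C,\cdot,1}=\mu_{i_C}$, the bracketed sum is $\mu$; so the coefficient of $q$ is $-2\mu$. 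For the constant part, only $c=1$ contributes, giving $\sum_{j_C\in\compSet} s_{i_C,j_C,1}\bigl(1-2\Delta_A(i_A)+2\Delta_C(j_C)\bigr) = (1-2\Delta_A(i_A))\mu_{i_C} + 2\sum_{j_C\in\compSet} s_{i_C,j_C,1}\Delta_C(j_C)$. Combining with the $q=0$ case (where both parts vanish) yields the stated formula, the constant part carrying the factor $\indicator{q>0}$.

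The main obstacle here is bookkeeping around the unused service $u$, not anything conceptual: one must track that $u$ alters the post-transition queue length only in the empty-queue case and that, there, both values of $f_{\Delta_A,\Delta_C}$ collapse to $0$. Everything else is the same affine-simplification-plus-Poisson-equation pattern already used to prove \cref{lem:arrival-simp} for $g_A$.
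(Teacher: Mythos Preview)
Your proof is correct. The core ingredients are identical to the paper's proof: expanding the bracket, invoking \cref{cor:effect-arr-comp-on-rel} to collapse the $q$-coefficient to $-2\mu$, and restricting to $c=1$ for the constant part.

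The only organizational difference is where the unused-service bookkeeping happens. The paper carries $u=(c-q)^+$ symbolically through the expansion, obtaining a third cross-term $u(2q-2c+u+2\Delta_A(i_A)-2\Delta_C(j_C))$ which, after using $u=1\Leftrightarrow(c=1\wedge q=0)$, produces an $\indicator{q=0}$ contribution that cancels against the $c$-term to leave the $\indicator{q>0}$ factor. You instead case-split on $q$ at the outset and exploit the specific structure $f_{\Delta_A,\Delta_C}(q,\cdot,\cdot)=q\cdot(\cdots)$, which makes the $q=0$ case immediate and eliminates $u$ from the $q>0$ computation entirely. Your route is slightly cleaner for this particular test function; the paper's route is more mechanical and would generalize more directly to a test function not vanishing at $q=0$.
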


\begin{proof}
    Recall the definition of $g_C(q, i_A, i_C)$ from \cref{lem:g-f-formula-mams}: 
    \begin{align*}
        & g_C(q, i_A, i_C) := 
         \sum_{j_C, c} s_{i_C, j_C, c} \left((q-c+u)(q-c+u + 2\Delta_A(i_A) - 2\Delta_C(j_C)) - q(q + 2 \Delta_A(i_A) - 2 \Delta_C(i_C))\right),  
    \end{align*}
     where $u := (c-q)^+$ is the amount of unused service caused by a specific transition. Note that $u \in \{0, 1\}$.

     Let us simplify the bracketed term inside the summation:
     \begin{align}
        \nonumber
         &(q-c+u)(q-c+u + 2\Delta_A(i_A) - 2\Delta_C(j_C)) - q(q + 2 \Delta_A(i_A) - 2 \Delta_C(i_C))  \\
         \nonumber
         % &= q(-c+u-c+u + 2\Delta_A(i_A) - 2\Delta_C(j_C) - 2\Delta_A(i_A) + 2\Delta_C(i_C)) + (-c+u)(-c+u+2\Delta_A(i_A) - 2\Delta_C(j_C)) \\
         \nonumber
         &= 2q(-c+u - \Delta_C(j_C) + \Delta_C(i_C)) + (-c+u)(-c+u+2\Delta_A(i_A) - 2\Delta_C(j_C)) \\
         % \nonumber
         % &= 2q(-c - \Delta_C(j_C) + \Delta_C(i_C)) + (-c)(-c+2\Delta_A(i_A) - 2\Delta_C(j_C)) + 2qu -2cu+ u^2 + u(2\Delta_A(i_A) - 2\Delta_C(j_C)) \\
         %\nonumber
         &= -2q(c + \Delta_C(j_C) - \Delta_C(i_C)) + c(c - 2\Delta_A(i_A) + 2\Delta_C(j_C))%  \\
         \label{eq:g-c-bracket-simplify}
         + u (2q -2c + u + 2\Delta_A(i_A) - 2\Delta_C(j_C)). 
     \end{align}
     We substitute \eqref{eq:g-c-bracket-simplify} back to the definition of $g_C(q, i_A, i_C)$ and get 
     \begin{align}
         g_C(q, i_A, i_C) 
         \label{eq:gc-decomp-q-term}
         &= -2q \sum_{j_C, c} s_{i_C, j_C, c} (c + \Delta_C(j_C) - \Delta_C(i_C)) \\
         \label{eq:gc-decomp-c-term}
         &+ \sum_{j_C, c} s_{i_C, j_C, c} c(c - 2\Delta_A(i_A) + 2\Delta_C(j_C)) \\
         \label{eq:gc-decomp-u-term}
         &+ \sum_{j_C, c} s_{i_C, j_C, c} u (2q -2c + u + 2\Delta_A(i_A) - 2\Delta_C(j_C)). 
     \end{align}

     For the term in \eqref{eq:gc-decomp-q-term}, we prove in \cref{cor:effect-arr-comp-on-rel} that 
     \[
        \sum_{j_C, c} s_{i_C, j_C, c} (c +   \Delta_C(j_C) - \Delta_C(i_C)) = \mu. 
     \]
     so the term in \eqref{eq:gc-decomp-q-term} can be simplified as 
    \begin{align*}
        -2q \sum_{j_C, c} s_{i_C, j_C, c} (c + \Delta_C(j_C) - \Delta_C(i_C)) = -2\mu q. 
    \end{align*}

    For the term in \eqref{eq:gc-decomp-c-term}, note that $c \in \{0,1\}$, so 
    \begin{align*}
         \sum_{j_C, c} s_{i_C, j_C, c} c(c - 2\Delta_A(i_A) + 2\Delta_C(j_C))
         &= \sum_{j_C} s_{i_C, j_C, 1} (1 - 2\Delta_A(i_A) + 2\Delta_C(j_C)) \\
         &= (1 - 2 \Delta_A(i_A))\mu_{i_C} + 2 \sum_{j_C} s_{i_C, j_C, 1} \Delta_C(j_C). 
    \end{align*}

    For the term in \eqref{eq:gc-decomp-u-term}, recall that $u = (c-q)^+$, so $u\in\{0,1\}$, and $u = 1$ if and only if $c = 1$ and $q = 0$. We thus simplify the term in \eqref{eq:gc-decomp-u-term} as 
    \begin{align*}
        &\sum_{j_C, c} s_{i_C, j_C, c} u (2q -2c + u + 2\Delta_A(i_A) - 2\Delta_C(j_C)) 
        \\
        &= \sum_{j_C} s_{i_C, j_C, 1} (-1 + 2\Delta_A(i_A) - 2\Delta_C(j_C)) \indicator{q = 0} \\
        &= - \Big((1 - 2\Delta_A(i_A))\mu_{i_C}  + 2 \sum_{j_C} s_{i_C, j_C, 1} \Delta_C(j_C)\Big) \indicator{q = 0}. 
    \end{align*}

    Combining our simplifications, we find that 
    \[
        g_C(q, i_A, i_C)  = -2\mu q + \Big((1 - 2\Delta_A(i_A))\mu_{i_C}  + 2 \sum_{j_C} s_{i_C, j_C, 1} \Delta_C(j_C)\Big) \indicator{q > 0}. 
        \qedhere
    \]
\end{proof}

\begin{lemma}[Mean of completions term $g_C$]
    \label{lem:completion-expect}
    For any MAMS system,
    \begin{align*}
        %\nonumber
        &\Ep[g_C(Q, Y_A, Y_C)]% \\
        =  -2\mu \Ep[Q] + \lambda + 2\mu \Ep[\Delta_C(\Ycomp^{comp})] + 2 \Ep[h_{C0}(\Yarr, \Ycomp) \indicator{Q=0}],
    \end{align*}
    where $h_{C0}$ is defined as follows:
    \begin{align*}
        % &= -2\mu \Ep[Q] + \lambda + 2\mu \Ep[\Delta_C(\Ycomp^{comp})] + 2(\mu-\lambda) \Ep_U[\Delta_A(\Yarr) - \Delta_C(\Ycomp^{comp})]. \\
        h_{C0}(i_A, i_C) &= \mu_{i_C} \Delta_A(i_A) - \sum_{j_C\in\compSet} s_{i_C, j_C, 1} \Delta_C(j_C), \\
        \Ep[h_{C0}(\Yarr, \Ycomp) \indicator{Q=0}] &= (\mu-\lambda)\Ep_U[\Delta_A(\Yarr) - \Delta_C(\Ycomp^{comp})].
    \end{align*}
\end{lemma}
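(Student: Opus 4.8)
The plan is to take the stationary expectation of the closed form for $g_C$ in \cref{lem:completion-simp} and regroup it until the claimed shape appears, isolating the ``conditional-on-empty'' part. First I would note that the bracketed quantity in \cref{lem:completion-simp} equals $\mu_{i_C} - 2h_{C0}(i_A, i_C)$, so that
\[
g_C(q, i_A, i_C) = -2\mu q + \mu_{i_C}\indicator{q>0} - 2 h_{C0}(i_A, i_C)\indicator{q>0}.
\]
Taking expectations over the stationary $(Q, \Yarr, \Ycomp)$ and writing $\indicator{Q>0} = 1 - \indicator{Q=0}$ on the last term yields
\[
\Ep[g_C(Q, \Yarr, \Ycomp)] = -2\mu\Ep[Q] + \Ep[\mu_{\Ycomp}\indicator{Q>0}] - 2\Ep[h_{C0}(\Yarr, \Ycomp)] + 2\Ep[h_{C0}(\Yarr, \Ycomp)\indicator{Q=0}],
\]
so the work reduces to evaluating the two unconditional quantities $\Ep[\mu_{\Ycomp}\indicator{Q>0}]$ and $\Ep[h_{C0}(\Yarr, \Ycomp)]$.

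For $\Ep[\mu_{\Ycomp}\indicator{Q>0}]$ — the long-run rate of \emph{used} completions — I would invoke rate conservation (equivalently, apply \cref{lem:drift-lemma-mams} to the degree-one test function $f(q, i_A, i_C) = q$, whose drift is $\lambda_{i_A} - \mu_{i_C} + \mu_{i_C}\indicator{q=0}$ by \cref{lem:g-f-formula-mams}) to obtain $\Ep[\mu_{\Ycomp}\indicator{Q=0}] = \mu - \lambda$, and hence $\Ep[\mu_{\Ycomp}\indicator{Q>0}] = \lambda$. For $\Ep[h_{C0}(\Yarr, \Ycomp)] = \Ep[\mu_{\Ycomp}\Delta_A(\Yarr)] - \Ep\Big[\sum_{j_C\in\compSet} s_{\Ycomp, j_C, 1}\Delta_C(j_C)\Big]$, I would use that the arrival and completion chains are independent together with $\Ep[\Delta_A(\Yarr)] = 0$ (\cref{lem:mean-delta}) to kill the first term, and unfold the definition of the completion-weighted state $\Ycomp^{comp}$ to rewrite the second term as $\mu\Ep[\Delta_C(\Ycomp^{comp})]$ — exactly the manipulation used for the arrivals side in \cref{lem:arrival-expectation}. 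Substituting both evaluations collapses the display to
\[
\Ep[g_C(Q, \Yarr, \Ycomp)] = -2\mu\Ep[Q] + \lambda + 2\mu\Ep[\Delta_C(\Ycomp^{comp})] + 2\Ep[h_{C0}(\Yarr, \Ycomp)\indicator{Q=0}],
\]
which is the first claimed identity. The remaining identity $\Ep[h_{C0}(\Yarr, \Ycomp)\indicator{Q=0}] = (\mu-\lambda)\Ep_U[\Delta_A(\Yarr) - \Delta_C(\Ycomp^{comp})]$ is then just the definition of $\Ep_U$ from \cref{sec:completion-def-mams} (made precise in \cref{app:unused}) unwound, since $h_{C0}(i_A, i_C) = \mu_{i_C}\Delta_A(i_A) - \sum_{j_C\in\compSet} s_{i_C, j_C, 1}\Delta_C(j_C)$ matches the numerator there verbatim, with the normalization constant being the same $\Ep[\mu_{\Ycomp}\indicator{Q=0}] = \mu - \lambda$ computed above.

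I expect the main obstacle to be handling $\Ep[\mu_{\Ycomp}\indicator{Q>0}] = \lambda$ cleanly: one must be careful that the conserved rate is that of \emph{used} completions, not of all completions, and the rate-balance step should be justified rigorously rather than by hand-waving — which is why I would route it through the test function $f = q$ and \cref{lem:drift-lemma-mams} (this uses $\Ep[Q] < \infty$, which holds under the standing stability assumptions). Everything else is bookkeeping: the regrouping into $\mu_{i_C} - 2h_{C0}$, the split $\indicator{Q>0} = 1-\indicator{Q=0}$, independence of the two modulating chains, \cref{lem:mean-delta}, and the definitions of $\Ycomp^{comp}$ and $\Ep_U$.
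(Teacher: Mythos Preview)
Your proposal is correct and follows essentially the same route as the paper: both start from \cref{lem:completion-simp}, split via $\indicator{Q>0}=1-\indicator{Q=0}$, use independence of $\Yarr$ and $\Ycomp$ together with \cref{lem:mean-delta} and the definition of $\Ycomp^{comp}$ to evaluate the unconditional piece, and invoke $\Ep[\mu_{\Ycomp}\indicator{Q=0}]=\mu-\lambda$ (the paper cites \cref{lem:unused}, you derive it via the test function $f=q$, which is the same rate-balance argument). The only cosmetic difference is that you regroup the bracket as $\mu_{i_C}-2h_{C0}$ up front, whereas the paper carries the full expression and identifies $h_{C0}$ at the end.
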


\begin{proof}
    By the simplified formula of $g_C(q, i_A, i_C)$ in  \Cref{lem:completion-simp}, we have
    \begin{equation}
        \label{eq:g-c-expect-intermediate}
        \Ep[g_C(Q, \Yarr, \Ycomp)]
        =  -2\mu \Ep[Q] + \Ep[ \big((1 - 2\Delta_A(\Yarr))\mu_{\Ycomp} + 2 \sum_{j_C} s_{\Ycomp, j_C, 1} \Delta_C(j_C)\big) \indicator{Q > 0}].
    \end{equation}
    We can decompose the second term of \eqref{eq:g-c-expect-intermediate} using the fact that $\indicator{Q>0} = 1 - \indicator{Q=0}$. 
    Observe that
    \begin{align*}
        \Ep[(1 - 2\Delta_A(\Yarr))\mu_{\Ycomp} + 2 \sum_{j_C} s_{\Ycomp, j_C, 1} \Delta_C(j_C)] 
        &= \Ep[1-2\Delta_A(\Yarr)] \Ep[\mu_{\Ycomp}] + 2 \mu \Ep[\Delta_C(\Ycomp^{comp})] \\
        &= \mu + 2\mu \Ep[\Delta_C(\Ycomp^{comp})],
    \end{align*}
    where the first equality uses the independence of $\Yarr$ and $\Ycomp$, and the definition of $\Ycomp^{comp}$; and the second equality is due to $\Ep[\mu_{\Ycomp}] = \mu$ and $\Ep[\Delta_A(\Yarr)] = 0$. 
    Next, we deal with the term
    \begin{align*}
        &\Ep[ \big((1 - 2\Delta_A(\Yarr))\mu_{\Ycomp} + 2 \sum_{j_C} s_{\Ycomp, j_C, 1} \Delta_C(j_C)\big) \indicator{Q = 0}]  \\
        &= \Ep[\mu_{\Ycomp}\indicator{Q=0}] + \Ep[ \big( - 2\Delta_A(\Yarr)\mu_{\Ycomp} + 2 \sum_{j_C} s_{\Ycomp, j_C, 1} \Delta_C(j_C)\big) \indicator{Q = 0}] \\
        &= \mu - \lambda  -2 \Ep[h_{C0}(\Yarr, \Ycomp) \indicator{Q=0}],
    \end{align*} 
    where the last equality is due to $\Ep[\mu_{\Ycomp}\indicator{Q=0}] = \mu - \lambda$, which we prove in \Cref{lem:unused}. 
    Plugging the above calculations into \eqref{eq:g-c-expect-intermediate}, we get 
    \begin{align}
        %\nonumber
        \Ep[g_C(Q, \Yarr, \Ycomp)] %\\
        \nonumber
        &= -2\mu \Ep[Q] + \mu + 2\mu \Ep[\Delta_C(\Ycomp^{comp})] - \mu + \lambda  + 2 \Ep[h_{C0}(\Yarr, \Ycomp) \indicator{Q=0}] \\
        \nonumber
        &= -2\mu \Ep[Q] + \lambda + 2\mu \Ep[\Delta_C(\Ycomp^{comp})] + 2 \Ep[h_{C0}(\Yarr, \Ycomp) \indicator{Q=0}]. 
    \end{align} 
    Equivalently, we can write the last term $\Ep\left[h_{C0}(\Yarr, \Ycomp)\indicator{Q=0}\right]$ in terms of $\Ep_U[\cdot]$:
    \begin{align*}
        \Ep\left[h_{C0}(\Yarr, \Ycomp)\indicator{Q=0}\right]%\\
        &= \Ep[\mu_{\Ycomp} \Delta_A(\Yarr) \indicator{Q=0}]- \Ep[\sum_{j_C} s_{\Ycomp, j_C, 1} \Delta_C(j_C))\indicator{Q=0}] \\
        &= (\mu-\lambda)\Ep_U[\Delta_A(\Yarr)] - (\mu-\lambda)\Ep_U[\Delta_C(\Ycomp^{comp})] \\
        &= (\mu-\lambda)\Ep_U[\Delta_A(\Yarr) - \Delta_C(\Ycomp^{comp})]. \qedhere
    \end{align*}
\end{proof}

\subsection{Proof of main result: \cref{thm:e-q-exact-mams}}\label{sec:main-result-proof}

\begin{reptheorem}{thm:e-q-exact-mams}
    In the MAMS system, the mean queue length is given by:
    \begin{align*}
        &\Ep[Q] = \frac{\rho\Ep[\Delta_A(\Yarr^{arrival})] + \Ep[\Delta_C(\Ycomp^{comp})]+\rho}{1-\rho}
         + \Ep_U[\Delta_A(\Yarr) - \Delta_C(\Ycomp^{comp})],
    \end{align*}
    where recall the definition of $\Ep_U[\cdot]$ in \Cref{sec:completion-def-mams} that 
    %    where $\Ep_U[\cdot]$ was defined in \cref{sec:completion-def-mams}:
    \begin{align*}
        \Ep_U[\Delta_A(\Yarr)- \Delta_C(\Ycomp^{comp}))]  &= \frac{\Ep[(\mu_{\Ycomp} \Delta_A(\Yarr) - \sum_{j_C\in\compSet} s_{\Ycomp, j_C, 1} \Delta_C(j_C)))\indicator{Q=0}]}{\mu-\lambda}.% \\
        %\Ep_U[\Delta_C(\Ycomp^{comp})] &= \frac{\Ep[\sum_{j_C\in\compSet} s_{\Ycomp, j_C, 1} \Delta_C(j_C))\indicator{Q=0}]}{\mu-\lambda}.
    \end{align*}
    % \begin{align*}
    %     \Ep_U[\Delta_A(\Yarr)]  &= \frac{\Ep[\mu_{\Ycomp} \Delta_A(\Yarr) \indicator{Q=0}]}{\mu-\lambda} \\
    %     \Ep_U[\Delta_C(\Ycomp^{comp})] &= \frac{\Ep[\sum_{j_C\in\compSet} s_{\Ycomp, j_C, 1} \Delta_C(j_C))\indicator{Q=0}]}{\mu-\lambda}.
    % \end{align*}
\end{reptheorem}

\begin{proof}
    Recall \Cref{lem:generator-expand}, $
        G \circ f_{\Delta_A,\Delta_C}(q, i_A, i_C) = g_A(q, i_A, i_C) + g_C(q, i_A, i_C)
    $.
    Because $\Ep[G\circ f_{\Delta_A,\Delta_C}(Q, \Yarr, \Ycomp)] = 0$ by \cref{lem:drift-lemma-mams}, we have
    \[
        \Ep[g_A(Q, \Yarr \Ycomp)] + \Ep[g_C(Q, \Yarr, \Ycomp)] = 0.
    \]
    Using \Cref{lem:arrival-expectation,lem:completion-expect}, we get
    \begin{align*}
        & \lambda (2\Ep[Q] + 2\Ep[\Delta_A(\Yarr^{arrival})] + 1)\\
        &-2\mu \Ep[Q] + \lambda + 2\mu \Ep[\Delta_C(\Ycomp^{comp})] + 2(\mu-\lambda)\Ep_U[\Delta_A(\Yarr) - \Delta_C(\Ycomp^{comp})] = 0. 
    \end{align*}
    Rearranging the terms, 
    \begin{align*}
        2(\mu - \lambda) \Ep[Q] &= 2\lambda \Ep[\Delta_A(\Yarr^{arrival})] + 2\mu \Ep[\Delta_C(\Ycomp^{comp})] + 2\lambda + 2(\mu-\lambda)\Ep_U[\Delta_A(\Yarr) - \Delta_C(\Ycomp^{comp})] \\
        \Ep[Q] &= 
        \frac{\lambda \Ep[\Delta_A(\Yarr^{arrival})] + \mu \Ep[\Delta_C(\Ycomp^{comp})] + \lambda}{\mu-\lambda} + \Ep_U[\Delta_A(\Yarr) - \Delta_C(\Ycomp^{comp})] \\
        &= \frac{\rho \Delta_A(\Yarr^{arrival}) + \Delta_C(\Ycomp^{comp})
        +\rho}{1-\rho} + \Ep_U[\Delta_A(\Yarr) - \Delta_C(\Ycomp^{comp})].
        \qedhere
    \end{align*}
\end{proof}

\section{Bounds on two-level system when $Q=0$}
\label{sec:two-level-bounds}

We proved \cref{thm:e-q-exact-mams}, a general result characterizing mean queue length in the MAMS system.
As a corollary, in \cref{cor:e-q-bounds-mams}, we proved explicit bounds on mean queue length in the MAMS system. 
In the two-level arrival systems, stronger results can be obtained because we know more about the dynamics of the system. Specifically, recall that we proved \cref{thm:e-q-exp-two}, which characterized mean queue length $\Ep[Q]$, with the only remaining uncertainty being the probability $P(Y=H \mid Q = 0)$ that the system is in the $H$ arrival state while the queue is empty. 

In \cite{yechiali_queuing_1971}, it is shown that the exact value of this probability can be obtained by solving a cubic equation, which leads to the exact value of the mean queue length. 
In this section, we focus on proving explicit upper bounds on $P(Y=H \mid Q = 0)$, as stated in \cref{thm:two-level-bounds}. These bounds complement the straightforward lower bound $P(Y=H \mid Q = 0) \ge 0$. 

\begin{theorem}
    \label{thm:two-level-bounds}
    In the two-level arrival system, $P(Y=H \mid Q=0)$
    is bounded as follows:
    \begin{align}
        \label{eq:rep-fast-bound}
        P(Y = H \mid Q=0) &\le P(Y=H) = \frac{\alpha_L}{\alpha_L+\alpha_H}.
    \end{align}
    Additionally, if the system is intermittently overloaded ($\lambda_H > \mu$),
    the following bound holds:
    \begin{align}
        \label{eq:rep-slow-bound}
        P(Y = H \mid Q=0) &\le \frac{1}{1-\rho}\frac{1}{\lambda_H - \mu}\frac{\alpha_H \alpha_L}{\alpha_H + \alpha_L}.
    \end{align}
    % \label{thm:two-level-bounds}
    % In the two-level arrival system, $P(Y=H \mid Q=0)$
    % is bounded as follows:
    % \begin{enumerate}[(a)]
    %     \item \label{item:two-level-general}
    %     In the general two-level arrival system,
    % \begin{align}
    %     \label{eq:rep-fast-bound}
    %     P(Y = H \mid Q=0) &\le P(Y=H) = \frac{\alpha_L}{\alpha_L+\alpha_H}.
    % \end{align}
    %     \item \label{item:two-level-overload}
    %     If the system is intermittently overloaded ($\lambda_H > \mu$),
    % the following bound holds:
    % \begin{align}
    %     \label{eq:rep-slow-bound}
    %     P(Y = H \mid Q=0) &\le \frac{1}{1-\rho}\frac{1}{\lambda_H - \mu}\frac{\alpha_H \alpha_L}{\alpha_H + \alpha_L}.
    % \end{align}
    %     \item \label{item:two-level-underload}
    %     If the system is always underloaded,
    % ($\lambda_H < \mu$),
    % the following bounds hold:
    % \begin{align}
    %     \label{eq:underload-lower}
    %     P(Y = H \mid Q = 0) &\ge \frac{1-\rho_H}{1-\rho}\frac{\alpha_L}{\alpha_L + \alpha_H} \\
    %     \label{eq:underload-upper}
    %     P(Y = H \mid Q = 0) &\le
    %     \frac{1-\rho_H}{1-\rho}\frac{\alpha_L}{\alpha_H + \alpha_L}
    %     + \frac{\lambda_H}{\mu^2(1-\rho_H)^2(1-\rho)} \frac{\alpha_H \alpha_L}{\alpha_H + \alpha_L}
    % \end{align}
    % \end{enumerate}
\end{theorem}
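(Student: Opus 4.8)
The plan splits cleanly along the two bounds, which are proved by genuinely different means. For \eqref{eq:rep-fast-bound} I would reduce the claim, via Bayes' rule, to a stochastic-dominance statement about the stationary queue and then prove that by a monotone coupling. For \eqref{eq:rep-slow-bound} I would run the drift method of \cref{sec:drift} with a carefully chosen bounded (geometric) test function. Both arguments use that $P(Q=0)=1-\rho$ by \cref{lem:unused}, and the convention $\lambda_H>\lambda_L$ (which for \eqref{eq:rep-slow-bound} is automatic, since $\lambda_H>\mu$ together with stability $\lambda<\mu$ forces $\lambda_L\le\lambda<\mu<\lambda_H$).

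For \eqref{eq:rep-fast-bound}: since $P(Q=0)=P(Q=0\mid Y=H)P(Y=H)+P(Q=0\mid Y=L)P(Y=L)$ is a convex combination of $P(Q=0\mid Y=H)$ and $P(Q=0\mid Y=L)$, the inequality $P(Y=H\mid Q=0)\le P(Y=H)$ is equivalent (rearranging the Bayes identity $P(Y=H\mid Q=0)=P(Q=0\mid Y=H)P(Y=H)/P(Q=0)$) to $P(Q=0\mid Y=H)\le P(Q=0\mid Y=L)$. So it suffices to show the stationary queue conditioned on the high arrival state stochastically dominates the queue conditioned on the low state, i.e.\ $\mathcal L(Q\mid Y=H)\succeq_{st}\mathcal L(Q\mid Y=L)$. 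I would prove this by coupling two stationary copies of the system that share a single rate-$\mu$ service clock and have nested arrival streams (every arrival to the ``low'' copy is also an arrival to the ``high'' copy), whose arrival-state processes are coupled so that $Y^{(H)}(t)\ge Y^{(L)}(t)$ for all $t$; such an ordering of the two-state arrival chain can be maintained both forward and backward in time because a two-state CTMC is stochastically monotone and reversible. Given the ordered arrival states, the standard monotone coupling of the queue preserves $Q^{(H)}(t)\ge Q^{(L)}(t)$ (a service event can only push $Q^{(H)}$ down to $\max(Q^{(H)}-1,0)\ge Q^{(L)}$, and the extra arrivals only raise $Q^{(H)}$), and evaluating at $t=0$ gives the dominance. (Alternatively this stochastic ordering can be quoted from the existing literature on the two-level system.)

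For \eqref{eq:rep-slow-bound}, assume $\lambda_H>\mu$ and apply \cref{lem:drift-lemma-two} with the bounded test function $f(q,y):=\indicator{y=H}\,r^{q}$ where $r:=\mu/\lambda_H\in(0,1)$. Substituting $f$ into \cref{lem:G-f-two-level} and using $\lambda_H r=\mu$ and $\mu/r=\lambda_H$ to cancel the arrival and service contributions on $\{q\ge1\}$, one gets $G\circ f(q,H)=-\alpha_H r^{q}+(\mu-\lambda_H)\indicator{q=0}$ and $G\circ f(q,L)=\alpha_L r^{q}$. Since $f$ is bounded, $\Ep[G\circ f(Q,Y)]=0$, which rearranges to
\[
  (\lambda_H-\mu)\,P(Q=0,\,Y=H)\;=\;\alpha_L\,\Ep[r^{Q}\indicator{Y=L}]\;-\;\alpha_H\,\Ep[r^{Q}\indicator{Y=H}].
\]
Now bound the right-hand side from above: drop the nonnegative term $\alpha_H\Ep[r^{Q}\indicator{Y=H}]$, and use $0\le r^{Q}\le1$ to get $\Ep[r^{Q}\indicator{Y=L}]\le P(Y=L)=\alpha_H/(\alpha_H+\alpha_L)$. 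Dividing by $\lambda_H-\mu>0$ gives $P(Q=0,Y=H)\le\frac{1}{\lambda_H-\mu}\frac{\alpha_H\alpha_L}{\alpha_H+\alpha_L}$, and dividing once more by $P(Q=0)=1-\rho$ yields exactly \eqref{eq:rep-slow-bound}.

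The main obstacle is the stochastic-dominance step behind \eqref{eq:rep-fast-bound}: making the two-sided (forward and backward in time) coupling of the stationary modulating chain fully rigorous, and confirming the coupled queue processes stay ordered, needs care — in particular one must be careful that conditioning on the time-$0$ arrival state is handled correctly, since $Q(0)$ and $Y(0)$ are themselves correlated. By contrast, bound \eqref{eq:rep-slow-bound} is a short self-contained drift computation once one guesses the test function $r^{q}$ with $r=\mu/\lambda_H$; the entire point of that choice is that it kills the arrival-plus-service part of the drift identically on $\{q\ge1\}$, leaving only the switching term and a clean $\indicator{q=0}$ boundary term whose coefficient carries the $\lambda_H-\mu$ factor.
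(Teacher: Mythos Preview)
Your proposal is correct but takes genuinely different routes from the paper on both parts.

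For \eqref{eq:rep-fast-bound}, both you and the paper reduce to the same key inequality $P(Q=0\mid Y=H)\le P(Q=0\mid Y=L)$. The paper does not argue by coupling; instead it quotes results of \citet{gupta_fundamental_2006}: their Theorem~3 identifies $P(Q=0\mid Y=y)$ with the probability $\pi_0^y$ that the queue is empty at the switching instant out of state $y$, and the proofs of their Theorems~6--7 show $\pi_0^L/\pi_0^H=(\mu-\chi\lambda_L)/(\mu-\chi\lambda_H)\ge 1$ for some $\chi\in(0,1)$. Your monotone-coupling argument (reversibility of the two-state modulating chain plus a Loynes-type construction of $Q(0)$ from the infinite past with shared service clocks and thinned arrivals) is a valid self-contained alternative, though as you note the ``stationary past'' step needs care; the paper's route is shorter precisely because it outsources this to the cited work.

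For \eqref{eq:rep-slow-bound}, the paper does \emph{not} use the drift method at all: it argues by renewal--reward, bounding the expected time the queue is empty during each $H$-interval by the total expected empty time $1/(\lambda_H-\mu)$ of an overloaded M/M/1 started empty, and then dividing by the mean switching-cycle length $1/\alpha_H+1/\alpha_L$. Your drift computation with $f(q,y)=\indicator{y=H}(\mu/\lambda_H)^q$ is a genuinely different and rather elegant proof: the choice $r=\mu/\lambda_H$ exactly cancels the birth--death contribution on $\{q\ge 1,\,y=H\}$, leaving only the switching term and the boundary term. I checked your generator calculation and the subsequent bounding and they are correct. Your argument is shorter, fully rigorous, and arguably more in keeping with the paper's own drift-method framework than the paper's renewal--reward proof.
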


We will prove \cref{thm:two-level-bounds} in \cref{sec:proof-two-level}.

These bounds are tight in different asymptotic regimes: \eqref{eq:rep-fast-bound} is tight in the $\alpha_H,\alpha_L \to \infty$ limit, where the arrival state switches rapidly,
while \eqref{eq:rep-slow-bound} %, \eqref{eq:underload-lower}, and \eqref{eq:underload-upper} are
is tight in the $\alpha_H,\alpha_L \to 0$ limit, as the arrival state switches slowly.
These bounds complement our heavy-traffic result for the two-level system, \cref{cor:heavy-traffic-e-q-two-level}.

Using these bounds alongside \cref{thm:e-q-exp-two},
we can derive tight bounds on mean queue length in the two-level arrival system:
\begin{corollary}
    \label{cor:two-level-e-q-bounds}
    In the two-level arrivals system  with exponential service rate,
    the mean queue length is bounded as follows:
    \begin{align}
        \label{eq:two-level-e-q-lower}
        \Ep[Q] &\ge \frac{\rho}{1-\rho}\left( \frac{(\lambda_H - \lambda_L)^2 \alpha_L \alpha_H}{\lambda (\alpha_H+\alpha_L)^3} + 1 \right)
        - \frac{\lambda_H - \lambda_L}{\alpha_H + \alpha_L} \frac{\alpha_L}{\alpha_H + \alpha_L}, \\
        \label{eq:two-level-e-q-upper-fast}
        \Ep[Q] &\le \frac{\rho}{1-\rho}\left( \frac{(\lambda_H - \lambda_L)^2 \alpha_L \alpha_H}{\lambda (\alpha_H+\alpha_L)^3} + 1 \right).
    \end{align}
    If the system is intermittently overloaded ($\lambda_H > \mu$), then additionally
    \begin{align}
        \label{eq:two-level-e-q-upper-slow}
        \Ep[Q] &\le \frac{\rho}{1-\rho}\left( \frac{(\lambda_H - \lambda_L)^2 \alpha_L \alpha_H}{\lambda (\alpha_H+\alpha_L)^3} + 1 \right)
        - \frac{\lambda_H - \lambda_L}{\alpha_H + \alpha_L} \frac{\alpha_L}{\alpha_H + \alpha_L}\left( 1
        -  \frac{\alpha_H}{(1-\rho)(\lambda_H - \mu)} \right).
    \end{align}
    % If the system is always underloaded ($\lambda_H < \mu$), then additionally
    % \begin{align}
    %     \label{eq:two-level-e-q-under-lower}
    %     %\Ep[Q] &\ge \frac{\rho}{1-\rho}\left( \frac{(\lambda_H - \lambda_L)^2 \alpha_L \alpha_H}{\lambda (\alpha_H+\alpha_L)^3} + 1 \right) - \frac{\lambda_H - \lambda_L}{\alpha_H + \alpha_L} \frac{\alpha_L}{\alpha_H + \alpha_L}\frac{\rho_H - \rho}{1-\rho} \\
    %     \Ep[Q] &\ge \frac{\rho}{1-\rho} \\
    %     \label{eq:two-level-e-q-under-upper}
    %     \Ep[Q] &\le \frac{\rho}{1-\rho} + \frac{\alpha_L \alpha_H}{(\alpha_H + \alpha_L)^2}\frac{\rho_H (\rho_H - \rho_L)}{(1-\rho_H)^2(1-\rho)}
    % \end{align}
\end{corollary}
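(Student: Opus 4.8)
The plan is to read the corollary off directly from our main two-level result, \cref{thm:e-q-exp-two}, combined with the bounds on $P(Y=H\mid Q=0)$ supplied by \cref{thm:two-level-bounds}. \cref{thm:e-q-exp-two} already writes $\Ep[Q]$ as the sum of a fully explicit primary term $\frac{\rho(\Ep[\Delta(Y^{arrival})]+1)}{1-\rho}$ with $\Ep[\Delta(Y^{arrival})]=\frac{(\lambda_H-\lambda_L)^2\alpha_L\alpha_H}{\lambda(\alpha_H+\alpha_L)^3}$, and a secondary term $\Ep[\Delta(Y)\mid Q=0]$. The primary term is exactly the expression $\frac{\rho}{1-\rho}\big(\frac{(\lambda_H-\lambda_L)^2\alpha_L\alpha_H}{\lambda(\alpha_H+\alpha_L)^3}+1\big)$ appearing in all three claimed bounds, so the entire content of the corollary lies in bounding the secondary term.

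First I would record that \cref{thm:e-q-exp-two} gives the secondary term the closed form $\Ep[\Delta(Y)\mid Q=0]=\frac{\lambda_H-\lambda_L}{\alpha_H+\alpha_L}\big(P(Y=H\mid Q=0)-\frac{\alpha_L}{\alpha_H+\alpha_L}\big)$, which is an affine, strictly increasing function of $P(Y=H\mid Q=0)$, since $\lambda_H>\lambda_L$ makes the prefactor $\frac{\lambda_H-\lambda_L}{\alpha_H+\alpha_L}$ positive. Consequently, any lower bound on the conditional probability yields a lower bound on $\Ep[Q]$, and any upper bound yields an upper bound; the three claimed inequalities correspond precisely to the three available bounds on $P(Y=H\mid Q=0)$.

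I would then substitute each bound in turn. For the lower bound \eqref{eq:two-level-e-q-lower}, I plug in the trivial $P(Y=H\mid Q=0)\ge 0$, which gives $\Ep[\Delta(Y)\mid Q=0]\ge -\frac{\lambda_H-\lambda_L}{\alpha_H+\alpha_L}\frac{\alpha_L}{\alpha_H+\alpha_L}$. For the first upper bound \eqref{eq:two-level-e-q-upper-fast}, I plug in \eqref{eq:rep-fast-bound}, namely $P(Y=H\mid Q=0)\le\frac{\alpha_L}{\alpha_H+\alpha_L}$; this makes the parenthesized difference vanish, showing $\Ep[\Delta(Y)\mid Q=0]\le 0$, so $\Ep[Q]$ is bounded by the primary term alone. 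For the intermittent-overload upper bound \eqref{eq:two-level-e-q-upper-slow}, I plug in \eqref{eq:rep-slow-bound} and simplify, factoring $\frac{\alpha_L}{\alpha_H+\alpha_L}$ out of the two resulting terms to obtain the stated form carrying the factor $\big(1-\frac{\alpha_H}{(1-\rho)(\lambda_H-\mu)}\big)$.

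The only real work is the short algebraic simplification in this last case, where the bound $\frac{1}{1-\rho}\frac{1}{\lambda_H-\mu}\frac{\alpha_H\alpha_L}{\alpha_H+\alpha_L}$ on $P(Y=H\mid Q=0)$ must be combined with the subtracted $\frac{\alpha_L}{\alpha_H+\alpha_L}$ term and regrouped into the displayed form; here I use the assumption $\lambda_H>\mu$ only to ensure \eqref{eq:rep-slow-bound} is a valid bound. This is routine bookkeeping rather than a genuine obstacle, so the corollary follows immediately from \cref{thm:e-q-exp-two} and \cref{thm:two-level-bounds} with no new machinery required.
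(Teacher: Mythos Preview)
Your proposal is correct and follows essentially the same approach as the paper: both combine \cref{thm:e-q-exp-two} with the trivial bound $P(Y=H\mid Q=0)\ge 0$ and the two bounds of \cref{thm:two-level-bounds} to obtain \eqref{eq:two-level-e-q-lower}, \eqref{eq:two-level-e-q-upper-fast}, and \eqref{eq:two-level-e-q-upper-slow} respectively. Your explicit observation that $\Ep[\Delta(Y)\mid Q=0]$ is affine and increasing in $P(Y=H\mid Q=0)$, and your algebraic check for the slow-switching case, simply spell out what the paper's proof leaves implicit.
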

\begin{proof}
    These bounds follow from combining \cref{thm:e-q-exp-two}
    with bounds on $P(Y=H \mid Q=0)$.
    \eqref{eq:two-level-e-q-lower} follows from the fact that $P(Y=H \mid Q=0) \ge 0$.
    \eqref{eq:two-level-e-q-upper-fast} follows from  \eqref{eq:rep-fast-bound}.
    \eqref{eq:two-level-e-q-upper-slow} follows from \eqref{eq:rep-slow-bound}.
    % \eqref{eq:two-level-e-q-under-upper} follows from \eqref{eq:underload-lower}, after algebraic simplification.
    % \eqref{eq:two-level-e-q-under-lower} follows from \eqref{eq:underload-upper}, after algebraic simplification.
\end{proof}

In the non-intermittently-overloaded case ($\lambda_H < \mu$),
prior work has proven basic bounds on $\Ep[Q]$.
\citet{gupta_fundamental_2006} proved that $\Ep[Q]$ is bounded below by
the mean queue length of a M/M/1
with arrival rate $\lambda$ and completion rate $\mu$,
and bounded above by a weighted mixture of two M/M/1s: one with arrival rate $\lambda_H$ and one with arrival rate $\lambda_L$.
While these bounds are not tight, it is challenging to use our techniques to prove tighter results in this setting, which we leave to future work.

\subsection{Proof of \cref{thm:two-level-bounds}}
\label{sec:proof-two-level}
\begin{proof}

We start by considering the general two-level arrival system, with no restriction on $\lambda_H$ and $\mu$. We will prove \eqref{eq:rep-fast-bound}:
\begin{align*}
    P(Y = H \mid Q=0) &\le P(Y=H) = \frac{\alpha_L}{\alpha_L+\alpha_H}.
\end{align*}

To prove this result, we make use of prior work by \citet{gupta_fundamental_2006}
    to bound the probability $P(Q=0 \mid Y=L)$,
    the probability that the queue is empty given that the arrival rate is low.

    Let $\pi_0^L$ be the probability that the queue is empty at the moment when the system switches from $L$ to $H$.
    By \cite[Theorem 3]{gupta_fundamental_2006},
    $P(Q=0 \mid Y=L) = \pi_0^L$.
    During the proof of \cite[Theorems 6 and 7]{gupta_fundamental_2006},
    it is proven that there exists some $\chi \in (0, 1)$ such that
    %\begin{align*}
        $\pi_0^L / \pi_0^H = (\mu-\chi \lambda_L) / (\mu-\chi \lambda_H).$
    %\end{align*}

    Because $\lambda_L \le \lambda_H$, we know that $\pi_0^L \ge \pi_0^H$,
    and hence that $P(Q=0 \mid Y=L) \ge P(Q=0 \mid Y=H)$.

    To quantify $P(Q=0)$, note that in all systems, the fraction of potential service events which go unused is $1-\rho$.
In the two-level arrival system, because the completion process is exponential,
it is therefore the case that $P(Q=0) = 1-\rho$.
    
    Thus, we know that
    $P(Q=0 \mid Y = L) \ge 1 - \rho$,
    and that $P(Q=0 \mid Y = H) \le 1 - \rho$.
    We can therefore bound $P(Y=H \mid Q = 0)$:
    \begin{align*}
        P(Y=L \mid Q = 0) &\ge P(Y = L) = \frac{\alpha_H}{\alpha_H + \alpha_L},
        \quad
        P(Y=H \mid Q = 0) \le P(Y = H) = \frac{\alpha_L}{\alpha_H + \alpha_L}.
    \end{align*}

Now, consider the intermittently overloaded two-level arrival system, where $\lambda_H > \mu$.
We will prove \eqref{eq:rep-slow-bound}:
\begin{align*}
    P(Y = H \mid Q=0) &\le \frac{1}{1-\rho}\frac{1}{\lambda_H - \mu}\frac{\alpha_H \alpha_L}{\alpha_H + \alpha_L}.
\end{align*}

The intuition behind this bound is that if $\lambda_H > \mu$,
the queue length grows throughout each $Y = H$ interval,
and hence the queue is rarely empty ($Q$ is rarely $0$) during such an interval.

To make this idea rigorous, consider an overloaded M/M/1 with arrival rate $\lambda_H$ greater than the service rate $\mu$, in which the queue is empty at time 0 ($Q(0) = 0$).
The expected total amount of time for which $Q(t) = 0$ in the overloaded M/M/1 system is finite: by a standard calculation, one can show that this expected duration is $1/(\lambda_H - \mu)$.
That amount of time is an upper bound on the expected amount of time for which $Q=0$
during each $Y = H$ interval in the two-level system.

% Let us calculate that total expected time in the overloaded M/M/1.
% Whenever the overloaded M/M/1 system visits the state $q=0$, it spends $1/\lambda_H$ expected time
% in that state, before transitioning to state $q=1$.
% The number of visits of the system to state $q=0$
% is distributed according to a geometric distribution with parameter $p_{1 \to 0}$,
% where $p_{1\to 0}$ is the probability of ever visiting $q=0$, starting from $q=1$.
% The mean number of visits is $1/(1-p_{1\to 0})$.
% Note that $p_{1 \to 0} < 1$, because the system is overloaded.

% To calculate $p_{1\to 0}$, we can condition on the next state visited after $q=1$.
% \begin{align*}
%     p_{1\to 0} = \frac{\mu}{\lambda_H + \mu} + \frac{\lambda_H}{\lambda_H + \mu} p_{2\to 1}p_{1\to 0}
% \end{align*}
% Note that $p_{1\to 0} = p_{2\to 1}$, as the chain is symmetrical. Solving, we find that $p_{1\to 0} = \mu/\lambda_H$
% (recalling that $\lambda_H > \mu$).

% Backing up, we find that the system performs an expected $1/(1-\mu/\lambda_H)$ visits to the state $q=0$,
% and spends $1/(\lambda_H - \mu)$ total time with $q=0$ in expectation.

% Comparing the overloaded M/M/1 against the two-level arrivals system, we find that the expected time that the system spends in the state $y = H, q=0$ during each $y = H$ interval
% is at most $1/(\lambda_H - \mu)$, regardless of the initial queue length at the beginning of the interval.
% A larger initial queue length and a finite interval length can only decrease the expected time for which the system is empty.

Thus, the expected time that two-level system spends in the state $y = H, q=0$ during each $y=H$ interval is at most $1/(\lambda_H - \mu)$, regardless of the initial queue length at the beginning of the interval.
A larger initial queue length and a finite interval length can only decrease the expected time for which the system is empty.

To bound $P(Y = H \land Q = 0)$, we use a renewal-reward argument, with two kinds of cycles:
A \emph{switching cycle}, consisting of a $y = H$ interval followed by a $y = L$ interval,
and a \emph{zero-switching cycle}, which is a renewal cycle with a renewal point whenever the system
changes from $y=L$ to $y = H$ at a moment when $Q=0$.
Let $Z$ be a random variable denoting the number of switching cycles within each zero-switching cycle. 
Note that $Z$ has a finite mean because the two-level system is positive recurrent with $\lambda < \mu$. 

We now apply the renewal-reward theorem \citep{harchol_performance_2013}, with the renewal period being the zero-switching cycles, and the reward being
the time spent in the state $Y = H, Q=0$:
\begin{align*}
    P(Y = H \land Q = 0) &= \frac{\Ep[\text{$Y=H \land Q=0$ duration during a zero-switching cycle}]}{\Ep[\text{Length of a zero-switching cycle}]}. 
\end{align*}
The lengths of the switching cycles within the zero-switching cycles are i.i.d. with expected length $\frac{1}{\alpha_H} + \frac{1}{\alpha_L}$, 
and the length of zero-switching cycle is a stopping time with mean $E[Z](\frac{1}{\alpha_H} + \frac{1}{\alpha_L})$.
The amount of time for which $Y=H \land Q=0$
during the switching cycles are correlated, but only via the length of the queue at the start of the cycle. Regardless of that queue length, the expected time for which $Y=H \land Q=0$ per switching cycle is at most $1/(\lambda_H - \mu)$. As a result,
\begin{align*}
    P(Y = H \land Q = 0) &\le \frac{\Ep[Z]/(\lambda_H - \mu)}{\Ep[Z](1/\alpha_H + 1/\alpha_L)}
    = \frac{1/(\lambda_H - \mu)}{1/\alpha_H + 1/\alpha_L}. 
\end{align*}

Recall that $P(Q=0) = 1-\rho$, because the system has exponential service.
As a result,
\begin{align*}
    P(Y = H \mid Q=0) \le \frac{1}{1-\rho}\frac{1/(\lambda_H - \mu)}{1/\alpha_H + 1/\alpha_L} = \frac{1}{1-\rho}\frac{1}{\lambda_H - \mu} \frac{\alpha_H \alpha_L}{\alpha_H + \alpha_L}. \quad
    \qedhere
\end{align*}
\end{proof}

Note that \eqref{eq:rep-slow-bound} scales linearly with $\alpha_H,\alpha_L$, so it converges to 0 in the $\alpha_H,\alpha_L \to 0$ limit.

\section{Simulation}
\label{sec:simulation}

\begin{figure}
    \centering
    \includegraphics[width=0.8\textwidth]{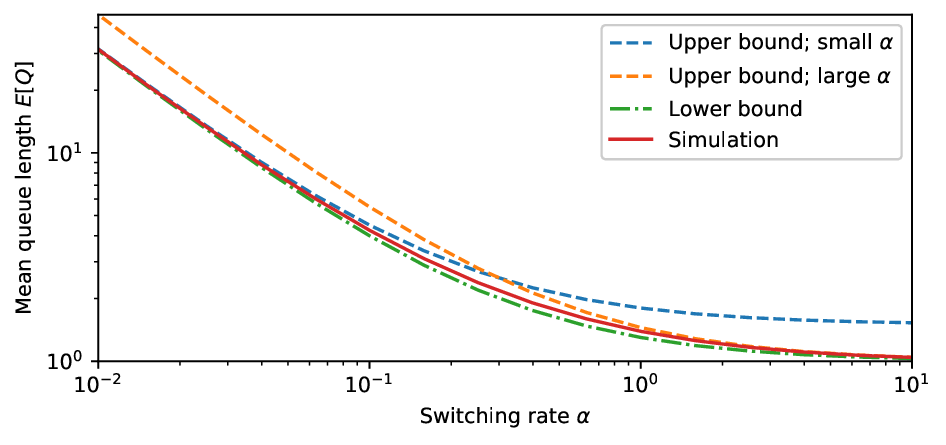}
    \caption{Simulation and bounds for mean queue length $\Ep[Q]$ in two-level system with intermittent overload,
    under varying switching rates $\alpha_H, \alpha_L$.
    Setting: $\lambda_H = 2, \lambda_L=0.2, \alpha_H = 5 \alpha_L, \mu=1, \rho=0.5$. Overall switching rate $\alpha := (\alpha_H + \alpha_L)/2$.
    Bounds given in \cref{cor:two-level-e-q-bounds}: The upper bound which is tight for small $\alpha$ is \eqref{eq:two-level-e-q-upper-slow}, for large $\alpha$ is \eqref{eq:two-level-e-q-upper-fast}, and the lower bound is \eqref{eq:two-level-e-q-lower}.
    Simulated $10^8$ arrivals.}
    \label{fig:bounds-alpha}
\end{figure}

We have characterized the mean queue length for the MAMS system, with bounds given in \cref{cor:e-q-bounds-mams}. We have proven even tighter bounds for the two-level arrival system in \cref{cor:two-level-e-q-bounds}.

In this section, we simulate mean queue length for the two-level arrival system and a more general MAMS system, and compare our simulations to our bounds, to validate our bounds and demonstrate their tightness.

In \cref{fig:bounds-alpha}, we simulate an intermittently overloaded two-level arrivals system under a variety of switching rates $\alpha_H, \alpha_L$.
In \cref{fig:bounds-alpha}, we have moderate overall load, because $\rho=0.5$,
but significant periods of intermittent overload, because $\lambda_H = 2 > \mu = 1$.

\cref{fig:bounds-alpha} shows that our bounds in \cref{cor:two-level-e-q-bounds}
are tight in both the fast-switching limit (large $\alpha$, short periods of time in each arrival state) and the slow-switching limit (small $\alpha$, long periods of time in each arrival state),
with our bounds tightly and accurately constraining the mean queue length in both limits.
Specifically, \eqref{eq:two-level-e-q-lower} and \eqref{eq:two-level-e-q-upper-fast} are tight in the fast-switching limit, while \eqref{eq:two-level-e-q-lower} and \eqref{eq:two-level-e-q-upper-slow} are tight in the slow-switching limit.
Moreover, our bounds are reasonably tight at all switching rates $\alpha$, even outside the asymptotic regime, with a maximum absolute error of $0.5$ jobs,
and a maximum relative error of 23\%.

\begin{figure}
    \centering
    \includegraphics[width=\textwidth]{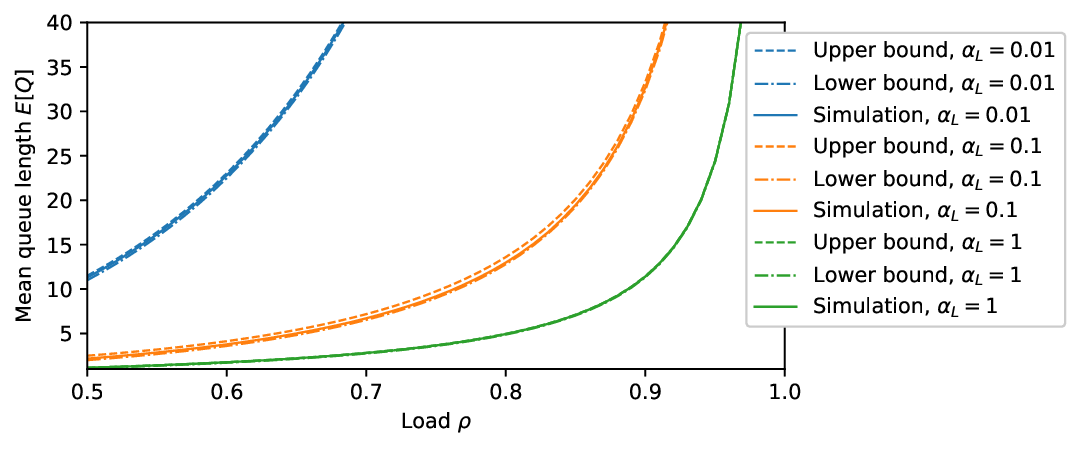}
    \caption{Simulation and bounds for mean queue length $\Ep[Q]$ in two-level system with intermittent overload,
    under varying load $\rho$, for three different values of $\alpha_L$.
    Setting: $\lambda_H = 4 \rho, \lambda_L=0.4 \rho, \alpha_H = 5\alpha_L, \mu=1$.
    Bounds given in \cref{cor:two-level-e-q-bounds}: Upper bound is minimum of \eqref{eq:two-level-e-q-upper-fast} and \eqref{eq:two-level-e-q-upper-slow}. Simulated $10^8$ arrivals.}
    \label{fig:bounds-rho-two}
    \includegraphics[width=0.8\textwidth]{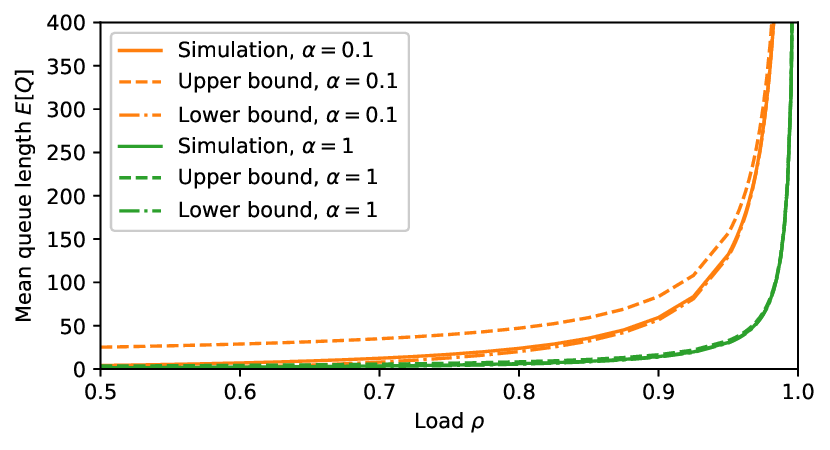}
    \caption{
    Setting: MAMS queue with three arrivals levels: $[0.3\rho, 2\rho, 2.2\rho]$, and three completions levels: [0.5, 1.0, 3.0].
    The system remains in each arrival state and each service state for time $Exp(\alpha)$, then moves cyclically to the next rate in the list, wrapping around. Bounds given in \cref{cor:e-q-bounds-mams}. Simulated $10^9$ arrivals.}
    \label{fig:bounds-rho-mams}
\end{figure}

In \cref{fig:bounds-rho-two}, we simulate a two-level system under a variety of loads $\rho$, and under three different pairs of switching rates $\alpha_H,\alpha_L$.
We compare the simulated mean response time against our bounds from \cref{cor:two-level-e-q-bounds}.
In every configuration of load $\rho$ and switching rates $\alpha_H,\alpha_L$ simulated in \cref{fig:bounds-rho-two}, the absolute gap between our upper and lower bounds is at most 1 job. The relative error between our bounds drops towards zero as the system moves towards heavy traffic ($\rho \to 1$), in each setting.

In \cref{fig:bounds-rho-mams}, we simulate a MAMS system with three levels of arrival rates and three levels of completion rates, under varying loads $\rho$.
The system cycles between three possible arrival rates and three possible completion rates.
Recall that as with all MAMS systems, the arrival chain and completions chain evolve independently.
The switching rate from each arrival and completion state to the next is a fixed rate $\alpha$.
In \cref{fig:bounds-rho-mams}, we show two switching rates $\alpha$.
We compare the simulated mean queue lengths $\Ep[Q]$ against our bounds from \cref{cor:e-q-bounds-mams}.
\cref{fig:bounds-rho-mams} illustrates that our bounds are tight in heavy traffic, even for the more complex MAMS system, confirming the heavy-traffic limit result in \cref{cor:heavy-traffic-e-q-mams}. 
Unlike the two-level arrival setting illustrated in \cref{fig:bounds-alpha,fig:bounds-rho-two}, our bounds are not as tight in the general MAMS setting in light traffic (i.e., for small $\rho$), especially for slower switching settings (low $\alpha$).
Proving more precise bounds on $\Ep[Q]$ for general MAMS systems in light traffic is left for future work. 

\section{Conclusion}

We analyze the Markovian arrivals Markovian service (MAMS) system,
using a somewhat novel framework based on the drift method and the concepts of relative arrivals and relative completions.
While the concepts of relative arrivals and relative completions are a rediscovery of existing techniques \cite{falin_heavy_1999}, their application via the drift method \cite{eryilmaz_asymptotically_2012} is novel to this work.
Using this framework, we capture the impact of the correlated arrivals and completions on mean queue length.
We derive the explicit bounds on mean queue length in the MAMS system,
with bounds that are tight in heavy traffic, matching prior results.

Moreover, in the important special case of the two-level arrivals system under intermittent overload, we prove significantly stronger bounds which are tight both in the fast-switching and slow-switching limits,
again matching prior results.

An important direction for future work is to analyze
MAMS systems with infinite arrival and/or completions chains.
Such systems do not appear to have been studied in prior work.
Our proof of the main theorem, \cref{thm:e-q-exact-mams}, holds in such systems,
assuming appropriate conditions on the arrivals and completions chains,
but it does not immediately imply a tight bound on mean queue length.

\section{Acknowledgements}

Isaac Grosof was supported by the 2024 Tennenbaum Postdoctoral Fellowship from the School of Industrial and Systems Engineering at Georgia Institute of Technology,
and by Air Force Office of Scientific Research Grant FA9550-24-1-0002.
Yige Hong was supported by National Science Foundation Grant NSF-ECCS-2145713.
Mor Harchol-Balter was supported by National Science Foundation Grants NSF-IIS-2322973, NSF-CMMI-1938909, and NSF-CMMI-2307008.

\bibliographystyle{plainnat}
\bibliography{refs}

\appendix
\section{Expectation over unused service}
\label{app:unused}

We now formally define the expectation $\Ep_U[\cdot]$, the \emph{expectation at moments of unused service}.
Unused service occurs at rate $\mu_{i_C}$ whenever $q=0$.

For any test function $f(q, i_A, i_C)$, we define its expectation at moments of unused service $\Ep_U[\cdot]$ as 
\begin{align*}
    \Ep_U[f(Q, \Yarr, \Ycomp)] & := \frac{\Ep[\mu_{\Ycomp} f(Q, \Yarr, \Ycomp)\indicator{Q=0}]}{\Ep[\mu_{\Ycomp}\indicator{Q=0}]} = \frac{\Ep[\mu_{\Ycomp} f(Q, \Yarr, \Ycomp)\indicator{Q=0}]}{\mu-\lambda}, 
\end{align*}
where the second equality holds because $\Ep[\mu_{\Ycomp} \indicator{Q=0}] = \mu - \lambda$ by \cref{lem:unused}.

In the special case of exponential service times, $\mu_{i_C} = \mu$  for any $i_C\in \compSet$, so
\begin{align*}
    \Ep_U[f(Q, \Yarr, \Ycomp)] &= \Ep[f(Q, \Yarr, \Ycomp) \mid Q=0] = \frac{\Ep[f(Q, \Yarr, \Ycomp)\indicator{Q=0}]}{1-\rho}.
\end{align*}

We will also write $\Ep_U[f(Y_C^{comp})]$, which is defined as the expectation of $f(i_C)$ with respect to the distribution of $\Ycomp$ right after unused service, i.e., the completions that happen during $Q=0$. Formally, 
\begin{align*}
    \Ep_U[f(Y_C^{comp})] & := \Ep_U\Bigg[\sum_{j_C\in\compSet} \frac{s_{\Ycomp, j_C, 1}}{\mu_{\Ycomp}}f(j_C)\Bigg] 
    = \frac{\Ep \left[\sum_{j_C\in\compSet} s_{\Ycomp, j_C, 1}f(j_C)\indicator{Q=0}\right]}{\Ep\left[\mu_{\Ycomp}\indicator{Q=0} \right]}.
\end{align*}
Note the similarity with the definition of $Y_C^{comp}$ in \cref{sec:completion-def-mams}.

We now characterize the rate of \emph{unused service} in the system, namely completion transitions
that occur when the queue is empty:
\begin{lemma}
    \label{lem:unused}
    The rate of unused service is $\mu - \lambda$: $\Ep_{i \sim \Ycomp}[\mu_i \indicator{Q=0}] = \mu - \lambda$.
\end{lemma}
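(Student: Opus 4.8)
The plan is to apply the drift lemma (\cref{lem:drift-lemma-mams}) to the simplest possible test function, $f(q, i_A, i_C) := q$, and to read the claim off the resulting rate-balance equation. Intuitively this just formalizes the fact that the rate at which jobs depart (which is $\mu$ minus the rate of unused completions) must equal the rate at which they arrive ($\lambda$) in steady state.

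First I would compute the drift $G\circ f$ via \cref{lem:g-f-formula-mams}. Since $f$ depends only on $q$, the arrival terms contribute $\sum_{j_A\in\arrSet,a} r_{i_A,j_A,a}\,a = \lambda_{i_A}$, and the completion terms contribute $\sum_{j_C\in\compSet,c} s_{i_C,j_C,c}\,(u-c)$ with $u=(c-q)^+$. When $q>0$ we have $u=0$, so this equals $-\sum_{j_C} s_{i_C,j_C,1} = -\mu_{i_C}$; when $q=0$ we have $u=c$, so $u-c=0$ and the completion contribution vanishes. Hence $G\circ f(q,i_A,i_C) = \lambda_{i_A} - \mu_{i_C}\indicator{q>0}$. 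The only point requiring a moment's care here is the bookkeeping of the $u=(c-q)^+$ term at $q=0$, which captures exactly that an unused completion leaves the queue length unchanged.

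Next I would take the stationary expectation. The function $f(q)=q$ grows linearly, hence polynomially, in $q$, so \cref{lem:polynomial-finite-expectation} justifies applying \cref{lem:drift-lemma-mams}, yielding $0 = \Ep[\lambda_{\Yarr}] - \Ep[\mu_{\Ycomp}\indicator{Q>0}]$. Using $\Ep[\lambda_{\Yarr}]=\lambda$ and $\Ep[\mu_{\Ycomp}]=\mu$ (these are the definitions of the long-run arrival and completion rates), together with $\indicator{Q>0}=1-\indicator{Q=0}$, this rearranges to $\Ep[\mu_{\Ycomp}\indicator{Q=0}] = \mu-\lambda$, which is precisely $\Ep_{i\sim\Ycomp}[\mu_i\indicator{Q=0}] = \mu-\lambda$. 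The main (and really only) obstacle is the applicability of the drift lemma to the unbounded test function $q$, which requires $\Ep[Q]<\infty$, i.e.\ positive recurrence of the MAMS system; this holds in the intended regime $\rho<1$ and is covered by \cref{lem:polynomial-finite-expectation}. Everything else is a one-line computation.
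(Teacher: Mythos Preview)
Your proof is correct and is essentially the same flow-conservation argument the paper gives: in steady state, arrivals equal actual completions, so the rate of unused completions is $\mu-\lambda$. The paper states this informally in two lines; you formalize it by applying \cref{lem:drift-lemma-mams} to the test function $f=q$, which is a cleaner and more rigorous way to reach the same rate-balance equation using the paper's own machinery.
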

\begin{proof}
    The long-run arrival rate is $\lambda$.
    The long-run rate of completion transitions is $\mu$.
    Nonetheless, the long-run arrival and completion rate must be equal, because the system is stable.
    Thus the long-run completion rate is $\lambda$.
    The only way that a completion transition can not cause a completion is if the queue is empty,
    causing unused service.
    As a result, the rate of unused service must be $\mu - \lambda$.
\end{proof}

Note that by \cref{lem:unused}, in systems with exponential service such as the two-level arrivals system, $P(Q > 0) = \lambda/\mu = \rho$.

\section{Relative arrivals and completions}
\label{app:rel-arr-comp}

First, we verify that $\Delta_A$ and $\Delta_C$, the relative arrivals and relative completions functions,
are well-defined and finite. Note that $\Delta_A$ is the relative value function of a average-reward Markov Reward process
with Markov chain matching the arrival process and reward equal to the arrival rate $\lambda_i$. By \citet[Section~8.2.1]{puterman_markov_1994}, this relative value is well-defined and finite for any finite-state Markov chain. $\Delta_C$ is equivalent.

Next, we establish two systems of equations, equivalent to the Poisson equation for MDPs,
which can be used to calculate $\Delta_A(i)$ in each state $i\in\arrSet$ (or $\Delta_C(i)$ in each state $i\in \compSet$), up to an additive constant. 
Incorporating \cref{lem:mean-delta}, which states that $\Ep[\Delta_A(\Yarr)] = 0$ (or $\Ep[\Delta_C(\Yarr)] = 0$), uniquely characterizes $\Delta_A(i)$ (or $\Delta_C(i)$). 
% An equivalent system of equations characterizes $\Delta_C(i_C)$.
\begin{lemma}
    \label{lem:poisson-delta}
    For any arrivals chain, the relative arrivals function $\Delta_A$ and relative completions function $\Delta_C$ satisfy the following systems of equations:
    \begin{align}
    \label{eq:poisson-delta-A}
    \forall i \in \arrSet, \quad \Delta_A(i) &= \frac{\lambda_{i}-\lambda}{r_{i,\cdot,\cdot}} + \sum_{j\in\arrSet} \frac{r_{i,j,\cdot}}{r_{i,\cdot,\cdot}} \Delta_A(j), \quad
    \forall i \in \compSet, \quad \Delta_C(i) = \frac{\mu_{i}-\mu}{s_{i,\cdot,\cdot}} + \sum_{j\in\compSet} \frac{s_{i,j,\cdot}}{s_{i,\cdot,\cdot}} \Delta_C(j).
    \end{align}
\end{lemma}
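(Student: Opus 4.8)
The plan is to establish the identity for $\Delta_A$ by a first-transition (renewal) argument; the identity for $\Delta_C$ then follows verbatim with $s$, $\mu$, $\compSet$ in place of $r$, $\lambda$, $\arrSet$, since $C_i(t)$ is a functional of the completions chain alone and does not involve the queue. Throughout, write $r_i := r_{i,\cdot,\cdot}$ and $p_{ij} := r_{i,j,\cdot}/r_i$, and view the arrival chain as a continuous-time jump process: from state $i$ an ``event'' occurs after an $\mathrm{Exp}(r_i)$ delay, and it is of type $(j,a)$ with probability $r_{i,j,a}/r_i$, sending the chain to state $j$ (possibly $j=i$) while producing $a\in\{0,1\}$ arrivals. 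Note that $\sum_{j,a}(r_{i,j,a}/r_i)\,a = \lambda_i/r_i$ and $\sum_j p_{ij} = 1$, and that irreducibility forces $r_i>0$.

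The first step is to condition on the first event. Writing $g_i(t) := \Ep[A_i(t)]$ and conditioning on the time $s\le t$ and type $(j,a)$ of the first event (with the ``no event before $t$'' case contributing $0$ arrivals and the Markov property giving $a+A_j(t-s)$ arrivals afterwards), I would obtain the renewal equation
\[
  g_i(t) = \frac{\lambda_i}{r_i}\bigl(1-e^{-r_i t}\bigr) + \sum_{j\in\arrSet} p_{ij}\int_0^t r_i e^{-r_i s}\, g_j(t-s)\,ds .
\]
The second step is to subtract the linearly-growing term. Setting $h_j(u) := \Ep[A_j(u)] - \lambda u$ and substituting $g_j(u) = h_j(u)+\lambda u$, one uses $\int_0^t r_i e^{-r_i s}(t-s)\,ds = t - \tfrac{1}{r_i}\bigl(1-e^{-r_i t}\bigr)$ together with $\sum_j p_{ij}=1$; the $\lambda t$ contributions cancel exactly, leaving
\[
  h_i(t) = \frac{\lambda_i - \lambda}{r_i}\bigl(1-e^{-r_i t}\bigr) + \sum_{j\in\arrSet} p_{ij}\int_0^t r_i e^{-r_i s}\, h_j(t-s)\,ds .
\]

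The third step is to pass to the limit $t\to\infty$. The finiteness of $\Delta_A$ established above (via \citet[Section~8.2.1]{puterman_markov_1994}) already guarantees that each $h_j$ is bounded and $h_j(u)\to\Delta_A(j)$; hence $\bigl(1-e^{-r_i t}\bigr)\to 1$, and by dominated convergence (the integrand is dominated by $\sup_u|h_j(u)|\cdot r_i e^{-r_i s}$, which is integrable, and converges pointwise in $s$ to $\Delta_A(j)\,r_i e^{-r_i s}$) we get $\int_0^t r_i e^{-r_i s} h_j(t-s)\,ds \to \Delta_A(j)$. Taking limits in the displayed equation yields $\Delta_A(i) = \tfrac{\lambda_i-\lambda}{r_i} + \sum_{j\in\arrSet} p_{ij}\,\Delta_A(j)$, which is exactly \eqref{eq:poisson-delta-A}.

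The main obstacle is not conceptual but bookkeeping: one must track the transient $e^{-r_i t}$ factors carefully so that the divergent $\lambda t$ pieces cancel exactly, and then justify the interchange of limit and convolution integral — the latter being routine once we invoke the already-established finiteness of $\Delta_A$, which is the only external input the argument needs. As a slicker alternative, one could instead identify $\Delta_A$ with the relative value function of the average-reward continuous-time reward process on $\arrSet$ whose off-diagonal transition rates are $r_{i,j,\cdot}$ (self-transitions being folded into the reward rate) and whose reward rate in state $i$ is $\lambda_i$, quote its Poisson equation $\sum_{j}r_{i,j,\cdot}\bigl(\Delta_A(j)-\Delta_A(i)\bigr) = \lambda-\lambda_i$ (e.g.\ by uniformizing the discrete-time statement in \citet{puterman_markov_1994}), and divide through by $r_{i,\cdot,\cdot}$; this yields \eqref{eq:poisson-delta-A} immediately, and the $\Delta_C$ case identically.
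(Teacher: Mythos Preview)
Your proposal is correct and takes essentially the same approach as the paper: both argue by conditioning on the first transition of the arrival chain, observing that the expected contribution to $\Ep[A_i(t)]-\lambda t$ up through that event is $(\lambda_i-\lambda)/r_{i,\cdot,\cdot}$ and that the post-transition contribution converges to $\Ep[\Delta_A(\Yarr^{next})]=\sum_j p_{ij}\Delta_A(j)$. Your version is simply more careful---you write the exact finite-$t$ renewal equation, track the transient $e^{-r_i t}$ factors explicitly, and justify the $t\to\infty$ limit via dominated convergence---whereas the paper carries out the same split at an informal level without making the limit interchange explicit; your ``slicker alternative'' via the Poisson equation is also sound and indeed is how the paper motivates the lemma in the surrounding text.
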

\begin{proof}
    Let us start with \cref{def:rel-arr-comp}: $\Delta_A(i) := \lim_{t\to\infty} \Ep[A_{i}(t)] - \lambda t$,
where $A_i(t)$ denotes the number of arrivals by time $t$,
given that the arrivals chain starts from state $i$.

Let us split up $\Ep[A_{i}(t)] - \lambda t$, the expression inside the limit,
into two time periods:
Until the arrival chain next undergoes a transition,
and after that time.
The transition occurs after $Exp(r_{i,\cdot,\cdot})$ time,
or $\frac{1}{r_{i,\cdot,\cdot}}$ expected time.
Prior to the transition,  $\Ep[A_i(t)] - \lambda t$ accrues at rate $\lambda_{i} - \lambda$.
We use $\Yarr^{next}$ to denote the state after the transition.
Then we have the expression: 
\begin{align*}
    \Delta_A(i) &= \frac{\lambda_{i} - \lambda}{r_{i,\cdot,\cdot}} + \lim_{t\to\infty} \Ep[A_{\Yarr^{next}}(t)] - \lambda t 
    =  \frac{\lambda_{i} - \lambda}{r_{i,\cdot,\cdot}} + \Ep[\Delta_A(\Yarr^{next})]. 
\end{align*}
Finally, note that $P(\Yarr^{next} = j) = \frac{r_{i,j,\cdot}}{r_{i,\cdot,\cdot}}$,
allowing us to derive the equations for $\Delta_A(i)$ in \eqref{eq:poisson-delta-A}.
The equations for $\Delta_C(i)$ can be proved similarly. 
\end{proof}

As a corollary of \cref{lem:poisson-delta}, 
we characterize the change in the value of $\Delta_A(i)$ when a transition occurs in the arrival chain,
and the change in the value of $\Delta_C(i)$ when a transition occurs in the completion chain. 
% This result is analogous to Corollary~D.1 of \cite{grosof_marc_reset_2023}. 

\begin{corollary}\label{cor:effect-arr-comp-on-rel}
In the MAMS system, the effect of arrival (or completion) on relative arrival (or relative completion) satisfies: 
% Effect of arrival-chain transition on relative arrivals:
%\marginpar{Mor: Needs explanation.}
\begin{align}
    % \label{eq:effect-arr-on-rel-arr-by-def}
    % &\sum_{j\in\arrSet, a} r_{i, j, a} (\Delta_A(j) - \Delta_A(i)) = \lambda -\sum_{j \in \arrSet} r_{i, j, 1} \\
    \label{eq:effect-arr-on-rel-arr}
    \forall i \in \arrSet, \quad &\sum_{j\in\arrSet, a\in\{0,1\}} r_{i, j, a} (\Delta_A(j) - \Delta_A(i) + a) = \lambda, \\
    % &\sum_{j \in \compSet, c} s_{i, j, c} (\Delta_C(j) - \Delta_C(i)) = \mu - \sum_{j\in\compSet} s_{i,j,1} \\
    \label{eq:effect-comp-on-rel-comp}
    \forall i \in \compSet, \quad &\sum_{j \in \compSet, c\in\{0,1\}} s_{i, j, c} (\Delta_C(j) - \Delta_C(i) + c) = \mu. 
\end{align}
Equivalently, for any $i\in \arrSet$, $G\circ \Delta_A(i) = \lambda - \lambda_i$; for any $i \in \compSet$, $G\circ \Delta_C(i) =  \mu - \mu_i$. 
\end{corollary}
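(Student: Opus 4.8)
The plan is to derive both displayed systems of equations directly from the Poisson equations in \cref{lem:poisson-delta}, and then read off the equivalent generator statements by substituting $\Delta_A$ and $\Delta_C$ into the generator formula of \cref{lem:g-f-formula-mams}.

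First I would take the arrivals Poisson equation $\Delta_A(i) = \frac{\lambda_i - \lambda}{r_{i,\cdot,\cdot}} + \sum_{j\in\arrSet} \frac{r_{i,j,\cdot}}{r_{i,\cdot,\cdot}}\Delta_A(j)$ and clear the denominator to get $r_{i,\cdot,\cdot}\Delta_A(i) = \lambda_i - \lambda + \sum_j r_{i,j,\cdot}\Delta_A(j)$, hence $\sum_j r_{i,j,\cdot}(\Delta_A(j) - \Delta_A(i)) = \lambda - \lambda_i$. Splitting $r_{i,j,\cdot} = r_{i,j,0} + r_{i,j,1}$ and using the identity $\sum_{j,a} r_{i,j,a}\,a = \sum_j r_{i,j,1} = r_{i,\cdot,1} = \lambda_i$ (which includes the self-transition term $j=i$, since self-transitions are allowed to carry an arrival), I add $\lambda_i = \sum_{j,a} r_{i,j,a}\,a$ to both sides to obtain $\sum_{j,a} r_{i,j,a}(\Delta_A(j) - \Delta_A(i) + a) = \lambda$, which is \eqref{eq:effect-arr-on-rel-arr}. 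The completions identity \eqref{eq:effect-comp-on-rel-comp} follows by the identical manipulation applied to the completions Poisson equation, using $\mu_i = s_{i,\cdot,1}$.

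For the equivalent generator statements, I would apply \cref{lem:g-f-formula-mams} to the test function $f(q, i_A, i_C) = \Delta_A(i_A)$, which depends on neither $q$ nor $i_C$. Each completions term $f(q-c+u, i_A, j_C) - f(q, i_A, i_C) = \Delta_A(i_A) - \Delta_A(i_A)$ vanishes, and each arrivals term collapses to $r_{i_A, j_A, a}(\Delta_A(j_A) - \Delta_A(i_A))$, so $G\circ\Delta_A(i_A) = \sum_{j_A,a} r_{i_A,j_A,a}(\Delta_A(j_A) - \Delta_A(i_A))$. Subtracting $\sum_{j_A,a} r_{i_A,j_A,a}\,a = \lambda_{i_A}$ from \eqref{eq:effect-arr-on-rel-arr} shows this sum equals $\lambda - \lambda_{i_A}$. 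The same argument applied to $f(q, i_A, i_C) = \Delta_C(i_C)$, where now the arrivals terms vanish and each completions term collapses to $s_{i_C, j_C, c}(\Delta_C(j_C) - \Delta_C(i_C))$ irrespective of $q$ and the unused service $u = (c-q)^+$, gives $G\circ\Delta_C(i_C) = \mu - \mu_{i_C}$.

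I do not expect a real obstacle here; the manipulation is elementary. The only points needing care are the bookkeeping of the arrival/completion indicators $a$ and $c$ — in particular that self-transitions contribute to $\lambda_i$ through the $a$ term even though the $\Delta_A(j)-\Delta_A(i)$ difference for such terms is zero — and the observation that, because $\Delta_C$ does not depend on the queue length $q$, the unused-service correction $u$ in \cref{lem:g-f-formula-mams} has no effect when computing $G\circ\Delta_C$.
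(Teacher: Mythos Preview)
Your proposal is correct and follows essentially the same approach as the paper: multiply the Poisson equation from \cref{lem:poisson-delta} through by $r_{i,\cdot,\cdot}$, rearrange, and expand $r_{i,j,\cdot}$ and $\lambda_i$ to obtain \eqref{eq:effect-arr-on-rel-arr}, then apply \cref{lem:g-f-formula-mams} to $\Delta_A$ (and symmetrically $\Delta_C$) to read off the generator statements. Your explicit bookkeeping of the $a$-indicator and the unused-service term is a welcome addition but does not change the argument.
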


\begin{proof}%[Proof for \Cref{cor:effect-arr-comp-on-rel} as a corollary of \Cref{lem:poisson-delta}]
    By \eqref{eq:poisson-delta-A} in \Cref{lem:poisson-delta}, for any $i\in \arrSet$, $\Delta_A(i) = \frac{\lambda_{i}-\lambda}{r_{i,\cdot,\cdot}} + \sum_{j\in\arrSet} \frac{r_{i,j,\cdot}}{r_{i,\cdot,\cdot}} \Delta_A(j)$. 
    Multiplying both sides by $r_{i,\cdot,\cdot}$, rearranging the terms, and expanding $r_{i, \cdot, \cdot}, r_{i,j, \cdot},$ and $\lambda_i$, we get \eqref{eq:effect-arr-on-rel-arr}.
    % \begin{equation}
    %     \label{eq:poisson-delta-intermediate-1}
    %     \sum_{j\in\arrSet} r_{i,j,\cdot} \Delta_A(j) -r_{i,\cdot,\cdot} \Delta_A(i) + \lambda_i = \lambda.
    % \end{equation}
    % Recall that $r_{i,\cdot,\cdot} = \sum_{j\in\arrSet,a\in\{0,1\}}r_{i,j,a}$, $r_{i,j,\cdot} = \sum_{a\in\{0,1\}} r_{i,j,a}$, and $\lambda_i = \sum_{j\in\arrSet,a\in\{0,1\}} r_{i,j,a}a$.
    % We rewrite the LHS of \eqref{eq:poisson-delta-intermediate-1} to get 
    % \begin{equation}\tag{\ref{eq:effect-arr-on-rel-arr}}
    %     \sum_{j\in\arrSet,a\in\{0,1\}}r_{i,j,a} \big( \Delta_A(j) -\Delta_A(i) + a) = \lambda.
    % \end{equation}
    \eqref{eq:effect-comp-on-rel-comp} follows similarly.

    It can be convenient to express \eqref{eq:effect-arr-on-rel-arr} and \eqref{eq:effect-comp-on-rel-comp}in terms of $G \circ \Delta_A$ and $G \circ \Delta_C$.
    Note that by \cref{lem:g-f-formula-mams},
        $G \circ \Delta_A(i) = \sum_{j\in\arrSet} r_{i, j, \cdot} (\Delta_A(j) - \Delta_A(i)).$
    Thus, $G\circ \Delta_A(i) = \lambda - \lambda_i$ and $G\circ \Delta_C(i) = \mu - \mu_i$.
\end{proof}

\section{Generator and drift method}\label{app:lemma-generator-drift}
\begin{replemma}{lem:drift-lemma-mams}
    Let $f$ be a test function for which $\Ep[f(Q, \Yarr, \Ycomp)] < \infty.$
    Then 
    \begin{align*}
        \Ep[G \circ f(Q, \Yarr, \Ycomp)] = 0.
    \end{align*}
\end{replemma}
\begin{proof}
    We apply Proposition 3 of \cite{glynn_bounding_2008}.
    To prove \cref{lem:drift-lemma-mams} using that proposition,
    we only need to verify that the total transition rate of the system is uniformly bounded.

    For each state $(q, i_A, i_C)$, the total transition rate of the MAMS system is the total transition rate of the arrival process and the completion process, i.e., $\sum_{j_A\in\arrSet, a} r_{i_A, j_A, a} + \sum_{j_C\in\compSet, c} s_{i_C, j_C, c}$, which is finite for any $i_A\in\arrSet$ and $i_C\in\compSet$. Because $\arrSet$ and $\compSet$ are both finite sets, the total transition rates of the MAMS system are uniformly bounded.
\end{proof}

Next, we will give sufficient condition for proving that $\Ep[f(Q, \Yarr, \Ycomp)] < \infty,$ allowing us to apply \cref{lem:drift-lemma-mams} to the test functions we use in this paper.
%
% \begin{lemma}\label{lem:polynomial-finite-expectation}
%     Let $f$ be a real-valued function of the state of the MAMS system, $(q, i_A, i_C)$. 
%     Suppose $f(q, i_A, i_C)$ grows at a polynomial rate in $q$.
%     Suppose that $\lambda < \mu$ in the MAMS system.
%     Then $f$ has finite expectation in steady state:
%     $\Ep[f(Q, \Yarr, \Ycomp)] < \infty$.
% \end{lemma}
%
To prove our sufficient condition, we need \cite[Theorem~2.3]{hajek_hitting_1982}, restated below using the state representation of the MAMS system. 
\begin{lemma}
\label{lem:lyapunov-moment-bound}
    Consider a Markov chain with uniformly bounded total transition rates, and a Lyapunov function $V$ that satisfies the conditions below for any state $(q, i_A, i_C)$: $V(q, i_A, i_C) \geq 0$; there exists $b, \gamma > 0$ such that whenever $V(q, i_A, i_C) \geq b$, 
    \begin{equation}\label{eq:check-negative-drift}
        G\circ V(q, i_A, i_C) \leq -\gamma;
    \end{equation}
    there exists $d > 0$ such that for any initial state, $(q, i_A, i_C)$, and any state after one step of transition, $(q', j_A, j_C)$, 
    \begin{equation}\label{eq:check-bounded-jump}
        V(q',j_A, j_C) - V(q,i_A,i_C) \leq d. 
    \end{equation}
    Then there exists $\theta > 0$ such that $\Ep[e^{\theta V(Q, \Yarr, \Ycomp)}] < \infty.$
\end{lemma}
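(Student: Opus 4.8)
The plan is to deduce this from the discrete-time drift theorem of Hajek \cite{hajek_hitting_1982} after uniformizing the continuous-time chain. Since the total transition rates are uniformly bounded, fix $\Lambda<\infty$ that dominates the total transition rate of every state, and let $\{X_k\}_{k\ge 0}$ be the uniformized discrete-time chain: from a state it moves to a state reachable by one transition of the CTMC with probability equal to that transition's rate divided by $\Lambda$, and otherwise stays put. The uniformized chain has the same stationary distribution as the original MAMS chain, so it is enough to bound $\Ep[e^{\theta V(X)}]$ for the stationary state $X$ of $\{X_k\}$.

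First I would verify the two hypotheses of Hajek's theorem for the adapted sequence $Z_k:=V(X_k)$. For the drift: the one-step expected increment of $V$ at a state $(q,i_A,i_C)$ equals $\tfrac{1}{\Lambda}\,G\circ V(q,i_A,i_C)$, since the self-loop contributes nothing; so hypothesis \eqref{eq:check-negative-drift} gives $\Ep[Z_{k+1}-Z_k\mid\mathcal F_k]\le -\gamma/\Lambda$ on the event $\{Z_k\ge b\}$, which is the negative drift outside a bounded sublevel set that Hajek requires. For the jumps: each step of $\{X_k\}$ either fixes the state (increment $0\le d$) or performs a single CTMC transition, so by \eqref{eq:check-bounded-jump} the increments satisfy $Z_{k+1}-Z_k\le d$ almost surely, which is exactly Hajek's bounded-upward-increment condition (an almost-sure upper bound on the increments of the nonnegative function $V$ supplies the exponential-moment control on upward moves that the theorem uses).

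Hajek's Theorem~2.3 then yields constants $\theta>0$, $\rho_0\in(0,1)$, $C<\infty$ with $\Ep[e^{\theta Z_k}\mid X_0=x_0]\le \rho_0^{k}e^{\theta V(x_0)}+C$ for every $k$ and every fixed $x_0$, hence $\sup_k\Ep[e^{\theta V(X_k)}]<\infty$; moreover the negative-drift condition together with finiteness of the sublevel set $\{V<b\}$ makes the chain positive recurrent, so $X_k$ converges in distribution to the stationary state $X$ and Fatou's lemma gives $\Ep[e^{\theta V(X)}]\le\liminf_k\Ep[e^{\theta V(X_k)}]<\infty$, which is the claim. The main obstacle is purely bookkeeping: Hajek's result is phrased for a discrete-time adapted real sequence with an exponential-moment bound on increments, whereas our object is a continuous-time chain with a one-sided hard bound on the jumps of $V$; the uniformization step removes the time-discreteness mismatch (one must take $\Lambda$ large enough that all self-loop probabilities are nonnegative, and confirm the shared stationary distribution), and the one-sided increment observation removes the rest.
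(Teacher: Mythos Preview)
The paper does not prove this lemma: it introduces it as a direct restatement of \cite[Theorem~2.3]{hajek_hitting_1982} specialized to the MAMS state space, and then uses it as a black box inside \cref{lem:polynomial-finite-expectation}. Your uniformization argument is exactly the standard way to pass from Hajek's discrete-time statement to the continuous-time version the paper quotes, so in substance you are carrying out precisely what the paper's citation leaves implicit.

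One small gap in your bookkeeping: to go from $\sup_k\Ep[e^{\theta V(X_k)}\mid X_0=x_0]<\infty$ to the stationary bound via Fatou, you invoke positive recurrence by asserting that the sublevel set $\{V<b\}$ is finite. The lemma as stated gives you no such finiteness; nothing in the hypotheses forces $V$ to grow in $q$. In the paper's actual application the specific Lyapunov function $V(q,i_A,i_C)=(q+\Delta_A(i_A)-\Delta_C(i_C))^+$ does have finite sublevel sets, and positive recurrence of the MAMS chain is separately guaranteed by the standing assumption $\lambda<\mu$ together with irreducibility of the arrival and completion chains. For the lemma at the generality stated, you should simply take existence of the stationary distribution $(Q,\Yarr,\Ycomp)$ as part of the setup (the conclusion references it) and cite convergence in distribution of the irreducible uniformized chain, rather than trying to re-derive positive recurrence from an unstated finiteness claim.
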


We also need a characterization of the drift of a generic test function in the MAMS system, \Cref{lem:g-f-formula-mams}.
\begin{replemma}{lem:g-f-formula-mams}
    For any real-valued function $f$ of the state of the MAMS system, 
    \begin{align*}
         G \circ f(q, i_A, i_C) &= \sum_{j_A\in\arrSet, a\in[0,1]} r_{i_A, j_A, a} \left(f(q+a, j_A, i_C) -  f(q, i_A, i_C)\right) \\
         &+ \sum_{j_C\in\compSet, c \in[0,1]} s_{i_C, j_C, c } \left(f(q-c+u, i_A, j_C) - f(q, i_A, i_C)\right),
    \end{align*}
    where $u=(c-q)^+$ denotes the unused service. 
\end{replemma}

\begin{proof}
    Because the generator is the stochastic equivalent of the derivative operator, to get an expression for $G \circ f(q, i_A, i_C)$, we examine all types of transitions in the MAMS system.  
    
    For each $j_A\in\arrSet$ and $a\in \{0,1\}$, with rate $r_{i_A, j_A, a}$, the state of the arrival process changes to $j_A$ accompanied by $a$ arrivals. This type of transition changes the state to $(q+a, j_A, i_C)$, so it contributes $r_{i_A, j_A, a} \left(f(q+a, j_A, i_C) - f(q, i_A, i_C)\right)$
    to $G \circ f(q, i_A, i_C)$. 
    
    Similarly, for each $i_C \in \compSet$ and $c\in \{0,1\}$, with rate $s_{i_C, j_C, c}$, the state of the completion process changes to $j_C$ accompanied by $c$ completions. This transition changes the state to $((q-c)^+, i_A, j_C)$, so it contributes $s_{i_C, j_C, c} \left(f((q-a)^+, i_A, j_C) - f(q, i_A, i_C)\right)$
    to $G\circ f(q, i_A, i_C)$. Let $u = (c-q)^+$, then $(q-a)^+ = q - a + u$. 

    We sum up all of the above terms to obtain the formula for $G\circ f(q, i_A, i_C)$.
\end{proof}

We are now ready to state and prove our sufficient condition for $\Ep[f(Q, \Yarr, \Ycomp)] < \infty$ using \Cref{lem:lyapunov-moment-bound} and \cref{lem:g-f-formula-mams}. \cref{lem:polynomial-finite-expectation} is analogous to Lemma~A.2 of \cite{grosof_marc_reset_2023}.  

 \begin{lemma}\label{lem:polynomial-finite-expectation}
    Let $f$ be a real-valued function of the state of the MAMS system, $(q, i_A, i_C)$. 
    Suppose $f(q, i_A, i_C)$ grows at a polynomial rate in $q$.
    Suppose that $\lambda < \mu$ in the MAMS system.
    Then $f$ has finite expectation in steady state:
    $\Ep[f(Q, \Yarr, \Ycomp)] < \infty$.
\end{lemma}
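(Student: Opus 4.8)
The plan is to apply the Lyapunov moment bound \cref{lem:lyapunov-moment-bound} with a Lyapunov function equal to the ``smoothed queue length'' $q + \Delta_A(i_A) - \Delta_C(i_C)$, shifted to be nonnegative. This will show that $Q$ has a finite exponential moment, after which the polynomially-growing test function $f$ is handled by a routine domination argument. This mirrors Lemma~A.2 of \cite{grosof_marc_reset_2023}.

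First I would set $V(q, i_A, i_C) := q + \Delta_A(i_A) - \Delta_C(i_C) + K$, where $K$ is a constant chosen so that $V \ge 0$ on every state; this is possible because $q \ge 0$ and, since $\arrSet$ and $\compSet$ are finite, $\Delta_A, \Delta_C$ take only finitely many (finite) values, so $\Delta_A(i_A) - \Delta_C(i_C) \ge \Delta_A^{\min} - \Delta_C^{\max}$. Next I would verify the three hypotheses of \cref{lem:lyapunov-moment-bound}. Nonnegativity holds by construction. For the bounded-jump condition \eqref{eq:check-bounded-jump}, a single transition of the MAMS system changes $q$ by at most $1$ and moves $(i_A, i_C)$ to some other pair of chain states, so the change in $V$ is bounded in absolute value by $d := 1 + (\Delta_A^{\max} - \Delta_A^{\min}) + (\Delta_C^{\max} - \Delta_C^{\min})$. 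The uniform bound on total transition rates required by \cref{lem:lyapunov-moment-bound} holds because $\arrSet, \compSet$ are finite, exactly as in the proof of \cref{lem:drift-lemma-mams}.

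For the negative-drift condition \eqref{eq:check-negative-drift}, I would compute $G \circ V$ on states with $q \ge 1$. Applying \cref{lem:g-f-formula-mams} to the test function $q$ and noting that the unused-service term $u = (c-q)^+$ vanishes whenever $q \ge 1$, one gets $G \circ q = \lambda_{i_A} - \mu_{i_C}$ there; combined with $G \circ \Delta_A(i_A) = \lambda - \lambda_{i_A}$ and $G \circ \Delta_C(i_C) = \mu - \mu_{i_C}$ from \cref{cor:effect-arr-comp-on-rel} and linearity of $G$, this yields $G \circ V(q, i_A, i_C) = \lambda - \mu$ for all states with $q \ge 1$. Since $V(q, i_A, i_C) \le q + \Delta_A^{\max} - \Delta_C^{\min} + K$, choosing any $b$ strictly larger than $\Delta_A^{\max} - \Delta_C^{\min} + K$ forces $q \ge 1$ whenever $V \ge b$, so \eqref{eq:check-negative-drift} holds with $\gamma := \mu - \lambda > 0$ (using the hypothesis $\lambda < \mu$). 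Then \cref{lem:lyapunov-moment-bound} gives $\theta > 0$ with $\Ep[e^{\theta V(Q, \Yarr, \Ycomp)}] < \infty$, and since $V(q, i_A, i_C) \ge q + (\Delta_A^{\min} - \Delta_C^{\max} + K)$ we conclude $\Ep[e^{\theta Q}] < \infty$. Finally, because $f$ grows polynomially in $q$ and $\arrSet, \compSet$ are finite, there is a polynomial $P$ with $|f(q, i_A, i_C)| \le P(q)$ for all states, and $P(q) = o(e^{\theta q})$ gives $\Ep[|f(Q, \Yarr, \Ycomp)|] \le \Ep[P(Q)] < \infty$.

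The main thing to be careful about is the drift computation at the boundary $q = 0$: there $G \circ V$ need not be negative, since an unused completion does not decrease $q$. This is precisely why the threshold $b$ must be chosen large enough that the sublevel set $\{V < b\}$ already contains every state with $q = 0$; with that choice the boundary is irrelevant. The rest of the argument is bookkeeping, so I do not expect any serious obstacle beyond stating the constants cleanly.
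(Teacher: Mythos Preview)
Your proposal is correct and follows essentially the same approach as the paper: apply \cref{lem:lyapunov-moment-bound} to the smoothed queue length $q + \Delta_A(i_A) - \Delta_C(i_C)$, use \cref{cor:effect-arr-comp-on-rel} to get constant drift $\lambda - \mu$ away from the boundary, and then dominate the polynomial $f$ by the resulting exponential moment. The only cosmetic difference is that the paper enforces nonnegativity via the positive part $V = (q + \Delta_A - \Delta_C)^+$ rather than your additive shift $+K$; your version is arguably cleaner since it avoids having to check that the positive-part truncation is inactive after a single transition.
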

\begin{proof}
    We will verify conditions of \Cref{lem:lyapunov-moment-bound}. 
    Define $V(q, i_A, i_C) = (q + \Delta_A(i_A) - \Delta_C(i_C))^+.$
    Let $\Delta_{\max} = \max\{\max_{i_A\in\arrSet} \abs{\Delta_A(i_A)}, \max_{i_C\in\compSet} \abs{\Delta_C(i_C)}\}$. Letting $b = 1 + 4\Delta_{\max}$ and $\gamma = \mu - \lambda$. Consider any state $(q,i_A, i_C)$ such that $V(q,i_A, i_C) \geq b$.
    \begin{itemize}
        \item By the definition of $V(q,i_A, i_C)$, we have $q \geq 1 + 2\Delta_{\max}$, so $V(q, i_A, i_C) = q + \Delta_A(i_A) - \Delta_C(i_C)$. 
        \item Moreover, for any state $(q', j_A, j_C)$ reachable after one step of transition from $(q,i_A, i_C)$, we must have $V(q', j_A, j_C) \geq q - 1 + \Delta_A(j_A) - \Delta_C(j_C) \geq 0$, which implies that $V(q', j_A, j_C) = q' + \Delta_A(j_A) - \Delta_C(j_C)$. 
    \end{itemize}
    By \Cref{lem:g-f-formula-mams}, 
    \begin{align*}
        G\circ V(q, i_A, i_C) &= \sum_{j_A\in \arrSet, a} r_{i_A, j_A, a} (a +\Delta_A(j_A) - \Delta_A(i_A)) + \sum_{j_C\in\compSet, c} s_{i_C, j_C, c} (-c + u - \Delta_C(j_C) + \Delta_C(i_C)). 
    \end{align*}
    Because $q \geq 1 + 2\Delta_{\max}$ and $c \leq 1$, we have $u= (c-q)^+ = 0$. Therefore, \Cref{cor:effect-arr-comp-on-rel} implies that $G\circ V(q, i_A, i_C) = \lambda - \mu < 0.$
    This proves the condition \eqref{eq:check-negative-drift}. It is not hard to see that the other condition, \eqref{eq:check-bounded-jump}, holds with $d = 1 + 4\Delta_{\max}$. Then Theorem 2.3 of \cite{hajek_hitting_1982} implies the existence of $\theta > 0$ such that $\Ep\left[e^{\theta V(q, i_A, i_C)}\right] < \infty.$
    Observe that $e^{\theta V(q, i_A, i_C)}$ grows exponentially fast in $q$. Therefore, for any $f$ that grows at a polynomial rate in $q$,
    \begin{align*}
        f(q,i_A, i_C) &= O(e^{\theta V(q, i_A, i_C)}), \quad \Ep[f(Q, \Yarr, \Ycomp)] < \infty.\qedhere
    \end{align*}
\end{proof}

\end{document}